\newtheorem{theorem}{Theorem}
\newtheorem{corollary}[theorem]{Corollary}
\newtheorem{lemma}[theorem]{Lemma}
\newtheorem{proposition}[theorem]{Proposition}
\newtheorem{definition}[theorem]{Definition}
\newtheorem{claim}[theorem]{Claim}
\theoremstyle{remark}
\newtheorem{remark}[theorem]{Remark}
\newcommand{\rank}{\mathrm{rank}}
\newcommand{\nnz}{\mathrm{nnz}}
\newcommand{\size}{\mathrm{size}}
\newcommand{\bo}{\boldsymbol 1}
\newcommand{\rig}{\mathcal R}
\newcommand{\ceil}[1]{\left\lceil #1 \right\rceil}
\newcommand{\floor}[1]{\left\lfloor #1 \right\rfloor}
\newcommand{\ip}[2]{\langle #1, #2 \rangle}
\newcommand{\C}{\ensuremath{\mathbb{C}}}
\newcommand{\F}{\ensuremath{\mathbb{F}}}
\newcommand{\K}{\ensuremath{\mathbb{K}}}
\newcommand{\R}{\ensuremath{\mathbb{R}}}
\newcommand{\Q}{\ensuremath{\mathbb{Q}}}
\newcommand{\Z}{\ensuremath{\mathbb{Z}}}
\newcommand{\CF}{\ensuremath{\mathcal{F}}}
\newcommand{\CG}{\ensuremath{\mathcal{G}}}
\newcommand{\CH}{\ensuremath{\mathcal{H}}}
\newcommand{\CU}{\ensuremath{\mathcal{U}}}
\newcommand{\CV}{\ensuremath{\mathcal{V}}}
\newcommand{\CC}{\ensuremath{\mathcal{C}}}
\newcommand{\CD}{\ensuremath{\mathcal{D}}}
\newcommand{\CT}{\ensuremath{\mathcal{T}}}
\newcommand{\CE}{\ensuremath{\mathcal{E}}}
\newcommand{\TX}{\ensuremath{\widetilde{X}}}
\newcommand{\TY}{\ensuremath{\widetilde{Y}}}
\newcommand{\TE}{\ensuremath{\widetilde{E}}}
\newcommand{\TF}{\ensuremath{\widetilde{F}}}
\newcommand{\MS}{\ensuremath{\mathrm{MS}}}
\newcommand{\eps}{\varepsilon}
\newcommand{\Ex}{\mathbb{E}}
\numberwithin{equation}{section}
\numberwithin{theorem}{section}
\title{Smaller Low-Depth Circuits for Kronecker Powers}
\author{Josh Alman\thanks{Department of Computer Science at Columbia University. josh@cs.columbia.edu.}
\and Yunfeng Guan\thanks{Department of Computer Science at Columbia University. yunfeng.guan@columbia.edu.}
\and Ashwin Padaki\thanks{Department of Computer Science at Columbia University. ap4025@columbia.edu.}}
\begin{document}

\maketitle

\begin{abstract}
A linear circuit for computing an $N \times N$ matrix $M$ is a circuit with $N$ inputs corresponding to the entries of a vector $x$ and $N$ outputs corresponding to the entries of the transformed vector $Mx$, and where each gate computes a linear combination of its inputs. Each gate may have unbounded fan-in, and the size of the circuit is the number of wires. This model captures most known algorithms for computing linear transforms, and (in the constant-depth or `synchronous' settings) is equivalent to factoring $M$ as the product of sparse matrices.

We give new, smaller constructions of constant-depth linear circuits for computing any matrix which is the Kronecker power of a fixed matrix. A standard argument (e.g., the mixed product property of Kronecker products, or a generalization of the Fast Walsh-Hadamard transform) shows that any such $N \times N$ matrix has a depth-2 circuit of size $O(N^{1.5})$. We improve on this for all such matrices, and especially for some such matrices of particular interest:
    \begin{itemize}
        \item For any integer $q > 1$ and any matrix which is the Kronecker power of a fixed $q \times q$ matrix, we construct a depth-2 circuit of size $O(N^{1.5 - a_q})$, where $a_q > 0$ is a positive constant depending only on $q$. No bound beating size $O(N^{1.5})$ was previously known for any $q>2$.
        \item For the case $q=2$, i.e., for any matrix which is the Kronecker power of a fixed $2 \times 2$ matrix, we construct a depth-2 circuit of size $O(N^{1.446})$, improving the prior best size $O(N^{1.493})$ [Alman, 2021].
        \item For the Walsh-Hadamard transform, we construct a depth-2 circuit of size $O(N^{1.443})$, improving the prior best size $O(N^{1.476})$ [Alman, 2021].
        \item For the disjointness matrix (the communication matrix of set disjointness, or equivalently, the matrix for the linear transform that evaluates a multilinear polynomial on all $0/1$ inputs), we construct a depth-2 circuit of size $O(N^{1.258})$, improving the prior best size $O(N^{1.272})$ [Jukna and Sergeev, 2013].
    \end{itemize}
    
Our constructions also generalize to improving the standard construction for any depth $\leq O(\log N)$. Our main technical tool is an improved way to convert a nontrivial circuit for any matrix into a circuit for its Kronecker powers. Our new bounds provably could not be achieved using the approaches of prior work.
\end{abstract}

\thispagestyle{empty}
\newpage
\setcounter{page}{1}
\section{Introduction}

\paragraph{Linear circuits}
In a \emph{linear circuit} over a field $\F$, each input is a number from $\F$, and each gate computes an $\F$-linear combination of its inputs. One can imagine labeling each edge of the circuit with a coefficient $\lambda \in \F$, so that if a gate $G$ has incoming edges from gates $G_1, \ldots, G_k$ labeled with $\lambda_1, \ldots, \lambda_k$, respectively, then $G$ computes $\lambda_1 G_1 + \cdots + \lambda_k G_k$. Hence, a linear circuit computes a linear transformation of its input. Such a circuit with $N_1$ outputs and $N_2$ inputs corresponds to a matrix $M \in \F^{N_1 \times N_2}$ such that, on input $x \in \F^{N_2}$, it outputs $Mx$. The size of the circuit is the total number of wires, and the depth is the length of the longest path from an input to an output.

The linear circuit model naturally captures essentially all known algorithms for computing linear transforms such as the Fast Walsh-Hadamard transform and the Fast Fourier transform (see e.g.,\cite{lokam2009complexity,burgisser2013algebraic}). For instance, as we will discuss more shortly, over any field, the $N \times N$ Walsh-Hadamard transform is computed by a linear circuit of size $O(N \log N)$ and depth $O(\log N)$, and for any constant $d$, a linear circuit of size $O(N^{1 + 1/d})$ and depth $d$ by using the Fast Walsh-Hadamard transform. Furthermore, any arithmetic circuit for computing a linear transform can be converted into a linear circuit without increasing the depth or size of the circuit~\cite[{Theorem~13.1}]{burgisser2013algebraic}.

A special kind of linear circuit is a \emph{synchronous} linear circuit, in which all paths from an input to an output have the same length. Equivalently, if we say that the depth of a gate $G$ is the length of the longest path from an input to $G$ in the circuit, then in a synchronous linear circuit, all the inputs to any gate must have the same depth. Any linear circuit of depth $d$ can be converted to an equivalent synchronous linear circuit by copying each gate fewer than $d$ times, and hence multiplying the size of the circuit by at most $d$. We especially focus in this paper on constant-depth circuits, where this is a negligible size increase.

A synchronous linear circuit for matrix $M \in \F^{N_1 \times N_2}$ can also equivalently be viewed as a factorization of $M$ as the product of sparse matrices. More precisely, $M$ has a synchronous linear circuit of size $s$ and depth $d$ if and only if it can be written as $M = M_d \times M_{d-1} \times \cdots \times M_1$, where the $M_i$ are matrices over $\F$ of any dimensions (such that the multiplications are all valid) and such that $\nnz(M_1) + \cdots + \nnz(M_d) = s$ (where $\nnz(M_i)$ denotes the number of nonzero entries in $M_i$). The matrix $M_i$ computes all the gates at depth $i$ in the circuit, and $\nnz(M_i)$ counts the number of wires used by those gates.
In particular, any matrix $M$ is computed by a circuit of depth $1$ and size $\nnz(M)$, and one typically aims to use more depth to design smaller circuits than this.

\paragraph{Kronecker products}
For two matrices $A \in \F^{N_1 \times N_2}$ and $B \in \F^{N_3 \times N_4}$, their \emph{Kronecker product} is the matrix $A \otimes B \in \F^{N_1 N_3 \times N_2 N_4}$ given by, for $i_1 \in [N_1]$, $i_2 \in [N_2]$, $i_3 \in [N_3]$ and $i_4 \in [N_4]$,
$$A \otimes B[i_1 + N_1 i_3, i_2 + N_2 i_4]  = A[i_1, i_2] \cdot B[i_3, i_4].$$ For a nonnegative integer $n$, the \emph{Kronecker power} $A^{\otimes n} = A \otimes \cdots \otimes A \in \F^{N_1^n \times N_2^n}$ is the Kronecker product of $n$ copies of $A$. For example, the Walsh-Hadamard transform $H_n \in \{-1,1\}^{2^n \times 2^n}$ is defined by $H_n = H_1^{\otimes n}$ where $$H_1 = \begin{pmatrix}
1 & 1\\
1 & -1
\end{pmatrix}.$$ Equivalently, $H_n$ is recursively-defined as $$H_n = \begin{pmatrix}
H_{n-1} & H_{n-1}\\
H_{n-1} & -H_{n-1}
\end{pmatrix}.$$

In this paper, we focus on linear circuits for computing matrices which can be expressed as a Kronecker power of a fixed matrix. $H_n$ is perhaps the most prominent example; it is used throughout algorithm design, complexity theory, signal processing, quantum computing (it is a basic quantum logic gate), and more (see e.g.,~\cite{macwilliams1977theory,beauchamp1984applications,wong2022walsh}). Another ubiquitous example is the \emph{disjointness matrix} $R_n = R_1^{\otimes n}$ where $$R_1 = \begin{pmatrix}
1 & 1\\
1 & 0
\end{pmatrix}.$$ $R_n$ is the communication matrix for the set disjointness function, and its arithmetic complexity is pertinent to the fastest known algorithm for the orthogonal vectors problem~\cite{abboud2014more} (see also~\cite{AW17}). More generally, Kronecker power matrices arise naturally in many settings. For example, for any $\omega \in \F$, $\omega \neq 1$, defining $$M_\omega = \begin{pmatrix}
1 & 1\\
1 & \omega
\end{pmatrix},$$ we see that multiplying $M_\omega^{\otimes n}$ times a vector $v \in \F^{2^n}$ is equivalent to evaluating the multilinear polynomial, whose coefficients are given by $v$, on all $2^n$ points in $\{1,\omega\}^n$. Again, $H_n$ and $R_n$ are prevalent examples where $\omega = -1$ and $\omega = 0$, respectively. Kronecker powers of larger matrices can similarly capture more general multivariate polynomial evaluation problems.

\paragraph{Mixed product property}

A powerful tool for designing linear circuits for Kronecker product matrices is the \emph{mixed product property} of the Kronecker product, which says that for matrices $A,B,C,D$ (for which the dimensions in the matrix products below work out), we have $$(A \times B) \otimes (C \times D) = (A \otimes C) \times (B \otimes D),$$ where $\times$ denotes the usual matrix product. It follows, for instance, that setting $N = 2^n$, the $N \times N$ matrix $H_n$ (which has $\nnz(H_n) = N^2$) has a depth-2 circuit of size only $O(N^{1.5})$ via $$H_n = H_{n/2} \otimes H_{n/2} = (H_{n/2} \times I_{\sqrt{N}}) \otimes (I_{\sqrt{N}} \times H_{n/2}) =  (H_{n/2} \otimes I_{\sqrt{N}}) \times (I_{\sqrt{N}} \otimes H_{n/2}),$$
where $I_{\sqrt{N}}$ denotes the $\sqrt{N} \times \sqrt{N}$ identity matrix. Indeed, this expresses $H_n$ as the product of two matrices, each of which has number of nonzero entries $$\nnz(H_{n/2} \otimes I_{\sqrt{N}}) = \nnz(H_{n/2}) \cdot \nnz(I_{\sqrt{N}}) = N \cdot \sqrt{N} = N^{1.5}.$$ Following the same idea, $H_n$ has a circuit of depth $d$ and size $O(d \cdot N^{1 + 1/d})$ (whenever $d$ is a constant or divides $n$), and setting $d = n$, we get the Fast Walsh-Hadamard transform, a circuit of depth $O(\log N)$ and size $O(N \log N)$. The same approach works for any Kronecker power matrix.

\paragraph{Barriers for prior approaches}
Until recently, this construction was believed to be optimal, and there were two popular conjectures toward proving this. First, $H_n$ was conjectured to have high `matrix rigidity'. The rank-$r$ rigidity of a matrix $M$, denoted $\mathcal{R}_M(r)$, is the minimum number of entries of $M$ one must change to make its rank at most $r$. (See subsection \ref{subsec:prelim-rig} for formal definition.) Valiant~\cite{valiant1977graph} showed that if the $N \times N$ matrix $M$ has $\mathcal{R}_M(N / \log \log N) \geq N^{1 + \eps}$, then $M$ doesn't have a circuit of depth $O(\log N)$ and size $O(N)$. However, Alman and Williams \cite{AW17} showed that the matrix rigidity of $H_n$ is surprisingly low, and that this approach cannot work to prove circuit lower bounds for $H_n$. Alman~\cite{Alm21} extended this result to show that no Kronecker power of a fixed-sized matrix is rigid enough to prove a lower bound in this way, and Kivva~\cite{kivva2021improved} proved even stronger rigidity upper bounds for these matrices. (The notion of matrix rigidity is very related to low-depth linear circuits; it will reappear in our results below.)

Second, Morgenstern~\cite{morgenstern1973note} and Chazelle~\cite{chazelle1994spectral} began a line of work on \emph{bounded-coefficient circuits} for computing $H_n$ and other important linear transforms (e.g., \cite{lokam2001spectral,nisan1996lower,pudlak2000note,burgisser2004lower,raz2002complexity}). They noticed that the Fast Walsh-Hadamard transform, Fast Fourier transform, and other similar algorithms all give linear circuits over $\C$ which have coefficients (the numbers $\lambda$ along the wires) whose magnitudes are bounded by a constant. For instance, the coefficients in the Fast Walsh-Hadamard transform are all from $\{1, -1\}$, and so have magnitude $1$. It was natural to assume that the best circuits for $H_n$, whose entries are all from $\{1, -1\}$, would also use bounded coefficients, and this line of work proved strong, often tight lower bounds on the sizes of such circuits. For example, Pudl{\'a}k~\cite{pudlak2000note} proved that bounded-coefficient circuits for $H_n$ of depth $2$ require size $\Omega(N^{1.5})$, meaning the Fast Walsh-Hadamard transform is tight in this depth-2, bounded-coefficient setting. Pudl{\'a}k also proved that synchronous linear circuits with bounded coefficients of any depth for $H_n$ require size $\Omega(N \log N)$. 

However, Alman~\cite{Alm21} recently used \emph{unbounded} coefficients to design smaller circuits for $H_n$ and other Kronecker products which circumvent these bounded-coefficient lower bounds, including a depth-2 circuit of size $O(N^{1.48})$, and an $O(N \log N)$ size circuit with a slightly improved constant factor compared to the Fast Walsh-Hadamard transform. Especially in light of the many applications of $H_n$ and other Kronecker power matrices, one naturally wonders how much smaller these circuits could get. 

We focus here especially on depth-2 circuits, since it is known that using smaller depth-2 circuits for Kronecker power matrices, one can directly construct smaller circuits of any depth $d \leq O(\log N)$:
\begin{lemma}[{Follows from the mixed product property; also implicit in \cite[{Theorem~3.4}]{Alm21}}]\label{lem:increasedepth}
Suppose that for some $c \geq 0$, the $q \times q$ matrix $M$ has a depth-2 synchronous circuit of size $q^{1 + c/2}$. Then, for any positive even integers $d, n$ such that $d$ divides $n$, the $N \times N$ matrix $M^{\otimes n}$ ($N = q^n$) has a depth-$d$ synchronous circuit of size $O(d \cdot N^{1 + c/d})$.
\end{lemma}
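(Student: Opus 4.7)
The plan is to combine the hypothesized depth-2 factorization of $M$ with the standard ``FFT-like'' mixed product decomposition, grouping the $n$ copies of $M$ into $d/2$ chunks of $k := 2n/d$ copies each.

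First, I would fix a synchronous factorization $M = M_2 \cdot M_1$ realizing the hypothesis, with $\nnz(M_1) + \nnz(M_2) \leq q^{1+c/2}$. By the mixed product property, $M^{\otimes k} = (M_2 M_1)^{\otimes k} = M_2^{\otimes k} \cdot M_1^{\otimes k}$, so $M^{\otimes k}$ has a depth-2 synchronous circuit of size $\nnz(M_1)^k + \nnz(M_2)^k \leq (\nnz(M_1) + \nnz(M_2))^k \leq q^{k(1+c/2)}$, using the elementary bound $a^k + b^k \leq (a+b)^k$ for $a,b \geq 0$. Then, writing $n = (d/2) \cdot k$ and applying mixed product again, I would express $M^{\otimes n}$ as the product of $d/2$ matrices
\[
M^{\otimes n} \;=\; \prod_{i=0}^{d/2 - 1} \bigl( I_{q^{ik}} \otimes M^{\otimes k} \otimes I_{q^{(d/2-1-i)k}} \bigr),
\]
the standard depth-$d/2$ synchronous circuit with one layer per chunk. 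Substituting the depth-2 circuit for $M^{\otimes k}$ into each of the $d/2$ middle factors above yields a depth-$d$ synchronous circuit for $M^{\otimes n}$.

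The remaining task is size bookkeeping. Each of the $d/2$ blocks contributes two matrices of the form $I_{q^{ik}} \otimes M_j^{\otimes k} \otimes I_{q^{(d/2-1-i)k}}$ for $j \in \{1,2\}$, whose numbers of nonzero entries sum to $q^{n-k} \bigl( \nnz(M_1)^k + \nnz(M_2)^k \bigr) \leq q^{n-k} \cdot q^{k(1+c/2)} = q^{n + ck/2} = N^{1+c/d}$, using $k = 2n/d$ in the last equality. Multiplying by $d/2$ blocks gives total size at most $(d/2) \cdot N^{1+c/d} = O(d \cdot N^{1+c/d})$. The only mild care needed is to verify synchronicity (every intra-block intermediate vector has dimension $q^{n-k} r^k$, where $r$ is the inner dimension of $M_1, M_2$, and every inter-block vector has dimension $q^n$, so every root-to-leaf path has length exactly $d$) and to use the ``$d$ even'' and ``$d \mid n$'' hypotheses to ensure $k$ is a positive integer. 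There is no real obstacle: the only content of the proof is the choice $k = 2n/d$, which precisely balances the per-chunk cost against the number of chunks.
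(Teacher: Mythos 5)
Your proof is correct and matches the intended route. The paper does not spell out a proof of Lemma~\ref{lem:increasedepth}, asserting only that it ``follows from the mixed product property,'' and your argument is precisely that instantiation: take a factorization $M = M_2 M_1$, expand to $M^{\otimes k} = M_2^{\otimes k} M_1^{\otimes k}$ with $k = 2n/d$, embed each such block into the standard $d/2$-layer Kronecker decomposition of $M^{\otimes n}$, and tally sizes. The bookkeeping is right: each of the $d$ layers is a Kronecker product of identities with $M_j^{\otimes k}$, contributing $q^{n-k}\nnz(M_j)^k$ nonzeros, and $\nnz(M_1)^k + \nnz(M_2)^k \leq (\nnz(M_1)+\nnz(M_2))^k \leq q^{k(1+c/2)}$ yields total size at most $(d/2)\,q^{n-k}\,q^{k(1+c/2)} = (d/2)\,N^{1+c/d}$. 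The synchronicity and divisibility checks you note are the only remaining details, and they go through exactly as you say.
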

\noindent In fact, to our knowledge, all known nontrivial linear circuits for Kronecker powers with depth $d > 2$ could be achieved by giving a depth 2 construction and using Lemma~\ref{lem:increasedepth}. (A version of Lemma~\ref{lem:increasedepth} also holds for odd $d$ with a slight increase in the size.)

Unfortunately, Alman's approach, which makes use of the rank-1 rigidity of $H_n$, provably cannot give a smaller depth-2 circuit for $H_n$ than size $O(N^{1.47})$; we prove this in Subsection~\ref{subsec:comparison} below. In short, the approach of Alman makes use of circuit constructions (via rank-1 rigidity upper bounds) for a fixed matrix $M$ in order to design smaller circuits for all Kronecker powers $M^{\otimes n}$ of $M$. We are able to prove a rigidity lower bound, showing that there are not better constructions to use with their methods than the one they already use.
The only other approach we are aware of for designing small low-depth circuits for Kronecker products is an approach based on analyzing a simple branching tree process by Jukna and Sergeev~\cite{JS13}, which seems specific to sparse matrices like $R_n$ (which has $\nnz(R_n) = 3^n$). As we discuss more below, there are fundamental difficulties to using this approach to design smaller circuits for $R_n$ as well. 

\subsection{Results}

In this paper, we nonetheless design smaller circuits for $H_n$, $R_n$, and any Kronecker power of a fixed matrix. We get around the aforementioned obstacles by giving a \emph{new} approach for designing circuits for these matrices based on fixed constructions. At a high level, whereas Alman's approach converted any nontrivial circuit for $M$ into a nontrivial circuit for $M^{\otimes n}$, our new approach can only make use of sufficiently `imbalanced' circuits for $M$, but in exchange, it results in substantially smaller circuits for $M^{\otimes n}$.

\paragraph{Main Result}

For any constant $q$ and matrix $M \in \F^{q \times q}$, our main result converts a given small `imbalanced' depth-2 synchronous circuits for $M$ into small depth-2 synchronous circuits for Kronecker powers $M^{\otimes n}$. Suppose the depth-2 circuit for $M$ has $J$ gates in the middle layer. We can thus represent it as \begin{align}\label{eq:fixeddecomp}M = \sum_{j=1}^J U_j \times V_j^T,\end{align} for vectors $U_j, V_j \in \F^q$ which represent the inputs and outputs, respectively, of the $j$th gate. Assuming this decomposition is `imbalanced' (which we define in a moment), our main result gives:
\begin{theorem}\label{thm:mainintro}
If $M \in \F^{q \times q}$ is symmetric and has a decomposition (\ref{eq:fixeddecomp}) which is imbalanced (see Definition~\ref{def:imba} below), then the Kronecker power $M^{\otimes n} \in \F^{q^n \times q^n}$ has a depth-2 synchronous linear circuit of size $$\left( \sum_{j=1}^J \sqrt{\nnz(U_j) \cdot \nnz(V_j)} \right)^{n + o(n)}.$$
\end{theorem}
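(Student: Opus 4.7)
The plan is to build the depth-2 circuit for $M^{\otimes n}$ by combining the naive Kronecker-power decomposition with the mixed product property and the symmetry $M = M^T$, then exploiting the imbalanced structure by grouping terms according to symbol type. Throughout write $u_j = \nnz(U_j)$ and $v_j = \nnz(V_j)$.

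As a warm-up, expanding coordinatewise and applying mixed product gives
$$M^{\otimes n} = \sum_{\vec j \in [J]^n} \bigg(\bigotimes_{i=1}^n U_{j_i}\bigg)\bigg(\bigotimes_{i=1}^n V_{j_i}\bigg)^T,$$
a depth-2 circuit with $J^n$ middle gates and size $(\sum_j u_j)^n + (\sum_j v_j)^n$. Because $M = M^T$ the expansion $M = \sum_j V_j U_j^T$ is equally valid, so on each tensor coordinate we can independently pick the ``$UV^T$'' or ``$VU^T$'' form; a uniform balanced choice over the $n$ coordinates reduces the size to roughly $((\sum_j u_j)(\sum_j v_j))^{n/2}$ by Cauchy--Schwarz. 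This per-coordinate trick already attains $\sqrt{(\sum_j u_j)(\sum_j v_j)}$ per Kronecker factor, but falls strictly short of $\sum_j \sqrt{u_j v_j}$ whenever the ratios $u_j/v_j$ vary with $j$---exactly the regime that Definition~\ref{def:imba} targets.

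To close the gap, I would group the sum over $\vec j$ by type $t = (t_1, \dots, t_J)$ with $\sum_j t_j = n$, writing $M^{\otimes n} = \sum_t S_t$ where $S_t$ is the sum of the $\binom{n}{t_1,\dots,t_J}$ outer products whose $\vec j$ has $t_j$ occurrences of each symbol $j$. The central step is to realize, for each type $t$, a depth-2 circuit for $S_t$ of size $\binom{n}{t_1,\dots,t_J} \prod_j (\sqrt{u_j v_j})^{t_j}$ up to a $\poly(n)$ overhead. The idea is to use the imbalanced structure together with $M = M^T$ to pair, across the $\binom{n}{t_1,\dots,t_J}$ terms of $S_t$, roughly half of the positions where each symbol $j$ occurs with the ``$UV^T$'' form and the other half with the ``$VU^T$'' form, so that the aggregated left and right factors each carry $\nnz \approx \binom{n}{t_1,\dots,t_J} \prod_j (\sqrt{u_j v_j})^{t_j}$. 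Constructing such a pairing that still sums to $S_t$ exactly is the main technical obstacle: it requires an algebraic identity enabled jointly by the symmetry of $M$ and Definition~\ref{def:imba}, without which the swapped sum is some distinct matrix rather than $S_t$.

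Summing the per-type bounds via the multinomial identity
$$\sum_{t\,:\,\sum_j t_j = n} \binom{n}{t_1,\dots,t_J} \prod_j (\sqrt{u_j v_j})^{t_j} = \bigg(\sum_{j=1}^J \sqrt{u_j v_j}\bigg)^n,$$
and absorbing the $\poly(n)$, combinatorial, and Cauchy--Schwarz-slack overheads (which are subexponential in $n$ since $q$ and $J$ are fixed) into the $o(n)$ exponent, yields the claimed depth-2 synchronous linear circuit for $M^{\otimes n}$ of size $(\sum_j \sqrt{u_j v_j})^{n + o(n)}$.
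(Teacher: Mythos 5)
The central missing idea is a \emph{causality} constraint on the flips that your sketch never confronts. To recover $M^{\otimes n}$ exactly, the choice at tensor position $i$ of whether to expand $M$ there as $\sum_j U_j V_j^T$ or as $\sum_j V_j U_j^T$ must be made \emph{before} summing over $j_i$; concretely, in a valid construction it can depend only on a prefix $j_1,\dots,j_{i-1}$ of the choices made so far, which is exactly what lets the induction $M^{\otimes n}=M^{\otimes(n-1)}\otimes M$ close. It cannot depend on $j_i$ itself, nor on the other coordinates in a non-prefix way. Already at $n=2$ with $J=2$: if you flip position $1$ iff $j_2=2$ and flip position $2$ iff $j_1=2$, the resulting sum equals $M^{\otimes 2}+\Delta\otimes\Delta$ with $\Delta=V_2U_2^T-U_2V_2^T$, which is generically nonzero. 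Your per-tuple greedy balancing and your per-type ``pairing'' both need to see $j_i$ (or the full type) before deciding the flip at position $i$, so the ``algebraic identity'' you hope for does not exist, and neither the symmetry of $M$ nor Definition~\ref{def:imba} can manufacture it.

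Once the flips are causal, the online balancing can drift---at step $k$ you commit to a flip before seeing $j_k$, and an unlucky draw pushes the imbalance the wrong way---and this drift is what the paper's proof is actually built around. The paper grows the decomposition of $M^{\otimes n}$ one tensor factor at a time, choosing at step $k$, based only on the sign of the accumulated $\log(\nnz(A)/\nnz(B))$, one of two ``soft'' updates (from $M=\sum_j U_jV_j^T$ or its transpose) during an initial stage, and one of two ``hard'' fallback updates (from the trivial factorizations $I\cdot M$ or $M\cdot I$) once a term's imbalance first exceeds a threshold $\Gamma_k$. The size bound then reduces to a Chernoff concentration estimate for the random walk tracking that imbalance; the condition $\beta>\alpha_2-\alpha_1$ in Definition~\ref{def:imba} is precisely what makes the probability of first crossing at step $k$ decay like $e^{-\beta(n-k)}$, fast enough to beat the size penalty $e^{(\alpha_2-\alpha_1)(n-k)}$ incurred by the hard-balancing fallback. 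For one-sided decompositions ($\nnz(U_j)\le\nnz(V_j)$ for all $j$) the online choice never drifts and no fallback or imbalance hypothesis is needed, but the general statement you are trying to prove genuinely requires the causality constraint, the hard-balancing steps, and the random-walk concentration, none of which appears in the per-type multinomial sketch.
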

Theorem~\ref{thm:mainintro} also holds in certain cases when $M$ is not symmetric. One important case is when the decomposition (\ref{eq:fixeddecomp}) comes from a rigidity upper bound; these decompositions are `one-sided', a stronger notion than `imbalanced' meaning that $\nnz(U_j) \leq \nnz(V_j)$ for all $j$. See Remark~\ref{rem:asym} below for more details. That said, most matrices of interest like $H_n$ and $R_n$ are symmetric, and we need not worry about this detail.

Thus, as long as we can find a decomposition (\ref{eq:fixeddecomp}) of $M \in \F^{q \times q}$ (or a Kronecker power of $M$) with $$\sum_{j=1}^J \sqrt{\nnz(U_j) \cdot \nnz(V_j)} < q^{1.5},$$ we get a nontrivial depth-2 circuit for powers $M^{\otimes n}$. 
Note that it is straightforward to achieve $q^{1.5}$, for instance, by picking $J = q$, $U_j$ to be the $j$th column of $M$, and $V_j$ to be the vector with a $1$ in entry $j$, and a $0$ in all other entries. We will show soon how to beat $q^{1.5}$ for (Kronecker powers of) \emph{any} fixed matrix $M$, and beat it by quite a bit for matrices of interest like $H_n$ and $R_n$.
This will give smaller depth-2 circuits for these Kronecker power matrices and, by Lemma \ref{lem:increasedepth},
it will also give a smaller depth-$d$ $(d \ge 3)$ circuit.

\paragraph{Imbalanced decomposition}

We now define and give intuition for what it means for the decomposition (\ref{eq:fixeddecomp}) to be imbalanced. Define first the quantities
\[\alpha_1 = \ln (\sum_{j=1}^J \sqrt{\nnz(U_j) \cdot \nnz(V_j)}), \quad
\alpha_2 = \ln (\sqrt{\nnz(M) \cdot q}).\]
These measure the sizes of circuits for $M^{\otimes n}$: Our Theorem~\ref{thm:mainintro} gives a depth-2 circuit of size $\exp(\alpha_1)^{n + o(n)}$, whereas just using the mixed product property as described above gives a depth-2 circuit of size $\exp(\alpha_2)^{n + o(n)}$. Hence, $\alpha_2 - \alpha_1$ measures how much our new circuit improves on the mixed product property construction.

Next define the quantity
\[E = \frac{\sum_{j=1}^J  \ln (\nnz(U_j)/ \nnz(V_j)) \cdot \sqrt{\nnz(U_j) \cdot \nnz(V_j)}}{\sum_{j=1}^J \sqrt{\nnz(U_j) \cdot \nnz(V_j)}}.\] 
The logarithm, $\ln (\nnz(U_j)/ \nnz(V_j))$, measures whether the $j$th gate of the circuit for $M$ has more input or output wires: it becomes more positive as $\nnz(U_j)$ gets bigger than $\nnz(V_j)$, and more negative otherwise. In other words, $\ln (\nnz(U_j)/ \nnz(V_j))$ measures how `imbalanced' the $j$th gate is. Then, $E$ is a weighted average of these `imbalances', weighted by how much each gate contributes to $\exp(\alpha_1)$. It is thus a measure of the `imbalance' of the whole circuit. Note that by possibly taking the transpose of (\ref{eq:fixeddecomp}), we may assume $E \leq 0$.

We then define
\[\beta = \frac {\ln (\nnz(M) / q)}{6\ln q} \cdot 
\min \left\{1, \frac{-4E}{E + \ln q}\right\}.\]
$\beta$ can be thought of as a rescaling of $E$ into the range of $\alpha_1$ and $\alpha_2$: if $E=0$ then $\beta = 0$, while as $E$ gets more negative, $\beta$ grows. (One can verify that any decomposition yielding $\alpha_1 < \alpha_2$ and $E < 0$ has $\beta > 0$.) Hence, $\beta$ is also larger as the decomposition (\ref{eq:fixeddecomp}) is more `imbalanced'.
Finally we can formally define:
\begin{definition} \label{def:imba}
We say the decomposition (\ref{eq:fixeddecomp}) is \emph{imbalanced} if $\beta > \alpha_2 - \alpha_1$.
\end{definition}

When designing a depth-2 circuit for a matrix, one typically aims for it to be `balanced', with the same number of wires at each of the two levels of the circuit. 
Interestingly, to design such a balanced circuit for $M^{\otimes n}$, we critically need the decomposition (\ref{eq:fixeddecomp}) to be \emph{imbalanced}. This is in contrast with prior work~\cite{Alm21} which had an intentional balancing step for any input decomposition before making use of it. At a high level, imbalanced decompositions are useful for `correcting' circuits which are imbalanced in the opposite direction, whereas balanced decompositions can't be used to make such corrections. See Section~\ref{sec:overview} below for more details.

\paragraph{Application: Kronecker power matrices}

We first show that for \emph{any} $q \times q$ matrix $M \in \F^{q \times q}$, one can take a big enough Kronecker power which has a nontrivial decomposition (\ref{eq:fixeddecomp}). Our decomposition comes from a rigidity upper bound for powers $M^{\otimes n}$ and hence Theorem~\ref{thm:mainintro} applies even when $M$ is not symmetric. 

\begin{theorem}
For every integer $q \geq 2$, there is a positive constant $a_q = \Omega\left( \frac{1}{q^2 \log q} \right) > 0$ such that: For any field $\F$, any $q \times q$ matrix $M \in \F^{q \times q}$, and any positive integer $n$, the $N \times N$ matrix $M^{\otimes n}$ for $N = q^n$ has a depth-2 synchronous circuit of size $O(N^{1.5 - a_q})$.
\end{theorem}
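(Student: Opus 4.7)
The plan is to apply Theorem~\ref{thm:mainintro} to $M^{\otimes k}$ for a suitable constant $k = k(q)$. Since $M^{\otimes n} = (M^{\otimes k})^{\otimes \lfloor n/k \rfloor} \otimes M^{\otimes (n \bmod k)}$, and the residual constant-size Kronecker factor contributes only an $O(1)$ blow-up in circuit size (absorbed into the $o(n)$ exponent of Theorem~\ref{thm:mainintro}), a one-sided decomposition (\ref{eq:fixeddecomp}) of $M^{\otimes k}$ with $\sum_{j=1}^J \sqrt{\nnz(U_j)\nnz(V_j)} \leq q^{k(1.5 - a_q)}$ will, via Theorem~\ref{thm:mainintro}, yield a depth-$2$ synchronous circuit for $M^{\otimes n}$ of size $q^{n(1.5 - a_q) + o(n)} = O(N^{1.5 - a_q})$, as required.

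To build the decomposition, I would invoke the rigidity upper bounds for Kronecker powers from Alman~\cite{Alm21} and Kivva~\cite{kivva2021improved}. For sufficiently large $k = k(q)$, these give $M^{\otimes k} = L + S$ with $L = \sum_j u_j v_j^T$ a sum of tensor-product outer products having explicitly controlled supports on each side, and $S$ a sparse error matrix. The crucial feature of these rigidity-based constructions, flagged in Remark~\ref{rem:asym}, is that they can be made \emph{one-sided}: one arranges $\nnz(u_j) \leq \nnz(v_j)$ for every rank-one term, while each nonzero entry of $S$ contributes a term with $\nnz(U_j) = \nnz(V_j) = 1$. Since one-sided is stronger than imbalanced in the sense of Definition~\ref{def:imba}, Theorem~\ref{thm:mainintro} applies even though $M$ need not be symmetric.

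The quantitative step is to verify $\sum_j \sqrt{\nnz(U_j)\nnz(V_j)} \leq q^{k(1.5 - a_q)}$ with $a_q = \Omega(1/(q^2 \log q))$. Because each $u_j$ and $v_j$ produced by the construction is a tensor product along the $k$ coordinates of $M^{\otimes k}$, the geometric mean $\sqrt{\nnz(u_j)\nnz(v_j)}$ factorizes coordinate-wise, and the one-coordinate savings from the Alman/Kivva rank-$1$ rigidity construction compound multiplicatively across the $k$ coordinates; the sparse residual is kept below $q^{1.5k}$ by construction. Tracing the $q$-dependence of the Alman/Kivva rigidity bound for $q \times q$ matrices through this calculation (roughly a per-coordinate saving proportional to $1/(q\log q)$ at the cost of taking $k = \poly(q)$) produces the claimed $a_q = \Omega(1/(q^2 \log q))$.

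The main obstacle, I anticipate, is the accounting that turns a rigidity bound of the standard form ($\rank(L) \leq r,\ \nnz(S) \leq s$) into an explicit decomposition in which the \emph{geometric-mean} quantity $\sum_j \sqrt{\nnz(U_j)\nnz(V_j)}$ is both small and one-sided. A black-box rank factorization $L = AB^T$ with $A$ and $B$ both dense would give $\sum_j \sqrt{\nnz(U_j)\nnz(V_j)} \approx r \cdot q^k$, which is far too large. The remedy is to open up the Alman/Kivva constructions and extract tensor-product factorizations whose two sides are genuinely imbalanced in support, so that AM-GM on $(u_j, v_j)$ works in our favor; the one-sided property and the sharp $1/(q^2 \log q)$ rate both fall out of tracking this imbalance coordinate by coordinate.
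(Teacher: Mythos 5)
Your overall scaffolding is right: take a large constant Kronecker power $M^{\otimes k}$ as the base matrix, extract a one-sided (hence imbalanced even without symmetry, cf.\ Remark~\ref{rem:asym}) decomposition of it from a rigidity upper bound, and then feed that into Theorem~\ref{thm:mainintro} / Corollary~\ref{cor:rigLC}. That is exactly the route the paper takes.

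However, there is a genuine gap in the middle step. You propose plugging in the rigidity upper bounds of Alman~\cite{Alm21} and Kivva~\cite{kivva2021improved}. Those results live in the Valiant regime: they show $\rig_{M^{\otimes k}}(r)$ is small for ranks $r$ on the order of $q^k/\log\log(q^k)$. But Corollary~\ref{cor:rigLC} yields a circuit of size roughly $O\bigl((q^k r + \sqrt{q^k \cdot \rig_{M^{\otimes k}}(r)})^{n/k}\bigr)$, so the first term $q^k r$ is already essentially $(q^k)^2$ when $r$ is that large, and you cannot hope to beat $(q^k)^{1.5}$. To get anywhere you need a rigidity bound at rank $r = (q^k)^s$ with $s < 0.5$, which none of the cited results supply. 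The technical content of this section of the paper is precisely a \emph{new} rigidity upper bound for Kronecker powers at low rank (Theorem~\ref{thm:rigGKron}): $\rig_{M^{\otimes k}}((q^k)^s) \leq O\bigl((q^k)^{2 - h(s,q)}\bigr)$ with $h(s,q) = \Omega\bigl(\tfrac{s^2}{\log^2(2/s)} \cdot \tfrac{1}{q^2 \log q}\bigr)$, proved via the polynomial method (interpolation polynomials and a monomial count), in the style of Alman--Williams~\cite{AW17} rather than Alman/Kivva. Setting $s = 0.4$ balances $1+s$ against $1.5 - h(s,q)/2$ and yields $a_q = h(0.4,q)/2 = \Omega(1/(q^2 \log q))$. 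Your proposal does not contain this ingredient, and without it the argument does not close.

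Secondary issue: your picture of the decomposition as ``tensor-product outer products whose one-coordinate savings compound multiplicatively across the $k$ coordinates'' does not match the construction. The low-rank approximant here comes from a single multivariate interpolation polynomial on $M^{\otimes k}$; its rank-one terms are monomials, not per-coordinate tensor factors, and the quantitative bookkeeping is done via binomial/entropy estimates on the degree and on the number of $(x,y)$ whose Hamming-weight profile escapes the interpolation window, not by multiplying per-coordinate savings.
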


Such a result was previously only known in the case $q=2$~\cite{Alm21}; no bound $a_q > 0$ was known even for $q=3$. As mentioned, we prove this by making use of a rigidity upper bound for $M^{\otimes n}$. The fact that these matrices are not rigid for high rank $N / \log\log N$ was recently proved by Alman~\cite{Alm21} and improved by Kivva~\cite{kivva2021improved}, based on ways to factor these matrices in terms of simpler matrices. Here we make use of a different rigidity upper bound for lower rank $N^{0.4}$, which follows more closely the proof technique of Alman and Williams~\cite{AW17} and makes use of the polynomial method. 

Notably, although the prior work~\cite{Alm21} also used rigidity upper bounds to construct circuits for Kronecker powers, the resulting circuit size degraded considerably when one used it in conjunction with a super-constant-rank rigidity upper bound. For this reason, the prior work was unable to give a nontrivial construction when $q \geq 3$, whereas our Theorem~\ref{thm:mainintro} is able to make use of these rigidity upper bounds for rank $N^{0.4}$.

\paragraph{Application: The case $q=2$} In the case $q=2$, Alman~\cite{Alm21} proved that $a_2 > 0.007$, i.e., that if $M \in \F^{2 \times 2}$, then $M^{\otimes n}$ has a depth-2 circuit of size $O(N^{1.493})$. We improve this to $a_q > 0.054$:

\begin{theorem} \label{thm:q2}
For any field $\F$, any $2 \times 2$ matrix $M \in \F^{2 \times 2}$, and any positive integer $n$, the $N \times N$ matrix $M^{\otimes n}$ for $N = 2^n$ has a depth-2 synchronous circuit of size $O(N^{1.446})$.
\end{theorem}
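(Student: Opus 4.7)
The plan is to apply Theorem~\ref{thm:mainintro} to $M^{\otimes k}$, viewed as a $q \times q$ matrix with $q = 2^k$, for a constant $k$ to be optimized. First I would reduce to the case $\rank(M) = 2$: if $\rank(M) \leq 1$ then $M^{\otimes n}$ has rank at most $1$ and admits a trivial depth-2 circuit of size $O(N)$. Henceforth assume $M$ is full rank.

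The key step is to produce a sufficiently imbalanced decomposition $M^{\otimes k} = \sum_{j=1}^{J} U_j V_j^T$. Following the polynomial method used by Alman--Williams~\cite{AW17} (and extended in the general-$q$ result of this paper), I would write an entry of $M^{\otimes k}$ indexed by $x, y \in \{0,1\}^k$ as $\prod_{i=1}^k (a + b x_i + c y_i + d x_i y_i)$, a multilinear polynomial of degree $2k$ in the bits of $x, y$. Splitting by whether the $x$-degree of each monomial exceeds a threshold $t$ gives $M^{\otimes k} = L + S$, where $L = \sum_{T \subseteq [k],\,|T| \leq t} u_T v_T^T$ with $u_T$ supported on $\{x : x \supseteq T\}$ (so $\nnz(u_T) = 2^{k - |T|}$) and $v_T$ arbitrary (so $\nnz(v_T) \leq 2^k$), and $S$ is supported on $(x,y)$ with $|x|_1 > t$ (so $\nnz(S) \leq 2^k \sum_{s > t} \binom{k}{s}$). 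Including each nonzero entry of $S$ as a singleton $U_j V_j^T$, every summand of the resulting decomposition satisfies $\nnz(U_j) \leq \nnz(V_j)$. This is \emph{one-sided} in the sense of Remark~\ref{rem:asym}, so Theorem~\ref{thm:mainintro} applies despite $M$ possibly being non-symmetric.

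I would then compute $\alpha_1, \alpha_2, E, \beta$ as functions of $k$ and $t$, verify the imbalance condition $\beta > \alpha_2 - \alpha_1$, and conclude that $M^{\otimes n}$ has a depth-2 circuit of size $\exp(\alpha_1)^{n/k + o(n/k)}$. The main obstacle is the numerical optimization: choosing $k, t$ so that $\beta > \alpha_2 - \alpha_1$ and simultaneously $\alpha_1 / \ln 2 \leq 1.446$. Small $t$ sharpens the imbalance of $L$ (the terms with $|T| \approx t$ contribute ratio $\nnz(u_T)/\nnz(v_T) = 2^{-t}$, pushing $E$ negative and $\beta$ up) but enlarges $\nnz(S)$, which pushes $\alpha_1$ up; the optimal trade-off has to be solved numerically rather than in closed form, and getting all the way down to $1.446$ will require choosing $k$ reasonably large so that the $\Theta(1/k)$ rounding losses in $\alpha_1$ are negligible. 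A secondary concern is checking that the bound is uniform in $M$: the coefficients $a,b,c,d$ only enter the analysis through which entries of $M^{\otimes k}$ are zero, so a case split over the (finitely many) zero-patterns of $M$ will suffice, with the generic fully-nonzero case controlling the worst case and the degenerate patterns giving strictly better bounds.
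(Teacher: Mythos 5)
Your plan correctly identifies that the result should follow from applying Theorem~\ref{thm:onesided} (equivalently, Corollary~\ref{cor:rigLC} with the non-symmetric Remark~\ref{rem:asym}) to a base matrix $M^{\otimes k}$, and that a one-sided decomposition suffices. A small but real confusion: because your decomposition is one-sided, Theorem~\ref{thm:onesided} applies \emph{unconditionally}, with no need to verify $\beta > \alpha_2 - \alpha_1$; that inequality is only needed for the general imbalanced (but not one-sided) symmetric case. So the ``main obstacle'' you describe of checking the imbalance condition does not exist.

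The genuine gap is that the decomposition you propose is far too weak to give $1.446$. Splitting the multilinear expansion $\prod_{i=1}^k (a + bx_i + cy_i + dx_iy_i)$ by $x$-degree gives $L = \sum_{|T| \le t} u_T v_T^T$ with $\nnz(u_T) = 2^{k-|T|}$ but, generically, $\nnz(v_T) = 2^k$ (each $v_T$ evaluates a generic multilinear polynomial in $y$, which has full support). The contribution of $L$ to $\exp(\alpha_1)$ is then $\sum_{s \le t}\binom{k}{s}\sqrt{2^{k-s}\cdot 2^k} = 2^k\sum_{s\le t}\binom{k}{s}2^{-s/2}$. With $S$ included as singletons, one checks each entry of $S$ costs more (cost $1$ vs.\ the $2^k \cdot 2^{-s/2}$ a kept monomial costs divided by the $\binom{k}{s}2^k$ entries it covers) --- concretely, optimizing over $t$ gives $t = k$, yielding $\exp(\alpha_1)^{1/k} = 2 + \sqrt{2} \approx 3.41$, i.e.\ a circuit of size $\Theta(N^{1.77})$, which is \emph{worse} than the trivial mixed-product bound $O(N^{1.5})$. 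Even if you replace the singletons with a single pair $(I, S)$ of cost $\sqrt{2^k \nnz(S)}$ (the Corollary~\ref{cor:rigLC} trick), the optimal $t$ only brings $\exp(\alpha_1)$ down to $\Theta(2^{1.5k})$, again exactly matching the trivial $N^{1.5}$ and never beating it. No choice of $k, t$ in your scheme reaches $1.446$.

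The paper's actual argument is structurally different. It proves a rank-1 rigidity bound $\rig_{M^{\otimes k}}(1) \le 2^{2k} - 2^k\binom{k+1}{k/2}$ for even $k$ (Theorem~\ref{thm:rigKron}): after reducing by Lemmas~\ref{lem:outer1} and~\ref{lem:diag} to the outer-1 normal form $M = \left(\begin{smallmatrix}1 & 1 \\ 1 & \omega\end{smallmatrix}\right)$, one has $M^{\otimes k}[x,y] = \omega^{\langle x,y\rangle}$, and the rank-1 matrix $L[x,y] = \omega^{b_1[x]+b_2[y]}$ (for carefully chosen integer vectors $b_1, b_2$ depending on parities of $|x|,|y|$) agrees with $M^{\otimes k}$ whenever the centered inner product $\langle x - \tfrac12\bo,\, y - \tfrac12\bo\rangle$ hits a near-central value. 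This gives $\nnz(S) = 2^{2k} - \Theta(2^{1.5k}/\sqrt{k})$, so the sparse part is genuinely small, and plugging $k=6$, $\rig_{M^{\otimes 6}}(1)\le 1856$ into Corollary~\ref{cor:rigLC} gives $\log_2\bigl(64 + \sqrt{64\cdot 1856}\bigr)/6 \approx 1.446$. The interpolation-polynomial route that the paper does use (Theorem~\ref{thm:rigGKron}, for general $q$) yields only $1.5 - \Omega(1/(q^2\log q))$, which for $q=2$ is vastly weaker than $1.446$; the $q=2$ bound genuinely requires the dedicated combinatorial rank-1 construction.
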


Both our construction and the previous construction of Alman make use of upper bounds on the rank-1 rigidity of $M^{\otimes k}$ for small positive integers $k$. Alman computes $\mathcal{R}_{M^{\otimes k}}(1)$ for $k \leq 4$ and uses this in his construction. We generalize this rigidity upper bound, and ultimately use a construction for $k=6$. That said, most of our improvement comes from the improvement of Theorem~\ref{thm:mainintro} over the approach of Alman, rather than from the new rigidity upper bound.

\paragraph{Application: The Walsh-Hadamard transform $H_n$}

The Walsh-Hadamard transform $H_n$ is a special case of Theorem~\ref{thm:q2} where we are able to give a slightly improved construction:

\begin{theorem}
\label{thm:WHLCintro}
For any field $\F$, and any positive integer $n$, the $N \times N$ matrix $H_n$ for $N = 2^n$ has a depth-2 synchronous circuit of size $O(N^{1.443})$.
\end{theorem}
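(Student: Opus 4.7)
The plan is to specialize the construction used for the general $q=2$ case (Theorem~\ref{thm:q2}) to exploit the additional algebraic structure of the Hadamard matrix. I would apply Theorem~\ref{thm:mainintro} to $M = H_k$ for some well-chosen constant $k$, viewing $H_n = H_k^{\otimes n/k}$ (so that $q = 2^k$ and $N = q^{n/k}$; the case where $k \nmid n$ is handled by standard rounding). The goal is to produce an imbalanced decomposition $H_k = \sum_{j=1}^J U_j V_j^T$ satisfying $\sum_{j=1}^J \sqrt{\nnz(U_j) \cdot \nnz(V_j)} \le 2^{1.443 \cdot k}$; plugging this into Theorem~\ref{thm:mainintro} (which applies directly since $H_k$ is symmetric) then yields a depth-2 synchronous circuit of size $O(N^{1.443})$ for $H_n$.

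The decomposition itself would be built in the same style as the one underlying Theorem~\ref{thm:q2}: write $H_k = L + S$, where $L = \sum_j u_j v_j^T$ is a low-rank piece with each $u_j$ chosen to have much smaller support than $v_j$ (making these terms highly imbalanced), and $S = \sum_i e_i s_i^T$ is a sparse correction expressed in standard basis columns, which is maximally imbalanced since $\nnz(e_i) = 1$. This corresponds to an imbalanced form of a rank-$r$ rigidity upper bound for $H_k$. The key point is that $H_k$ admits strictly sharper such rigidity upper bounds than a generic $\pm 1$ Kronecker power, thanks to features like the orthogonality of its rows, the identity $H_k H_k = 2^k I$, and the fact that the rows are the characters of $(\Z/2\Z)^k$; these are the same features exploited in the rigidity upper bounds of \cite{AW17, Alm21, kivva2021improved}. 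I would use (or extract from) such a bound for some small constant $k$, in the spirit of the $k=6$ construction used for Theorem~\ref{thm:q2}, but with the constants sharpened by Hadamard-specific savings.

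With the decomposition in hand, the remaining steps are essentially mechanical: compute $\alpha_1 = \ln \sum_j \sqrt{\nnz(U_j) \nnz(V_j)}$, $\alpha_2 = \ln \sqrt{\nnz(H_k) \cdot 2^k} = 1.5 k \ln 2$, the weighted imbalance $E$, and $\beta$ from Section~1; verify $\beta > \alpha_2 - \alpha_1$ so that Definition~\ref{def:imba} is met and Theorem~\ref{thm:mainintro} applies; and numerically optimize the free parameters (the value of $k$, the target rank $r$ of $L$, and the sparsity profile of $L$ and $S$) to reach exponent $1.443$. The main obstacle is the very first step: producing an explicit rigidity-style decomposition of $H_k$ whose constants are meaningfully sharper than what is available for a worst-case $\pm 1$ matrix, since the target improvement from $1.446$ to $1.443$ is small and depends entirely on this Hadamard-specific saving. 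A secondary issue is ensuring the resulting decomposition remains imbalanced: pushing $u_j$'s to be very sparse can drive $E$ so far negative that the saturation in $\beta = \frac{\ln(\nnz(M)/q)}{6 \ln q} \cdot \min\{1, -4E/(E + \ln q)\}$ becomes the binding constraint, so one must balance the sparsity of the low-rank part against the sparsity of the correction $S$ so that $\beta > \alpha_2 - \alpha_1$ is comfortably satisfied at the optimum.
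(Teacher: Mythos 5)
Your plan has the right skeleton---take a rigidity-style decomposition of a constant-sized base matrix $H_k$ and feed it into the conversion machinery---but it is incomplete at exactly the point you flag as ``the main obstacle'': you never exhibit the Hadamard-specific rigidity upper bound, and you do not identify the mechanism by which the Hadamard structure is actually exploited. Without that ingredient the rest of the proposal is unexecutable, since the entire improvement from $1.446$ to $1.443$ rests on it.

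For the record, here is what the paper actually does and where the saving comes from. It first proves a rank-$1$ rigidity bound specific to $H_n$ (Theorem~\ref{thm:rigWH}), namely $\rig_{H_n}(1)\le 2^{2n-1}-2^{1.5n-1}$ for even $n$. The mechanism is the mod-$2$ relaxation: in the generic $2\times 2$ bound (Theorem~\ref{thm:rigKron}) a rank-$1$ matrix $L[x,y]=\omega^{b_1[x]+b_2[y]}$ agrees with $M^{\otimes n}[x,y]=\omega^{\ip{x}{y}}$ only on pairs where $b_1[x]+b_2[y]=\ip{x}{y}$ exactly, but when $\omega=-1$ agreement holds whenever $b_1[x]+b_2[y]\equiv\ip{x}{y}\pmod 2$. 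Counting good pairs under the weaker congruence (via the multisection identity, Lemma~\ref{lem:multisection}) roughly doubles the agreement set, improving $\rig$ from $2^{2n}-2^n\binom{n+1}{n/2}$ to $2^{2n-1}-2^{1.5n-1}$. This is the operational form of the ``orthogonality / characters of $(\Z/2\Z)^k$'' structure you allude to. Plugging $\rig_{H_6}(1)\le 1792$ into Corollary~\ref{cor:rigLC} with base matrix $H_6$ (so $q=64$, $r=1$) gives exponent $\log_{64}\bigl(64+\sqrt{64\cdot 1792}\bigr)\approx 1.443$.

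Two smaller issues with your route. First, the paper uses rank-$1$ rigidity, not rank-$r$: with $r=1$ the decomposition $H_k=U\times V+I\times S$ has $\nnz(U)=\nnz(V)=q$ and $\nnz(I)=q\le\nnz(S)$, so it is automatically one-sided and Theorem~\ref{thm:onesided} (equivalently Corollary~\ref{cor:rigLC}) applies with no imbalance condition to verify---your worry about $\beta$ saturation simply does not arise on this route. Second, your proposal to sparsify the $u_j$'s departs from the interpolation/parity construction that yields the paper's bound; without a concrete sparsity pattern and a corresponding agreement count you cannot even begin the optimization over $k$ and the decomposition profile, so this part of the plan is aspirational rather than a proof.
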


The previous work by Alman~\cite{Alm21} achieved size $O(N^{1.476})$, and we show in Subsection~\ref{subsec:comparison} below that its approach could not give a better construction than that. Again, most of our improvement comes from our new Theorem~\ref{thm:mainintro}, but we also make use of the a new construction showing that $\mathcal{R}_{H_6}(1)$ is slightly lower than our upper bound on $\mathcal{R}_{M^{\otimes 6}}(1)$ for a generic matrix $M \in \F^{2 \times 2}$.

\paragraph{Application: The disjointness matrix $R_n$}

Because of the sparsity of $R_n$, its depth-2 circuit constructed from the mixed product property has size only $O(N^{1.293})$. Jukna and Sergeev~\cite{JS13} used a clever grouping of the terms of $R_1$ to give an improved construction with size $O(N^{1.272})$. We further improve this:

\begin{theorem}
For any field $\F$, and any positive integer $n$, the $N \times N$ matrix $R_n$ for $N = 2^n$ has a depth-2 synchronous circuit of size $O(N^{1.258})$.
\end{theorem}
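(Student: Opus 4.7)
The plan is to apply Theorem~\ref{thm:mainintro} to a Kronecker power $R_1^{\otimes k}$ of $R_1$ for a suitable constant $k$. Since $R_1$, and hence $R_1^{\otimes k}$, is symmetric, it suffices to exhibit an imbalanced decomposition
\[
R_1^{\otimes k} \;=\; \sum_{j=1}^J U_j V_j^T
\]
satisfying $\sum_{j=1}^J \sqrt{\nnz(U_j) \cdot \nnz(V_j)} \le 2^{1.258\,k - \delta}$ for some $\delta > 0$. Theorem~\ref{thm:mainintro} applied to $M = R_1^{\otimes k}$ (and taking its $n/k$-th Kronecker power) then yields a depth-$2$ synchronous linear circuit for $R_n = R_1^{\otimes n}$ of size $O(N^{1.258})$, as required.

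As a baseline, index rows and columns of $R_1^{\otimes k}$ by subsets of $[k]$; the $S$-th column is the indicator vector of $\{T : T \cap S = \emptyset\}$, with $2^{k-|S|}$ nonzeros. Setting $U_S$ equal to this column and $V_S = e_S$ recovers the column-by-column decomposition, which satisfies
\[
\sum_{S \subseteq [k]} \sqrt{\nnz(U_S)\cdot\nnz(V_S)} \;=\; \sum_{s=0}^{k} \binom{k}{s} 2^{(k-s)/2} \;=\; (1+\sqrt 2)^k,
\]
reproducing exactly the Jukna--Sergeev bound $O(N^{\log_2(1+\sqrt 2)}) = O(N^{1.272})$. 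After transposition, each $\nnz(U_j) = 1 \le \nnz(V_j)$, so this decomposition is one-sided and in particular imbalanced (cf.\ Remark~\ref{rem:asym}). The target $2^{1.258} \approx 2.392$ is only marginally smaller than $1+\sqrt 2 \approx 2.414$, so the improvement should come from a modest refinement of this baseline for a sufficiently large constant $k$.

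The refinement I would pursue is to merge the contributions of a few of the densest columns. Concretely, for a small collection $\mathcal A \subseteq 2^{[k]}$ of column indices whose column supports overlap heavily, replace the terms $\sum_{S\in\mathcal A} U_S e_S^T$ by a single merged term of the form $u\bigl(\sum_{S\in\mathcal A}\lambda_S e_S\bigr)^T$, together with sparse correction terms that fix the resulting discrepancies. A single merged term contributes $\sqrt{\nnz(u)\cdot|\mathcal A|}$ in place of $\sum_{S\in\mathcal A}\sqrt{\nnz(U_S)}$, which is strictly smaller when the $U_S$'s have large common support. Because $R_1$ has a zero entry, the corrections needed live only on the rows $T$ where the merged column disagrees with the true column and are far sparser than the columns themselves, so the saving outweighs the cost. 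Equivalently, one seeks a low-rank-plus-sparse decomposition $R_1^{\otimes k} = L + F$ with $L$ of small rank and $F$ sparse, bounding $\sum_j \sqrt{\nnz(U_j)\nnz(V_j)}$ by a structured rank-$1$ factorization of $L$ plus one term per nonzero entry of $F$.

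The main obstacle is exhibiting an explicit decomposition of $R_1^{\otimes k}$ meeting the target $2^{1.258\,k}$ for some constant $k$. I expect this requires a computer-assisted search over small $k$, analogous to the $k = 6$ case used in the proof of Theorem~\ref{thm:WHLCintro}, possibly combined with a structural observation exploiting the OR-interpretation of $R_1^{\otimes k}$ (its $(T,S)$-entry equals $\prod_i (1 - T_i S_i)$, vanishing exactly when $T \cap S \ne \emptyset$). Once a suitable decomposition is in hand, verifying the condition $\beta > \alpha_2 - \alpha_1$ of Definition~\ref{def:imba} (after possibly transposing so that $E \le 0$) is a routine arithmetic calculation, and Theorem~\ref{thm:mainintro} then delivers the claimed depth-$2$ circuit of size $O(N^{1.258})$.
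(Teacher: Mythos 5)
Your setup is correct: the theorem does reduce to exhibiting a decomposition $R_1^{\otimes k} = \sum_j U_j V_j^T$ with $\sum_j \sqrt{\nnz(U_j)\nnz(V_j)} < 2^{1.258\,k}$ for some constant $k$, verifying the imbalance (or one-sidedness) condition, and invoking Theorem~\ref{thm:mainintro}. Your computation that the column-by-column decomposition yields $(1+\sqrt 2)^k$ and recovers the Jukna--Sergeev bound $O(N^{\log_2(1+\sqrt 2)}) \approx O(N^{1.2716})$ is also right, and the paper indeed makes exactly this observation.

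However, there is a genuine gap: you never actually produce a decomposition beating $(1+\sqrt 2)^k$, and your proposed refinement is pointed in a different (and unvalidated) direction from the paper's. You suggest merging dense columns into a rank-one term plus sparse corrections, ``equivalently\ldots a low-rank-plus-sparse decomposition $R_1^{\otimes k} = L + F$.'' That is the rigidity route (Corollary~\ref{cor:rigLC}), which the paper uses for $H_n$ and generic $M^{\otimes n}$ but \emph{not} for $R_n$; for $R_n$ the paper instead gives an explicit \emph{partition} of the $1$-entries of $R_3$ into eight combinatorial rectangles (each $U_j V_j^T$ a $0/1$ block, the blocks disjointly covering the $27$ ones of $R_3$). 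This yields $\alpha_1 \approx 2.616$, hence $\exp(\alpha_1) \approx 13.68 < 2^{3 \cdot 1.258} \approx 13.72$, together with $\beta \approx 0.0827 > \alpha_2 - \alpha_1 \approx 0.0724$, so the decomposition is imbalanced but \emph{not} one-sided and the full strength of Theorem~\ref{thm:main} is genuinely needed — in contrast to your baseline, which after transposition is one-sided and could be handled by Theorem~\ref{thm:onesided} alone. You flag that finding the decomposition ``requires a computer-assisted search,'' which honestly identifies the missing step, but the step is the entire content of the theorem: without an explicit decomposition and the numerical verification $\beta > \alpha_2 - \alpha_1$, the argument does not close. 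Moreover there is no reason given that your rank-plus-sparse merging would reach $2^{1.258\,k}$; the paper's partition construction also preserves the SUM-circuit ($0/1$-entry) property, which your coefficient-mixing merge would break, so the two approaches are not interchangeable.
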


Moreover, both the circuit of Jukna and Sergeev, and our new circuit, are what they call `SUM circuits'. That is to say they correspond to factorizations $R_n = A \times B$ where $A$ and $B$ are matrices whose entries are all $0$s and $1$s (and, in our case, $\nnz(A)$ and $\nnz(B)$ are both at most $O(N^{1.258})$). We leave open the exciting question of whether further improvements are possible by using $A$ and $B$ with other entries from $\F$.

The construction of Jukna and Sergeev can be seen as using a simple, special case of our Theorem~\ref{thm:mainintro} in which the matrix is symmetric and the decomposition is `one-sided'. In order to get our improvement, we use a decomposition of $R_3$ which is imbalanced but \emph{not} one-sided, and where the full generality of Theorem~\ref{thm:mainintro} is needed.

We note that designing a depth-2 circuit of size $O(N^{1.25 - \eps})$ for $R_n$ for any constant $\eps > 0$ would be exciting, since it's known that for any $2 \times 2$ matrix $M \in \F^{2 \times 2}$, the Kronecker power $M^{\otimes 4}$ has a depth-4 circuit whose size is at most a constant factor times the size of a depth-2 circuit for $R_n$~\cite[{Section~6}]{Alm21}. By comparison, the mixed product property gives a depth-4 circuit for $M^{\otimes n}$ of size $O(N^{1.25})$.

\section{Proof overview} \label{sec:overview}
In this section, we present a proof overview of our main circuit construction, Theorem \ref{thm:mainintro}. First, in Section~\ref{subsec:construction}, we give an algorithm for converting a sum-of-2-product decomposition (\ref{eq:fixeddecomp}) for $M$ into a linear circuit for $M^{\otimes n}$. Then, in Section~\ref{subsec:size}, we will prove that this circuit has size at most $O(\exp(\alpha_1))^n$ by analyzing a random process associated with this circuit.

\subsection{Circuit construction}\label{subsec:construction}
We first give a prototype of our algorithm, which omits some details which will be discussed in Section \ref{sec:main}.
Let $\F$ be a field, $q, n, J$ be positive integers, and $M \in \F^{q \times q}$ be a symmetric
matrix. Suppose $M$ has the decomposition $M = \sum_{j=1}^J U_j \times V_j$. Then, by
the bilinearity property and mixed product property of Kronecker product operation, we can write
\begin{align}
M^{\otimes n} &= (\sum_{j=1}^J U_j \times V_j)^{\otimes n}
= \sum_{x \in [J]^n} \bigotimes_{k=1}^n (U_{x[k]} \times V_{x[k]}) \notag \\
&= \sum_{x \in [J]^n} (\bigotimes_{k=1}^n U_{x[k]}) \times ( \bigotimes_{k=1}^n V_{x[k]})
=: \sum_{x \in [J]^n} U_x \times V_x  \label{eq:so2p}\\
&= \left(\begin{array}{c|c|c|c} U_{(1, 1, \cdots, 1)} & U_{(1, 1, \cdots, 2)}
& \cdots & U_{(J, J, \cdots, J)} \end{array}\right) \times
\left(\begin{array}{c} V_{(1, 1, \cdots, 1)} \\ \hline V_{(1, 1, \cdots, 2)} \\ \hline \vdots \\ \hline V_{(J, J, \cdots, J)} \end{array}\right) \label{eq:d2lc}
\end{align}
In Equation (\ref{eq:so2p}), we have defined $U_x := \bigotimes_{k=1}^n U_{x[k]}$ for $x \in [J]^n$, and defined $V_x$ similarly.

The calculation above shows that $M^{\otimes n}$, after being properly rewritten, can be represented in Form (\ref{eq:d2lc}), which is indeed a depth-2 linear circuit with layer 1 and 2 composed of Kronecker products of respectively $V_j$'s and $U_j$'s.
Now we try to analyze the size $Z$ of the linear circuit, i.e., 
$Z =\sum_{x \in [J]^n} \nnz(U_x) + \sum_{x \in [J]^n} \nnz(V_x)$.
By the property $\nnz(\bigotimes_{k=1}^n A_j) = \prod_{k=1}^n \nnz(A_j)$, we have
$Z = \sum_{x \in [J]^n} \prod_{k=1}^n \nnz(U_{x[k]})
+ \sum_{x \in [J]^n} \prod_{k=1}^n \nnz(V_{x[k]}) =
(\sum_{j=1}^J \nnz(U_j))^n + (\sum_{j=1}^J \nnz(V_j))^n$.
This gives us a desired bound on the size $Z$ if the two terms are `balanced', i.e., the sizes of two layers in the linear circuit, $\sum_{j=1}^J \nnz(U_j)$ and $\sum_{j=1}^J \nnz(V_j)$, are equal to each other.
For instance, if it holds that \emph{$\nnz(U_x) = \nnz(V_x)$ for all $x$}, then
we have the size bound
$Z = 2 \sum_{x \in [J]^n} \prod_{k=1}^n \sqrt {\nnz(U_{x[k]}) \cdot \nnz(V_{x[k]})}
= 2 (\sum_{j=1}^J \sqrt{\nnz(U_j) \cdot \nnz(V_j)})^n$,
which is precisely our goal in Theorem \ref{thm:mainintro}. On the other hand, when $\nnz(U_x)$ and $\nnz(V_x)$ are frequently very far apart, the size $Z$ can be much larger than this.

Thereby our primary objective here becomes: keeping each term in
the sum-of-2-product representation (Form (\ref{eq:so2p})) of $M^{\otimes n}$ (almost) balanced. This is impossible using only a single decomposition of $M$, but we will accomplish it by modifying this construction, exploiting the power of other decompositions of $M$.

We will construct our sum-of-2-product representation of $M^{\otimes n}$ iteratively by, for each $k$ from $1$ to $n$, giving a representation for $M^{\otimes k}$ in terms of our representation of $M^{\otimes k-1}$. We will particularly make use of `update functions' which, for two matrices $A$ and $B$, and any decomposition $M = \sum_{j=1}^J C_j \times D_j$ of $M$, compute $\CE(A, B) = \{(A \otimes C_j, B \otimes D_j) | j\in [J]\}$.
More formally, if $M^{\otimes (k-1)}$ has sum-of-2-product representation
$M^{\otimes (k-1)} = \sum_{s = 1}^{S} A_s \times B_s$ for some integer $S$, we will pick $S$ update functions $\CE_1, \ldots, \CE_S$ corresponding to (possibly different) decompositions of $M$, and then convert from the set $F_{k-1}$ of parts of the representation of $M^{\otimes (k-1)}$ to the set $F_k$ of parts of the representation of $M^{\otimes k}$ via
\[
\begin{array}{rccccccccc}
F_{k-1} =& \{(A_1, B_1)\} &\cup &\{(A_2, B_2)\} &\cup & \cdots & \cup
& \{(A_{S-1}, B_{S-1})\} & \cup & \{(A_S, B_S)\} \\
&\downarrow     &     &\downarrow    &     & \downarrow&   & \downarrow
&  &\downarrow\\
F_k =& \CE_1(A_1, B_1) &\cup &\CE_2(A_2, B_2) &\cup & \cdots & \cup
& \CE_{S-1}(A_{S-1}, B_{S-1}) & \cup & \CE_S(A_S, B_S).
\end{array}
\]
We can verify by the distributive property that if
$M^{\otimes (k-1)} = \sum_{(A, B)\in F_{k-1}} A \times B$, then
$M^{\otimes k} = \sum_{(A, B)\in F_k} $ $A \times B$.

We will decide which update function to use for each $(A_s, B_s)$ term in order to maintain the balance. When $\nnz(A_s) > \nnz(B_s)$, we will pick $\CE_s$ corresponding to a decomposition of $M$ which assigns $A_s$ with smaller $C_j$ and $B_s$ with larger $D_j$, and vice versa. This is possible since $M$ is symmetric, and so we always have the two update functions $\CU(A, B) = \{(A \otimes U_j, B \otimes V_j) | j\in [J]\}$ and $\CU'(A, B) = \{(A \otimes V_j^T, B \otimes U_j^T) | j\in [J]\}$ corresponding to the decompositions $M = \sum_{j=1}^J U_j \times V_j$ and $M = \sum_{j=1}^J V_j^T \times U_j^T$, respectively. Since our original decomposition is imbalanced, these two updates are imbalanced in opposite directions. Hence, we are able to make
\emph{heterogeneous} updates $\CU$ or $\CU'$ on different components in the sum-of-2-product representation of $M^{\otimes (k-1)}$, which produces a more balanced sum-of-2-product representation of $M^{\otimes k}$ than the one relying solely on $\CU$ updates.

However, only utilizing the two decompositions above is still insufficient for balancing all terms
in the sum-of-2-product representation of $M^{\otimes n}$.
For example, there could be one component $V_j^T \times U_j^T$ that is heavier on layer 1, i.e., $\nnz(V_j^T) > \nnz(U_j^T)$, even though $M = \sum_{j=1}^J V_j^T \times U_j^T$ is overall heavier on layer 2.
As a result, the sum-of-2-product representation of $M^{\otimes n}$ inevitably will contain some terms
that are not balanced using the updating procedure we just described.

To get around this, we will take advantage of two more
update functions $\CV(A, B) = \{(A \otimes I, B \otimes M)\}$ and 
$\CV'(A, B) = \{(A \otimes M, B \otimes I)\}$ corresponding to the trivial decompositions $M = I \times M$ and $M = M \times I$. These have the property that every component inside the decomposition is heavier on the same layer as is the decomposition itself. On the other hand, we want to avoid using these update functions too often, as they lead to larger circuits than our original decomposition. 
These updates therefore work as means of `\emph{hard-balancing}', as opposed to `\emph{soft-balancing}' updates $\CU, \CU'$.

We have now accumulated all the ingredients for establishing our main algorithm,
which generates a sum-of-2-product representation, or equivalently a depth-2 linear circuit of $M^{\otimes n}$.
In this algorithm, we first apply soft-balancing updates. If picked appropriately
(from $\CU, \CU'$)
according to the current size relationship between two matrices in a term, these updates are sufficient for keeping most terms balanced. However, throughout the process, some terms may become `overly unbalanced', after which we have no choice but to apply hard-balancing updates to remedy the size disparity inside; we apply hard-balancing to these terms at all remaining steps of the algorithm. Recall that our ultimate goal is to keep each term in the sum-of-2-product representation of $M^{\otimes n}$ balanced.
Hence the criterion for `overly balanced' is picked such that
a term can be precisely rebalanced by the subsequent hard-balancing updates.
More specifically, we say a term
$A \times B$ at step $k$ is overly unbalanced (say, heavier on layer 1)
if $\frac{\nnz(A)}{\nnz(B)} \cdot \left(\frac{\nnz(M)}{\nnz(I)}\right)^{n-k} \approx 1$.

Following the same notation as above, our algorithm can alternatively be seen as
\[
\begin{array}{rccccccccc}
F_{k-1} =& \{(A_1, B_1)\} &\cup &\{(A_2, B_2)\} &\cup & \cdots & \cup
& \{(A_{S-1}, A_{S-1})\} & \cup & \{(A_S, B_S)\} \\
&\downarrow     &     &\downarrow    &     & \downarrow&   & \downarrow
&  &\downarrow\\
F_k =& \CU(A_1, B_1) &\cup &\CU'(A_2, B_2) &\cup & \cdots & \cup
& \CV(A_{S-1}, B_{S-1}) & \cup & \CV'(A_S, B_S)
\end{array}
\]
where
\begin{equation}
\label{equ:conditions}
(A_s, B_s) \longrightarrow \left\{
\begin{array}{ll}
\CU(A_s, B_s),    &  \text{ if } \nnz(A_s) \ge \nnz(B_s)
\text{ and } (A_s, B_s) \text{ is in soft-balancing stage},\\
\CU'(A_s, B_s),    &  \text{ if } \nnz(A_s) < \nnz(B_s)
\text{ and } (A_s, B_s) \text{ is in soft-balancing stage},\\
\CV(A_s, B_s),    &  \text{ if } \nnz(A_s) \ge \nnz(B_s)
\text{ and } (A_s, B_s) \text{ is in hard-balancing stage},\\
\CV'(A_s, B_s),    &  \text{ if } \nnz(A_s) < \nnz(B_s)
\text{ and } (A_s, B_s) \text{ is in hard-balancing stage}.
\end{array}
\right.\end{equation}
\subsection{Circuit size analysis}\label{subsec:size}
In order to analyze the algorithm above, we capture the sizes of terms in the sum-of-2-product representation of $M^{\otimes k}$ and their changes, using a random process.
As is shown in the algorithm, to generate pairs of matrices $(A', B') \in F_k$
from $(A, B) \in F_{k-1}$, one first decides the update rule to use according to the conditions given in (\ref{equ:conditions}), then applies that update to $(A, B)$ to get a \emph{set} of pairs of matrices. In our random process, we will pick one of the pairs from that set uniformly at random, so our random process will move from one pair $(A, B) \in F_{k-1}$ to one pair $(A', B') \in F_k$.

For example, if
$(A, B)$ is to be updated by $\CU$, one needs to pick a $j \in [J]$ uniformly at random, corresponding to $U_j \times V_j$,
and construct $(A', B') \in F_k$ as $(A', B') = (A \otimes U_j, B\otimes V_j)$.
Therefore, each term in $F_k$ in fact corresponds to a sequence of $k$ iterative choices,
which we formulate as the outcomes of a sequential sampling of random variables.

Let $\TX, \TX', \TY, \TY'$ be probability distributions on $\R^2$ defined by
\begin{align*}
&\Pr[\TX = (\nnz(U_j), \nnz(V_j))] = \Pr[\TX' = (\nnz(V_j^T), \nnz(U_j^T))] = 1/J, j \in [J]; \\
&\Pr[\TY = (\nnz(I), \nnz(M))] = \Pr[\TY' = (\nnz(M), \nnz(I))] = 1.
\end{align*}
The change of size $(\frac{\nnz(A')}{\nnz(A)}, \frac{\nnz(B')}{\nnz(B)})$ at each step of the random process
is then given by the outcome of a single draw from the distribution corresponding to the update rule that is used.
Consequently the pair of sizes $(\nnz(A), \nnz(B))$, which corresponds to the size of a term $(A,B)$ we arrive at in the random process, is precisely captured by the product of a sequence of such outcomes. Further, the sum of such products over all possible outcome sequences equals the size of the entire circuit.
Therefore, to determine the size of the circuit, we define $X, X', Y, Y'$ to be the distributions $\TX, \TX', \TY, \TY'$, but with outcomes appropriately normalized so that our goal is to compute the expected value of the outcome of the following process. (We use $\odot$ to denote entry-wise product of vectors.)
\begin{center}
\fbox{\parbox{0.8\textwidth}{
\textbf{Stage 0}: \\
Draw i.i.d. examples $X_k \sim X, X_k' \sim X', Y_k \sim Y, Y_k' \sim Y', k = 1, \cdots, n$. \\ Let $S_0 = (1,1)$.

\noindent \textbf{Stage 1 (Soft-Balancing)}:

For $k = 1, 2, \cdots, n$:

~~1. Let $T_{k-1} = S_{k-1}[1] / S_{k-1}[2]$.

~~2. If $T_{k-1} \ge 1$, $S_k = S_{k-1} \odot X_k$. Otherwise, $S_k = S_{k-1} \odot X_k'$.

~~3. If $\max\{T_k, 1/T_k\} \ge \Gamma_k$, end Stage 1.

\noindent \textbf{Stage 2 (Hard-Balancing)}:

Let $k = K$ be the last iteration in Stage 1. 

For $k = K+1, K+2, \cdots, n$:

~~If $T_{k-1} \ge 1$, $S_k = S_{k-1} \odot Y_k$. Otherwise, $S_k = S_{k-1} \odot Y_k'$.
}}
\end{center}
For simplicity in this overview, we do not yet formally define the random variables and vectors involved, but their correspondence to quantities associated with our algorithm is as follows.
\[\begin{array}{|l|l|}
\hline
S_k & \text{normalized pair of sizes of some term }(A, B) \\ 
X, X', Y, Y' & \text{change of normalized size via update }\CU, \CU', \CV, \CV' \\
T_{k} \ge 1 & (A, B) \text{ is heavier on layer 2} \\
\max \{T_k, 1/T_k\} \ge \Gamma_k & (A, B) \text{ is overly unbalanced} \\
\hline
\end{array}\]
One can see from the correspondence that the random process above follows precisely the rules of our algorithm, and as a result, the expectation of $S_n$ (seen as an analogue to the sum of products over outcome sequences mentioned above), equals the size of the final circuit (\ref{eq:d2lc}).
In this way we manage to reduce the task of showing the desired circuit size bound to
an examination of the random process above.

Unfortunately this process involves a changing two-dimensional random vector ($S_k$) which is hard to analyze directly, due to the simultaneous changing of $\nnz(A)$ and $\nnz(B)$.
We therefore turn to another similar process whose outcome has the same expected value, and in which the key variable is only one-dimensional.

For each $j \in [J]$, let $K_j := \sqrt{\nnz(U_j) \cdot \nnz(V_j)}$. In the process above, the distribution $X$ returns $(\nnz(U_j), \nnz(V_j))$ with probability $1/J$ for each $j \in [J]$. Suppose we modify it to instead return $(\nnz(U_j) / K_j, \nnz(V_j) / K_j)$ with probability $K_j / (\sum_{j' = 1}^J K_{j'})$ for each $j \in [J]$. We can verify that this does \emph{not} change the expected value of the result of the random process above. On the other hand, this gives us the additional invariant that 
$(S_{k+1}[1] \cdot S_{k+1}[2])/(S_{k}[1] \cdot S_{k}[2])$ is always a constant. Therefore, the two-dimensional vector $S_k$ is now uniquely determined by the single real number $T_k$, which is easier for us to analyze. Interestingly, the new process now corresponds less directly to the original algorithm (one can think of it as making `fractional' choices of parts of the decomposition of $M$ at each step), but is substantially easier to analyze using the usual toolbox.

\subsection{Applications}
With Theorem \ref{thm:mainintro} proved, it remains to find a sparse (nontrivial) decomposition which is either imbalanced or one-sided in order to construct small linear circuits. As mentioned above, we are able to find such a decomposition for Kronecker powers of \emph{any} fixed matrix $M$.
\paragraph{Rigidity-based decompositions}
By the definition of rigidity, for any rank $r$, the matrix $M\in \F^{q\times q}$  has a decomposition $M = U \times V + I \times S$ 
for some $U \in \F^{q \times r}, V \in \F^{r \times q}$ (which both have at most $q \cdot r$ nonzero entries), and where $\nnz(S) = \mathcal{R}_M(r)$.
This one-sided decomposition, in conjunction with our Theorem~\ref{thm:mainintro}, gives
a depth-2 linear circuit of size $O((q \cdot r + \sqrt{q \cdot \rig_M(r)})^n)$ for $M^{\otimes n}$. (By comparison, the prior work~\cite{Alm21} achieved a depth-2 circuit of size $O(( \sqrt{(qr+q)(qr + \mathcal{R}_M(r))})^n)$ from the same rigidity bound; one can verify by the Cauchy-Schwarz inequality that our new bound strictly improves on this.)
That is, small linear circuits for $M^{\otimes n}$ follow from rigidity upper bounds of $M$. We then make use of and generalize various recent techniques to give the desired rigidity upper bounds.
\begin{itemize}
\item Alman \cite[Subsection 4.1 and 4.2]{Alm21} presented (ad hoc) rank-1 rigidity bounds on specific matrices of fixed size, including the third Kronecker power of any $2 \times 2$ matrix, as well as the Walsh-Hadamard matrix $H_4$. In this work, new rank-1 rigidity bounds shown in Theorem \ref{thm:rigKron} and \ref{thm:rigWH}
generalize Alman's results to $M^{\otimes n}$ and $H_n$ for arbitrary integer $n$ using an alternative scheme of matrix indexing.
\item The polynomial method is another approach for proving rigidity upper bounds, which has been used to show a number of matrices of interest are not rigid enough to prove lower bounds using Valiant's approach~\cite{valiant1977graph} (e.g., \cite{AW17, DL19, Alm21}).
Using our construction, we know that for $M \in \F^{q \times q}$, there exists a non-trivial depth-2 circuit for $M^{\otimes n}$ if 
$\rig_{M_n}(r) < (q^n)^2$ for some $r < (q^n)^{0.5}$ and any $n$.
Partly inspired by \cite[Theorem A.1]{AW17}, we manage to show a result of this type for any fixed matrix $M \in \F^{q \times q}$ in Theorem \ref{thm:rigGKron} with the aid of the polynomial method.
\end{itemize}

\paragraph{Partition-based decompositions}
Any disjointness matrix $R_n$, as a special Kronecker power matrix,
automatically has the decompositions stated above. 
However further exploiting the fact that $R_n$ is sparse and contains only 0 and 1 entries, one can
construct other tailor-made decompositions for $R_n$ based on partitioning its 1-entries into small blocks (`combinatorial rectangles'). The prior best bound given by Jukna and Sergeev~\cite{JS13}
implicitly relies on a 2-partition of $R_1$; we improve on this
by presenting a (slightly) improved 8-partition for $R_3$.
We also remark that converting this new decomposition into a linear circuit is beyond the capability of Jukna and Sergeev's approach, which works only for one-sided decompositions, and necessitates our main result Theorem \ref{thm:mainintro}.

\section{Preliminaries}
\subsection{Notations and conventions}
\label{subsec:notation}

We begin by introducing some notation and conventions that we will use throughout this paper.
\paragraph{Notation}
$\exp : \R^{n} \rightarrow \R^{n}$ is the entry-wise exponentiation function defined as $\exp(x[1], x[2], \cdots, $ $x[n]) = 
(e^{x[1]}, e^{x[2]}, \cdots, e^{x[n]})$. When $n = 1$, it coincides with the normal exponentiation of reals.
$\odot$ is the binary entry-wise product (also known as Hadamard product)
between vectors of same dimension, i.e.,
$(x[1], \cdots, x[n]) \odot (y[1], \cdots, y[n]) = (x_1[n] \cdot y_1[n], 
\cdots x[n] \cdot y[n])$.
$\Ex[X]$ for random vector $X = (X[1], \cdots, X[n])$ on $\R^n$ is the entry-wise expectation, i.e. $\Ex[X] = (\Ex[X[1]], \cdots, \Ex[X[n]])$.
$\log_a(b)$ is the logarithm of $b > 0$ with base $a > 0, a\ne 1$.
In particular, we use $\log(\cdot)$ to denote $\log_2(\cdot)$ and $\ln(\cdot)$ to denote $\log_{e}(\cdot)$.
$\bo[P]$ for some property $P$ is the indicator function
\[\bo[P] = \left\{\begin{array}{cc} 1, & \text{if $P$ is true}, \\ 
0, & \text{if $P$ is false}. \end{array} \right.\]
$\ip{x}{y}$ is the inner product between vectors $x, y \in \R^{n}$ over the reals. $\varnothing$ denotes the empty set.

\paragraph{Conventions}
For integer $z$, $0^z = \bo[z = 0]$.
$\min(\varnothing) = +\infty$, $\max(\varnothing) = -\infty$.

\subsection{Kronecker product and matrix sparsity}
For two matrices $A \in \F^{a_1 \times a_2}$ and $B \in \F^{a_3 \times a_4}$, their \emph{Kronecker product} is the matrix $A \otimes B \in \F^{a_1a_3 \times a_2a_4}$ given by, for $i_1 \in [a_1]$, $i_2 \in [a_2]$, 
$i_3 \in [a_3]$ and $i_4 \in [a_4]$,
$$A \otimes B[i_1 + a_1 i_3, i_2 + a_2 i_4]  = A[i_1, i_2] \cdot B[i_3, i_4].$$
We in addition show some properties of Kronecker product operation of particular use here.
\begin{enumerate}
\item (bilinearity) for any matrices $A, B \in \F^{a_1 \times a_2}, C, D \in \F^{b_1 \times b_2}$,
$(A + B) \otimes (C + D) = A \otimes C + A \otimes D + B \otimes C + B \otimes D$,
\item (mixed-product property) for any matrices $A \in \F^{a_1 \times a_2}, B \in \F^{b_1 \times b_2}, C \in \F^{c_1 \times c_2}, D \in \F^{d_1 \times d_2}$ with $a_2 = c_1$ and $b_2 = d_1$, 
$(A \otimes B) \times (C \otimes D) = (A \times C) \otimes (B \times D)$.
\item (rank) for any matrices $A \in \F^{a_1 \times a_2}, B \in \F^{b_1 \times b_2}$, 
$\rank(A \otimes B) = \rank(A) \cdot \rank(B)$.
\end{enumerate}

For a matrix $A \in \F^{a_1 \times a_2}$ on field $\F$, we use $\nnz(A)$ to denote its \emph{sparsity}, i.e. the number of non-zero entries in $A$.
Some basic properties we will use are that, for any $A \in \F^{a_1 \times a_2}$
and $B \in \F^{b_1 \times b_2}$,
\begin{enumerate}
\item $\nnz(A \otimes B) = \nnz(A) \cdot \nnz(B)$,
\item if $D \in \F^{a_1 \times a_2}$ is a diagonal matrix, then $\nnz(D \times A) \le \nnz(A)$.
\end{enumerate}

\subsection{Matrix rigidity}
\label{subsec:prelim-rig}
Let $\F$ be a field, and $\mathbb K$ be an extension field of $\F$.
For a matrix $A \in \F^{a_1 \times a_2}$ and a non-negative integer $r$, we write
$\rig_A^{\mathbb K}(r)$ to denote the \emph{rank-$r$ rigidity} of $A$ over $\mathbb K$, which is the minimum number of entries of $A$ which must be changed to other values in $\mathbb K$ to make its rank at most $r$. In other words,
\[\rig_A^{\mathbb K}(r) := \min_{\begin{substack}{ B \in \mathbb K^{a_1 \times a_2} \\ \rank(A + B) \le r}
\end{substack}} \nnz(B).\]
We note that the definition of $\rig_A^{\mathbb K}(r)$ depends on the field $\mathbb K$.
In particular, Babai and Kivva~\cite{BK21} find fields $\F, \K$ such that $\mathbb K$ is an extension field of $\F$,
and a matrix $M \in \F^{n\times n}$ such that
for some rank $0 < r < n$, $\rig_{M_n}^{\mathbb K}(r)$ is strictly less than $\rig_{M_n}^{\F}(r)$.
Nonetheless, in this work we will only focus on the rigidity of $M \in \F^{n\times n}$ over the field $\F$ where we are designing a linear circuit, and our bounds work equally well over any field. In the following discussion we will use $\rig_{M}(r) := \rig_M^{\F}(r)$ to denote the rigidity if $M \in \F^{n \times n}$, and $\mathbb K= \F$.

By the definition of rank-$r$ rigidity, $A$ can always be rewritten as
$A = L + S$ for some matrices $L, S \in \F^{a_1 \times a_2}$, 
where $\rank(L) \le r, \nnz(S) \le \rig_A(r)$.
Further, there exist matrices $U \in \F^{a_1 \times r}, V \in \F^{r \times a_2},
S \in \F^{a_1 \times a_2}$ such that $L = U \times V$ and hence $A = U \times V + S$.

Initially matrix rigidity was developed as a tool for proving low-depth circuit lower bounds.
Valiant~\cite{valiant1977graph} showed that if the $N \times N$ matrix $M$ has $\mathcal{R}_M(N / \log \log N) \geq N^{1 + \eps}$ for any constant $\eps>0$, then $M$ doesn't have a circuit of depth $O(\log N)$ and size $O(N)$.
In light of this result, a matrix $M$ satisfying this rigidity lower bound is frequently called \emph{Valiant-rigid} in the literature.
Towards finding an explicit construction of Valiant-rigid matrices,
many researchers have subsequently shown various rigidity lower bounds, e.g. 
\cite{SSS1997riglb, DBLP:Fri93, midrijanis, Lok00}. The best known rigidity lower bound for the $N \times N$ Walsh-Hadamard transform $H_n$ (for $N = 2^n$) is $$\rig_{H_n}(r) \geq \frac{N^2}{4r}.$$

This is not sufficient to prove that $H_n$ is Valiant-rigid, and despite much effort, there are no known `fully explicit' constructions of Valiant-rigid matrices.
A recent line of work instead showed that a number of families of explicit matrices are in fact not Valiant-rigid~\cite{AW17,dvir2019matrix,DL19,Alm21, kivva2021improved,bhargava2022fast}, including
the Walsh-Hadamard transform \cite{AW17}, discrete Fourier transform \cite{DL19}, Kronecker product matrices \cite{Alm21, kivva2021improved} and certain Vandermonde matrices~\cite{bhargava2022fast}. It is not immediate that these rigidity upper bounds lead to improved linear circuits. Nonetheless, in this paper, we use rigidity upper bounds to construct linear circuits.

As stated before, our new construction of depth-2 circuits for $H_n$ and $M_n$ are both based on rigidity upper bounds. We note that Theorem \ref{thm:rigWH} and \ref{thm:rigGKron} essentially provide new rigidity upper bounds for such matrices respectively in regime $r = 1$ and $r = O(N^{0.4})$,  different than the regime $r = N/\log\log N$ of interest in Valiant-rigidity. These results about the structure of Kronecker power matrices may be of independent interest.

\subsection{Binomial coefficients and binary entropy function}
The binary entropy function $H : [0, 1] \rightarrow [0, 1]$ is defined by
\[H(p) = -p\log p - (1-p)\log (1-p)\]
where we take $0 \cdot \log(0) = 0$. For every integer $n > 1$ and every $p \in (0, 1)$,
it is known that
\[\frac{1}{n+1} \cdot 2^{n \cdot H(p)} \le \binom {n}{pn} 
\le 2^{n \cdot H(p)}.\]

By Taylor's expansion, we further have the following bounds.
\begin{lemma}
\label{lem:entropy}
For $0 < p < 1/2$,
\begin{enumerate}
\item $H(p + \delta) \le H(p) + H'(p) \cdot \delta + H''(p + \delta) \cdot \frac{\delta^2}2$,
\item $H(p - \delta) \le H(p) - H'(p) \cdot \delta + H''(p) \cdot \frac{\delta^2}2$.
\end{enumerate}
Here $H'(p) = \log_2(\frac{1- p}{p})$, $H''(p) = -\frac {\log_2 e}{p(1-p)}$
are respectively the first- and second-order derivative of $H(p)$.
\end{lemma}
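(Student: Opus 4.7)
The plan is to prove both inequalities via a direct application of Taylor's theorem with Lagrange remainder, combined with the monotonicity of $H''$ on $(0, 1/2)$. First, one verifies by differentiation of $H(p) = -p\log_2 p - (1-p)\log_2(1-p)$ that indeed $H'(p) = \log_2((1-p)/p)$ and $H''(p) = -\log_2 e / (p(1-p))$. Differentiating once more gives $H'''(p) = \log_2 e \cdot (1-2p)/(p(1-p))^2$, which is strictly positive on $(0, 1/2)$; equivalently, since $p(1-p)$ is increasing on $(0, 1/2)$, the quantity $1/(p(1-p))$ is decreasing there, so $H''$ is strictly increasing on $(0, 1/2)$.

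For part 1, one works under the implicit assumption that $\delta > 0$ is small enough that $p + \delta \in (0, 1/2]$ (this restriction is genuinely necessary: a quick numerical check shows the stated bound can fail once $p + \delta$ crosses well past $1/2$, because $H''$ reverses direction there). Under this assumption, Taylor's theorem applied to the $C^{\infty}$ function $H|_{(0,1)}$ produces some $\xi \in (p, p+\delta) \subset (0, 1/2)$ with $H(p+\delta) = H(p) + H'(p)\delta + \tfrac{1}{2}H''(\xi)\delta^2$. Since $\xi \le p + \delta$ and $H''$ is increasing on $(0, 1/2)$, we have $H''(\xi) \le H''(p+\delta)$, and multiplying by $\delta^2/2 > 0$ preserves the inequality, yielding the first bound.

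Part 2 is symmetric. Assuming $p - \delta > 0$, Taylor gives some $\xi' \in (p - \delta, p) \subset (0, 1/2)$ satisfying $H(p-\delta) = H(p) - H'(p)\delta + \tfrac{1}{2}H''(\xi')\delta^2$, and now $\xi' \le p$ combined with monotonicity of $H''$ on $(0, 1/2)$ gives $H''(\xi') \le H''(p)$, producing the second bound.

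There is no real obstacle — this is essentially routine calculus. The only point requiring even minor care is sign-tracking: because $H''$ is negative throughout $(0, 1)$, the inequalities $H''(\xi) \le H''(p+\delta)$ and $H''(\xi') \le H''(p)$ mean the true Lagrange remainders are \emph{more} negative than the bounds on the right-hand sides, so the $\le$ direction in the lemma statement is consistent. The lemma's presumed downstream use is to obtain second-order perturbation estimates on binomial coefficients $\binom{n}{(p \pm \delta)n}$, so the particular form of the bound (with $H''$ evaluated at the endpoint farther from $0$) is the natural one.
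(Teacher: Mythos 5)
Your proof is correct and spells out exactly what the paper compresses into the single phrase ``By Taylor's expansion'' (the paper gives no further argument): apply Taylor with Lagrange remainder and then use monotonicity of $H''$ on $(0,1/2)$ (since $H'''(p)=\log_2 e\cdot(1-2p)/(p(1-p))^2>0$ there) to replace $H''(\xi)$ by $H''(p+\delta)$ or $H''(p)$ as appropriate. You also correctly identify that the lemma implicitly requires $0<p-\delta$ and $p+\delta\le 1/2$ — without the latter, the first inequality can fail (e.g.\ $p=0.4,\ \delta=0.2$ gives $H(0.6)=H(0.4)$ but the right side drops below it), and indeed the paper's only use of this lemma has $p=1/q^2,\ \delta=1/(cq^2)$ with $q\ge 2,\ c>4$, so $p+\delta<2/q^2\le 1/2$ holds.
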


In addition, we will make use of the following identity regarding binomial coefficients.
For positive integers $n, s$, and nonnegative integer $r$ with $r, s\le n$, a multisection of the binomial expansion is defined as
\[\MS(n, s, r) := \sum_{
\begin{substack}{ t \equiv r ~ (\mathrm{mod}~s) \\ 0 \le t \le n}
\end{substack}} \binom n t. \]

\begin{lemma}[Series multisection identity, \cite{multisect}]
\label{lem:multisection}
\[\MS(n, s, r) = \frac 1s \sum_{j=0}^{s-1} \left( 2\cos \frac{\pi j}{s} \right)^n
\cos \frac{\pi (n-2r) j}{s}.\]
In particular,
\[\MS(n, 4, \frac n2 - r') = \frac12\left(2^{n-1} + 2^{n/2} \cos\frac{r'\pi}{2}\right).\]
\end{lemma}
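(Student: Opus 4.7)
The plan is to use the standard \emph{roots of unity filter} together with a half-angle identity. Let $\omega = e^{2\pi i/s}$. First I would note the orthogonality relation
\[
\frac{1}{s}\sum_{j=0}^{s-1} \omega^{j(t-r)} = \bo[t \equiv r \pmod{s}],
\]
valid for any integer $t$. Multiplying by $\binom{n}{t}$, summing over $0 \le t \le n$, and swapping the order of summation gives
\[
\MS(n,s,r) \;=\; \frac{1}{s}\sum_{j=0}^{s-1} \omega^{-jr} \sum_{t=0}^{n}\binom{n}{t}\omega^{jt} \;=\; \frac{1}{s}\sum_{j=0}^{s-1} \omega^{-jr}(1+\omega^j)^n,
\]
using the binomial theorem in the inner sum.

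Next I would apply the half-angle factorization $1 + e^{i\theta} = 2\cos(\theta/2)\, e^{i\theta/2}$ with $\theta = 2\pi j/s$, which yields
\[
(1+\omega^j)^n \;=\; \left(2\cos\tfrac{\pi j}{s}\right)^n e^{i\pi j n/s}.
\]
Substituting and combining exponentials,
\[
\MS(n,s,r) \;=\; \frac{1}{s}\sum_{j=0}^{s-1} \left(2\cos\tfrac{\pi j}{s}\right)^n e^{i\pi j(n-2r)/s}.
\]
Since the left-hand side is real, I would take real parts on the right (equivalently, pair the $j$ and $s-j$ terms, for which $2\cos(\pi j/s)$ have equal magnitudes and the complex exponentials are conjugates). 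This produces exactly
\[
\MS(n,s,r) \;=\; \frac{1}{s}\sum_{j=0}^{s-1}\left(2\cos\tfrac{\pi j}{s}\right)^n \cos\tfrac{\pi(n-2r)j}{s},
\]
which is the first claim.

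For the specialization, I would plug in $s=4$ and $r = n/2 - r'$ (so $n-2r = 2r'$) and evaluate the four terms individually: $2\cos 0 = 2$, $2\cos(\pi/4)=\sqrt{2}$, $2\cos(\pi/2)=0$, $2\cos(3\pi/4)=-\sqrt 2$. The $j=2$ term vanishes (for $n\ge 1$), and using $\cos(3\pi r'/2)=\cos(\pi r'/2)$ for integer $r'$, together with the fact that $n$ is even in the intended application (so $(-\sqrt 2)^n = 2^{n/2}$), the $j=1$ and $j=3$ contributions combine into $2\cdot 2^{n/2}\cos(\pi r'/2)$. Dividing by $4$ and rewriting $2^n/4 = 2^{n-1}/2$ yields the claimed closed form. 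The only subtlety I anticipate is bookkeeping the parity of $n$ in the $j=3$ term; everything else is mechanical once the filter and half-angle steps are in place.
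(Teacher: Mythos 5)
The paper does not prove Lemma~\ref{lem:multisection}; it is quoted directly from the reference \cite{multisect}. So there is no in-paper argument to compare against, and the relevant question is simply whether your derivation is correct. Your roots-of-unity-filter proof of the general identity is the standard one and is fine: the orthogonality relation, the binomial collapse to $(1+\omega^j)^n$, the half-angle factorization $1+e^{i\theta}=2\cos(\theta/2)\,e^{i\theta/2}$, and taking real parts all go through cleanly.

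In the specialization, however, you explicitly restrict to $n$ even, and the parity bookkeeping you flag as the ``only subtlety'' is in fact a real gap: the paper also invokes the formula with $n$ odd (in the proof of Theorem~\ref{thm:rigWH}, e.g.\ $\MS(n,4,\tfrac n2 - \tfrac12)$), in which case $r' = \tfrac n2 - r$ is a \emph{half}-integer, not an integer. Your proposal does not cover this case, and the step ``$\cos(3\pi r'/2)=\cos(\pi r'/2)$'' you use is only valid for integer $r'$. The fix is short but must be said: for $n$ odd, $(-\sqrt2)^n = -2^{n/2}$, while for half-integer $r'$ one has $2\pi r' \equiv \pi \pmod{2\pi}$, hence $\cos(3\pi r'/2)=\cos(2\pi r' - \pi r'/2)=\cos(\pi - \pi r'/2)=-\cos(\pi r'/2)$. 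The two sign flips cancel, so the $j=1$ and $j=3$ terms again sum to $2\cdot 2^{n/2}\cos(\pi r'/2)$ and the claimed closed form holds for both parities. You should add this case so the lemma is justified in all the regimes where the paper actually uses it.
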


\subsection{Chernoff bound}
\begin{theorem}
\label{thm:Chernoff}
$X_i, i \in [n]$ are i.i.d. random variables distributed in $[0, 1]$ with expectation $\Ex[X_i] = \mu$. Let $X = \sum_{i=1}^n X_i$. Then for $\delta > 0$,
\[\Pr[X > (1 + \delta)\mu n] \le \left\{\begin{array}{cc}
\exp(-\delta^2 \mu n / 3), & \text{ if } \delta \le 1, \\
\exp(-\delta \mu n / 3), & \text{ if } \delta > 1. \\
\end{array}\right.\]
\end{theorem}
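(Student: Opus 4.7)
The plan is to prove this via the standard exponential moment (Markov) method. First, for any parameter $t > 0$, I apply Markov's inequality to the non-negative random variable $e^{tX}$ to obtain
\[\Pr[X > (1+\delta)\mu n] \;=\; \Pr[e^{tX} > e^{t(1+\delta)\mu n}] \;\le\; e^{-t(1+\delta)\mu n}\,\Ex[e^{tX}].\]
Since $X = \sum_i X_i$ with the $X_i$ independent, the moment generating function factors as $\Ex[e^{tX}] = \prod_{i=1}^n \Ex[e^{tX_i}]$, reducing the problem to bounding each $\Ex[e^{tX_i}]$.

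Next, I use convexity of $e^{tx}$ on $[0,1]$: the chord between $(0,1)$ and $(1,e^t)$ dominates the function, giving $e^{tx} \le 1 + (e^t - 1)x$ for all $x \in [0,1]$. Taking expectations yields $\Ex[e^{tX_i}] \le 1 + (e^t - 1)\mu \le \exp((e^t - 1)\mu)$ by the inequality $1+y \le e^y$. Combining with the product over $i$ and plugging back gives the key intermediate bound
\[\Pr[X > (1+\delta)\mu n] \;\le\; \exp\bigl(\mu n\,(e^t - 1 - t(1+\delta))\bigr).\]

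I then optimize the right-hand side over $t > 0$. Differentiating the exponent with respect to $t$ and setting the derivative to zero gives $e^t = 1+\delta$, i.e., $t = \ln(1+\delta)$ (which is positive since $\delta > 0$). Substituting back produces the canonical Chernoff form
\[\Pr[X > (1+\delta)\mu n] \;\le\; \exp\!\Bigl(\mu n\bigl[\delta - (1+\delta)\ln(1+\delta)\bigr]\Bigr).\]

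Finally, to match the two piecewise bounds in the theorem, it suffices to verify the elementary real-analysis inequalities $(1+\delta)\ln(1+\delta) - \delta \ge \delta^2/3$ for $0 < \delta \le 1$, and $(1+\delta)\ln(1+\delta) - \delta \ge \delta/3$ for $\delta > 1$. The first follows from the Taylor expansion $(1+\delta)\ln(1+\delta) = \delta + \delta^2/2 - \delta^3/6 + \cdots$ combined with a sign check of the remainder on $(0,1]$; alternatively, one can define $f(\delta) = (1+\delta)\ln(1+\delta) - \delta - \delta^2/3$, observe $f(0)=0$, and check $f'(\delta) \ge 0$ on $[0,1]$. The second is a straightforward monotonicity argument: both sides equal each other at $\delta = 1$ (up to the previous case), and the derivative of $(1+\delta)\ln(1+\delta) - \delta - \delta/3$ is $\ln(1+\delta) - 1/3$, which is positive for $\delta \ge 1$. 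I expect this last calculus verification to be the only step requiring any care; everything else is standard manipulation.
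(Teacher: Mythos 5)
The paper states Theorem~\ref{thm:Chernoff} as a standard background fact in the Preliminaries and gives no proof of it, so there is no in-paper argument to compare against. Your proof is the canonical exponential-moment derivation of the multiplicative Chernoff bound and is correct: the Markov step, the MGF factoring, the chord bound $e^{tx}\le 1+(e^t-1)x$ on $[0,1]$, the optimization at $t=\ln(1+\delta)$, and the two calculus inequalities (the first via $f(0)=0$, $f'(0)=0$, $f''>0$ on $(0,1/2)$ and $f'(1)=\ln 2 - 2/3>0$; the second via $g(1)=2\ln 2 - 4/3>0$ and $g'(\delta)=\ln(1+\delta)-1/3>0$ for $\delta\ge 1$) all check out. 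This is exactly the standard route, so there is nothing to flag.
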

\subsection{Tools for rigidity}
\begin{definition}
Suppose $\F$ is a field, and $q$ is a positive integer.
A matrix $M \in \F^{q \times q}$ is called an outer-1 matrix if for any $i, j \in \{0, 1, \cdots, q\}$,
$M[i, j] = 1$ whenever either $i = 0$ or $j = 0$. 
We similarly say $M$ is an outer-nonzero matrix if we have
$M[i, j] \ne 0$ for such $i, j$.
\end{definition}
\begin{lemma}[\cite{Alm21}, Lemma 2.8]
\label{lem:outer1}
For any field $\F$, positive integer $q$, and outer-nonzero matrix $M \in \F^{q \times q}$,
there are 
an outer-1 matrix $M' \in \F^{q \times q}$, and
two invertible diagonal matrices $D, D' \in \F^{q \times q}$,
such that $M = D \times M' \times D'$.
\end{lemma}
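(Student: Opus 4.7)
The plan is to construct $D$ and $D'$ explicitly from the first row and first column of $M$, then define $M' := D^{-1} \times M \times (D')^{-1}$ and verify that $M'$ is outer-1. The key observation is that if $M = D \times M' \times D'$ with $D, D'$ diagonal, then $M[i,j] = D[i,i] \cdot M'[i,j] \cdot D'[j,j]$, so demanding $M'[0,j] = 1$ and $M'[i,0] = 1$ forces $M[0,j] = D[0,0] \cdot D'[j,j]$ and $M[i,0] = D[i,i] \cdot D'[0,0]$. These equations determine $D$ and $D'$ up to a single free scalar, namely the choice of $D[0,0]$.

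Concretely, I would set $D[0,0] := 1$, then define $D'[j,j] := M[0,j]$ for every $j$ and $D[i,i] := M[i,0] / M[0,0]$ for every $i \ge 1$. By the outer-nonzero hypothesis, each of $M[0,0], M[0,j], M[i,0]$ is a nonzero element of $\F$, so all these diagonal entries are nonzero and both $D$ and $D'$ are invertible diagonal matrices.

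Next I would define $M' := D^{-1} \times M \times (D')^{-1}$, which immediately yields $M = D \times M' \times D'$. To check that $M'$ is outer-1, compute $M'[i,j] = M[i,j] \cdot M[0,0] / (M[i,0] \cdot M[0,j])$ for $i,j \geq 1$ (any value is fine here, we do not care what the interior entries of $M'$ look like), and then verify the boundary:
\[
M'[0,j] = \frac{M[0,j] \cdot 1}{1 \cdot M[0,j]} = 1, \qquad M'[i,0] = \frac{M[i,0] \cdot 1}{(M[i,0]/M[0,0]) \cdot M[0,0]} = 1.
\]

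There is essentially no obstacle: this is a routine diagonal-scaling normalization, and the only input used beyond linear algebra is that the ``outer'' entries of $M$ are nonzero so that the chosen diagonal entries of $D$ and $D'$ are invertible. The mild bookkeeping point is that the definition of outer-1 matrix in the statement uses the index set $\{0,1,\dots,q\}$, which is a typo for $\{0,1,\dots,q-1\}$ since $M \in \F^{q \times q}$; the proof above is indifferent to this and simply treats row/column index $0$ as the distinguished one.
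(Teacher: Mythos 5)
Your proof is correct and is the standard diagonal-scaling normalization, which is the same argument underlying Lemma 2.8 in \cite{Alm21} (the present paper only cites that result and does not reproduce a proof). You are also right that the index set $\{0,1,\dots,q\}$ in the definition of outer-1 matrix should read $\{0,1,\dots,q-1\}$.
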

\begin{lemma}[\cite{Alm21}, Lemma 2.10]
For any field $\F$, positive integers $q, r$, and matrices $A, B, D, D' \in \F^{q \times q}$ such that
$D$ and $D'$ are invertible diagonal matrices with $A = D \times B \times D'$,
we have that $\rig_A(r) = \rig_B(r)$.
\label{lem:diag}
\end{lemma}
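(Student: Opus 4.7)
The plan is to prove equality by establishing the two inequalities $\rig_A(r) \le \rig_B(r)$ and $\rig_B(r) \le \rig_A(r)$ separately. Both directions will use the same two basic facts: that rank is invariant under multiplication by invertible matrices on either side, and that sparsity is invariant under multiplication by an invertible diagonal matrix on either side. The second fact is the key observation: an invertible diagonal $D \in \F^{q \times q}$ has only nonzero entries along its diagonal, so multiplying a matrix $C$ by $D$ on the left scales the $i$-th row of $C$ by a nonzero scalar, and similarly multiplying by $D'$ on the right scales each column by a nonzero scalar, so the nonzero pattern (and hence $\nnz(\cdot)$) is preserved.

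For $\rig_A(r) \le \rig_B(r)$, I would start from an optimal rigidity decomposition of $B$: write $B = L + S$ with $\rank(L) \le r$ and $\nnz(S) = \rig_B(r)$. Multiplying through by $D$ on the left and $D'$ on the right yields $A = D \times B \times D' = (D \times L \times D') + (D \times S \times D')$. Here $D \times L \times D'$ has rank at most $r$ because $D, D'$ are invertible, and $\nnz(D \times S \times D') = \nnz(S) = \rig_B(r)$ by the observation above. Hence this is a valid rigidity decomposition for $A$, giving $\rig_A(r) \le \rig_B(r)$.

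For the reverse inequality, I would simply invoke symmetry: since $D$ and $D'$ are invertible, $D^{-1}$ and $D'^{-1}$ are also invertible diagonal matrices, and $B = D^{-1} \times A \times D'^{-1}$. Applying the argument of the previous paragraph with the roles of $A$ and $B$ swapped (and $D, D'$ replaced by $D^{-1}, D'^{-1}$) yields $\rig_B(r) \le \rig_A(r)$. Combining both directions gives the desired equality.

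There is no real obstacle here; the only step that requires a moment of care is confirming $\nnz(D \times S \times D') = \nnz(S)$, which follows because the $(i,j)$ entry of $D \times S \times D'$ is $D[i,i] \cdot S[i,j] \cdot D'[j,j]$, a product in which $D[i,i]$ and $D'[j,j]$ are both nonzero, so this entry vanishes if and only if $S[i,j]$ does.
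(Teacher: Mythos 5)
Your proof is correct. Note that the paper itself does not prove this lemma; it cites it directly from \cite{Alm21} (Lemma 2.10 there), so there is no in-paper argument to compare against. Your argument is the natural one: an optimal rigidity decomposition $B = L + S$ is conjugated to a decomposition $A = DLD' + DSD'$ with $\rank(DLD') = \rank(L)$ (invertibility) and $\nnz(DSD') = \nnz(S)$ (entrywise scaling by nonzero diagonal entries), giving $\rig_A(r) \le \rig_B(r)$, and the reverse follows by applying the same argument to $B = D^{-1} A D'^{-1}$ since $D^{-1}, D'^{-1}$ are also invertible diagonal. This is presumably the same argument given in \cite{Alm21}, and I see no gaps.
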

\begin{theorem}[interpolation polynomial, \cite{AW15}]
\label{thm:interpolation}
For any field $\F$, positive integers $n, r, k$ such that $n \ge r+k$, 
and $c_1, \cdots, c_r \in \F$, 
there exists a multivariate polynomial $p : \{0, 1\}^n \rightarrow \F$ of degree $(r-1)$ with coefficients in $\F$ such that for all $i \in [r]$ and $x \in \{0, 1\}^n$ with Hamming weight $|x| = k+i$, $p(x) = c_i$.
\end{theorem}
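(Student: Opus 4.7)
My plan is to construct $p$ explicitly in the basis of elementary symmetric polynomials. Define $e_d(x) = \sum_{S \subseteq [n], |S|=d} \prod_{i \in S} x_i$ for $d = 0, 1, \ldots, r-1$. Since $\deg(e_d) = d \le r-1$, any $\F$-linear combination $p(x) = \sum_{d=0}^{r-1} a_d \cdot e_d(x)$ is already a multivariate polynomial of degree at most $r-1$. The key property I would exploit is that for $x \in \{0,1\}^n$, $e_d(x) = \binom{|x|}{d}$. The requirement $p(x) = c_i$ whenever $|x| = k+i$ therefore reduces to the $r$ scalar equations $\sum_{d=0}^{r-1} a_d \binom{k+i}{d} = c_i$ for $i = 1, \ldots, r$.

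Let $B \in \F^{r \times r}$ denote the coefficient matrix with $B[i,d] = \binom{k+i}{d}$ for $i \in \{1, \ldots, r\}$, $d \in \{0, \ldots, r-1\}$. The entire proof then reduces to showing $B$ is invertible over $\F$, which I would establish by showing $\det(B) = 1$ as an integer. I expect this to be the main technical step, but it follows from a clean inductive row reduction using Pascal's identity $\binom{k+i+1}{d} - \binom{k+i}{d} = \binom{k+i}{d-1}$. Performing the row operations $B_i \leftarrow B_i - B_{i-1}$ for $i = r, r-1, \ldots, 2$ in descending order (so that each operation uses the original preceding row) turns the first column into $(1, 0, \ldots, 0)^T$ and turns the other rows into shifted binomial rows. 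Cofactor expansion along the first column then exhibits an $(r-1) \times (r-1)$ minor of exactly the same form as $B$ but with $r$ replaced by $r-1$ and $k$ replaced by $k$ (the indices conveniently realign), so induction on $r$ (with trivial base case $B = [1]$ at $r=1$) gives $\det(B) = 1$.

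Because $\det(B) = 1$ is an integer identity, reducing modulo the characteristic of $\F$ still yields $\det(B) = 1_\F$, so $B$ is invertible over \emph{every} field. Hence the linear system $B \mathbf{a} = \mathbf{c}$ has a unique solution $\mathbf{a} = (a_0, \ldots, a_{r-1}) \in \F^r$, and the polynomial $p(x) = \sum_{d=0}^{r-1} a_d e_d(x) \in \F[x_1, \ldots, x_n]$ of degree at most $r-1$ satisfies the required evaluation constraints. The hypothesis $n \ge r+k$ is used only to guarantee that each Hamming weight $k+i$ with $i \in [r]$ is actually attained by some $x \in \{0,1\}^n$, so that the prescribed conditions are genuine constraints rather than vacuous ones. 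Note that a tempting alternative, univariate Lagrange interpolation in $|x|$ at the nodes $k+1, \ldots, k+r$, fails in small characteristic because these nodes can coincide in $\F$; the elementary symmetric basis sidesteps this issue cleanly since the integrality of $\det(B) = 1$ is characteristic-independent.
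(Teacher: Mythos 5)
Your proof is correct. The paper does not actually prove Theorem~\ref{thm:interpolation}; it cites \cite{AW15} and remarks that the integer-coefficient argument there carries over to any field, so your write-up fills the gap the authors chose to leave. Your route --- expand $p$ in the elementary symmetric basis, use $e_d(x) = \binom{|x|}{d}$ on $\{0,1\}^n$ to reduce to the $r \times r$ binomial system $B[i,d] = \binom{k+i}{d}$, and show $\det B = 1$ over $\Z$ via Pascal's rule and descending row reduction so that the system is solvable over every $\F$ --- is exactly the characteristic-independent mechanism that the paper's remark alludes to, and the cofactor step does recover the $(r-1)\times(r-1)$ instance with the same $k$, so the induction closes. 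Your observation that univariate Lagrange interpolation in $|x|$ would fail in small characteristic, and that integrality of $\det B$ sidesteps this, is precisely the right thing to flag as the nontrivial point. This is essentially the intended argument, just spelled out in full.
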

\begin{remark}
The original statement in \cite{AW15} focuses only on the existence of interpolation polynomial with integer coefficients, assuming $c_1, \cdots, c_r \in \Z$. But as their techniques naturally generalize to any additive abelian group, and thus any field $\F$, we opt to omit a (nearly) verbatim proof.
\end{remark}
\begin{lemma}[polynomial method]
\label{lem：polymethod}
For any field $\F$, positive integers $n, s, r$, and matrix $M \in \F^{2^n \times 2^n}$,
if there exists polynomial $f: \{0, 1\}^n \times \{0, 1\}^n \rightarrow \F$ such that,
\begin{itemize}
\item $f(x, y) = M[x, y]$ for $s$ many pairs $(x, y) \in \{0, 1\}^n \times \{0, 1\}^n$, and 
\item the expansion of $f$ contains $r$ monomials,
\end{itemize}
then $\rig_{M}(r) \le 2^{2n} - s$.
\end{lemma}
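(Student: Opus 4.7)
The plan is to construct a rigidity decomposition $M = F + E$ directly from the polynomial $f$. I would let $F \in \F^{2^n \times 2^n}$ be the evaluation matrix defined by $F[x,y] = f(x,y)$, and set $E = M - F$. By the definition of $\rig_M(r)$, it then suffices to verify the two bounds $\rank(F) \le r$ and $\nnz(E) \le 2^{2n} - s$.

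The key step is to establish that each monomial of $f$ contributes a rank-$1$ matrix to $F$. Any monomial over the variables $x_1, \ldots, x_n, y_1, \ldots, y_n$ has the form $m(x,y) = \prod_{i \in S} x_i \cdot \prod_{j \in T} y_j$ for some $S, T \subseteq [n]$, since the variables split cleanly into an $x$-group and a $y$-group. Evaluating $m$ on $\{0,1\}^n \times \{0,1\}^n$ gives the matrix $u_S v_T^{\top}$, where $u_S \in \F^{2^n}$ has entries $u_S[x] = \prod_{i \in S} x_i$ and $v_T \in \F^{2^n}$ has entries $v_T[y] = \prod_{j \in T} y_j$; this is an outer product and therefore has rank at most $1$. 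Summing the contributions of the $r$ monomials expressing $f$ shows that $F$ is a sum of $r$ rank-$1$ matrices, so $\rank(F) \le r$.

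For the sparsity bound, observe that $E[x,y] = M[x,y] - f(x,y)$ vanishes at each of the $s$ pairs where $f$ agrees with $M$, so $\nnz(E) \le 2^{2n} - s$. Combining both estimates with the decomposition $M = F + E$ immediately yields $\rig_M(r) \le 2^{2n} - s$ from the definition. I do not anticipate any serious obstacle here; the proof is essentially a one-step unpacking of definitions, and the only conceptual ingredient is the rank-$1$ structure of each monomial's evaluation matrix, which in turn follows from the clean separability of $x$-variables from $y$-variables in every monomial.
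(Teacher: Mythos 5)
Your proof is correct and follows essentially the same route as the paper's: decompose $M$ as the evaluation matrix of $f$ plus a sparse remainder, bound the rank by observing each monomial factors into an $x$-part times a $y$-part (hence contributes a rank-$1$ outer product), and bound the sparsity by the number of disagreeing entries. The only cosmetic difference is that the paper writes the monomial factorization abstractly as $g_i(x)\cdot h_i(y)$ rather than spelling out index sets $S,T$; note that one should allow an arbitrary coefficient and arbitrary exponents in the monomial, but this does not affect the rank-$1$ argument.
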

\begin{proof}
A polynomial $f$ containing $r$ monomials can be written as 
$f(x, y) = \sum_{i=1}^r g_i(x) \cdot h_i(y)$, where $g_i(x), h_i(y)$ are respectively the $x$-part and $y$-part of the $i$-th monomial.
Then consider matrix $M_f \in \F^{2^n \times 2^n}$ defined by $M_f[x, y] = f(x, y)$.
If we construct matrices $U \in \F^{2^n \times r}, V \in \F^{r \times 2^n}$
with $U[x, i] = g_i(x), V[i, y] = h_i(y)$ for $x, y \in \{0, 1\}^n, i \in [r]$,
then $M_f$ can be alternatively computed by $M_f = U \times V$.
Therefore, writing $M$ as $M = M_f + (M - M_f)$, we simultaneously have
$\rank(M_f) = \rank(U \times V) \le r$ and 
$\nnz(M - M_f) \le 2^{2n} - \#[M[x, y] = f(x, y)] = 2^{2n} - s$,
which by the definition of rigidity implies $\rig_M(r) \le 2^{2n} - s$.
\end{proof}

\section{Main Theorem}
\label{sec:main}
In this section, we will give the full proof of Theorem \ref{thm:mainintro}, which is restated as follows.
\paragraph{Setup}
Suppose $\F$ is a field, $q, J$ are positive integers, and
$M \in \F^{q \times q}$ is a symmetric matrix which has the decomposition
\begin{align} \label{eq:gendecomp} M = \sum_{j=1}^J U_j \times V_j. \end{align}
where for every $j \in [J]$, matrices $U_j \in \F^{q \times r_i}, V_j \in \F^{r_i \times q}$
for some positive integer $r_i$.\footnote{
By partitioning the matrices by rows and columns, 
we can always convert decomposition (\ref{eq:gendecomp}) to be of the same form as decomposition (\ref{eq:fixeddecomp}),
i.e. with $r_i = 1, ~\forall i \in [J]$.
It can be verified that they generate depth-2 circuits of the same size
via Theorem \ref{thm:mainintro}.}

Some parameters associated with this decomposition
are defined as follows.
\footnote{$I$, if not specified otherwise, is the identity matrix of size $q \times q$.}
\[G := \max \left\{\ln\left(\nnz(M) / \nnz(I)\right), \max_{j \in [J]} \left|\ln\left(\nnz(U_j)/\nnz(V_j)\right)\right|\right\},\]
\[E := \frac{\sum_{j=1}^J  \ln (\nnz(U_j)/ \nnz(V_j)) \cdot \sqrt{\nnz(U_j) \cdot \nnz(V_j)}}{\sum_{j=1}^J \sqrt{\nnz(U_j) \cdot \nnz(V_j)}}.\]

\begin{definition}
Let
\[\alpha_1 = \ln \left(\sum_{j=1}^J \sqrt{\nnz(U_j) \cdot \nnz(V_j)}\right), \quad
\alpha_2 = \ln \left(\sqrt{\nnz(M) \cdot \nnz(I)}\right), \]
\[\beta = \frac {\ln(\nnz(M) / \nnz(I))}{6G} \cdot \min 
\left\{1, \frac{-4E}{E + G}\right\}.\]
We say decomposition $M = \sum_{j=1}^J U_j \times V_j$ is imbalanced if $\beta > \alpha_2 - \alpha_1$.
\end{definition}
\begin{theorem}
\label{thm:main}
For any positive integer $n$, if $M$ has the imbalanced decomposition above, then $M^{\otimes n}$ has a depth-2 linear circuit of size
$\exp(\alpha_1)^{n + o(n)}$.
\end{theorem}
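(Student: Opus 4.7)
The plan is to execute the iterative construction of Section~\ref{subsec:construction}, then analyze its size via the random process of Section~\ref{subsec:size}. Starting from $F_0 = \{(1,1)\}$ representing $M^{\otimes 0}$, I would extend each pair $(A_s, B_s) \in F_{k-1}$ to pairs in $F_k$ by applying one of the four update functions $\CU, \CU', \CV, \CV'$ according to (\ref{equ:conditions}): soft-balancing updates (from (\ref{eq:gendecomp}) or its transpose) while the pair is only mildly unbalanced, and hard-balancing updates (the trivial splits $I \times M$ or $M \times I$) once the pair is so unbalanced that all remaining $n - k$ steps are needed to restore balance by step $n$. The threshold $\Gamma_k$ for the soft-to-hard transition is chosen precisely so that hard-balancing, which shifts $\ln(\nnz(A)/\nnz(B))$ by $\pm \ln(\nnz(M)/\nnz(I))$ per step toward zero, exactly returns the pair to near-balance by step $n$. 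With this discipline, every pair $(A, B) \in F_n$ has $\nnz(A) \asymp \nnz(B)$ up to a constant factor, so the resulting depth-$2$ circuit has size $O\bigl(\sum_{(A,B) \in F_n} \sqrt{\nnz(A) \cdot \nnz(B)}\bigr)$.

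I would then interpret this sum as an expectation over the random process of Section~\ref{subsec:size}. Applying the reweighting trick, namely sampling part $j \in [J]$ with probability $K_j / \sum_{j'} K_{j'}$ for $K_j = \sqrt{\nnz(U_j) \cdot \nnz(V_j)}$ and rescaling outcomes by $1/K_j$, pulls out a factor $\sum_j K_j = \exp(\alpha_1)$ per soft-balancing step while keeping the product $S_k[1] \cdot S_k[2]$ invariant along every trajectory. The two-dimensional state thus collapses to a single coordinate $L_k = \ln T_k$, which undergoes a one-dimensional random walk with mean-reverting i.i.d.\ increments during soft-balancing (drift $E \le 0$ when $L_{k-1} \ge 0$, drift $-E$ otherwise) and deterministic sign-correcting steps of size $\ln(\nnz(M)/\nnz(I))$ during hard-balancing. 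Using $\nnz(A) + \nnz(B) = 2\sqrt{\nnz(A)\nnz(B)} \cdot \cosh(L/2)$, the bound on the circuit size becomes $\exp(\alpha_1)^n \cdot \Ex[\cosh(L_n/2)]$, plus a correction for the hard-balancing steps which contribute $\exp(\alpha_2)$ rather than $\exp(\alpha_1)$ per step.

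The heart of the argument, and the main obstacle, is showing that this expectation is $\exp(o(n))$ whenever the decomposition is imbalanced, i.e.\ $\beta > \alpha_2 - \alpha_1$. I would split the expectation by the first step $K$ at which a trajectory transitions into hard-balancing (setting $K = n+1$ if it never does). Trajectories remaining in soft-balancing have $L_n$ concentrated near $0$ by Chernoff (Theorem~\ref{thm:Chernoff}) applied to the mean-reverting walk, so $\cosh(L_n/2)$ is sub-exponential. Trajectories transitioning at step $K < n$ are suppressed by a large-deviation tail of order $\exp(-\Omega(\Gamma_K^2/K))$, while incurring an additional multiplicative cost of $\exp((\alpha_2 - \alpha_1)(n - K))$ from the subsequent hard-balancing steps. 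The quantitative choice $\beta = \frac{\ln(\nnz(M)/\nnz(I))}{6G} \cdot \min\{1, -4E/(E+G)\}$ is tuned so that the tail probability dominates this additional cost precisely when $\beta > \alpha_2 - \alpha_1$, yielding a total contribution of $\exp(o(n))$. Carrying out these estimates with the required precision — making the choice of $\Gamma_k$ consistent across all $k$, invoking the Taylor bounds of Lemma~\ref{lem:entropy} for the binomial coefficients inside the Chernoff tail, and ensuring the $o(n)$ terms really do vanish — is where the bulk of the technical work lies.
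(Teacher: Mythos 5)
Your high-level plan matches the paper's proof exactly: iterative construction via soft/hard balancing updates, conversion of the circuit size to $\Ex[\exp(S_n)]$ for an associated random walk, the reweighting trick $\pi_j = K_j/\sum_{j'}K_{j'}$ to collapse the two-dimensional state to the scalar log-ratio, deterministic control of the final imbalance from the stopping rule $\Gamma_k$, and a split of the expectation by the time $K$ at which a trajectory enters hard-balancing with a Chernoff bound on $\Pr[H=K]$. The gap is in the form of the tail bound, which is precisely the step on which the whole theorem hinges. You claim the probability of transitioning at step $K$ is $\exp(-\Omega(\Gamma_K^2/K))$, which is the zero-drift (Gaussian) large-deviation rate. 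Since $\Gamma_K = (n-K)\gamma(M,I) + 2G$, this is $\exp(-\Omega((n-K)^2/K))$. But the multiplicative cost of entering hard-balancing at step $K$ is $\exp((\alpha_2-\alpha_1)(n-K))$, which is linear in $n-K$ in the exponent; the quadratic-over-$K$ tail does not dominate it when $K$ is a constant fraction of $n$ close to $n$ (roughly $K > n/(1+c)$ for a constant $c$ depending on $\gamma(M,I)$ and $\alpha_2-\alpha_1$), so the sum over $K$ would still be exponential in $n$ and the argument would not close.

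What the paper actually establishes, and what you need, is a tail bound linear in $\Gamma_K$: $\Pr[H=k] \le \exp(-\beta(n-k))$. This follows from exploiting the negative drift $E \le 0$ of the mean-reverting increments, not from a drift-free deviation estimate. In the paper's Chernoff calculation the deviation parameter is $\delta = A/m + B$ with $A \propto \Gamma_k$ and $B = -E/(E+G) > 0$ a positive constant coming from the drift; then $\min\{\delta m, \delta^2 m\} \ge \min\{A, 4AB\}$ is linear in $A$, giving a bound $\exp(-\beta(n-k))$ for the specific $\beta$ appearing in Definition~\ref{def:imba}. You do mention the drift qualitatively, but the stated rate discards it. Two smaller inaccuracies: Lemma~\ref{lem:entropy} plays no role in this proof (it is used in the polynomial-method rigidity bound of Theorem~\ref{thm:rigGKron}), and the final-state bound $|T_n| < 4G$ for trajectories remaining in soft-balancing is deterministic, from $|T_k| < \Gamma_k$ with $\Gamma_n = 2G$, not a concentration statement; likewise the correct conserved structure is that $S_k[1]+S_k[2]$ increases by exactly $2\alpha_1$ per soft step and $2\alpha_2$ per hard step, rather than any product being invariant.
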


In the following discussion, we will first present an algorithm for producing a linear circuit for $M^{\otimes n}$, then show that it has the desired size.
For matrices $A$ and $B$, define $\gamma(A, B)= \ln(\nnz(A) / \nnz(B)) \in \R$,
and let $\CU(A, B)$ and $\CV(A, B)$ be the following multisets of pairs of matrices.
\begin{align*}
\CU(A, B) &= \left\{\begin{array}{ll} 
\{(A \otimes U_j, B \otimes V_j) ~|~ j\in [J]\}, & \text{ if } \gamma(A, B) \ge 0, \\
\{(A \otimes V_j^T, B \otimes U_j^T) ~|~ j \in [J]\}, & \text{ if } \gamma(A, B) < 0.
\end{array}\right.\\
\CV(A, B) &= \left\{\begin{array}{ll} 
\{(A \otimes I, B \otimes M) \}, & \text{ if } \gamma(A, B) \ge 0, \\
\{(A \otimes M, B \otimes I) \}, & \text{ if } \gamma(A, B) < 0.
\end{array}\right.
\end{align*}
With the definition of $\CU(A, B), \CV(A, B)$, we inductively define series of collections of matrix pairs $\CF_k, \CG_k,$ $\CH_k, k= 0, 1, 2, \cdots, n$ as follows.
\begin{itemize}
\item $\CF_0 = \{(I_1, I_1)\}$, $\CG_0 = \varnothing$.
\item At step $k \in [n]$, let $\Gamma_k = (n-k) \cdot \gamma(M, I) + 2G$.
\begin{align*}
\CH_k &:= \bigcup_{(A, B) \in \CF_{k-1}} \CU(A, B), \\
\CF_k &:= \{ (A', B') ~|~ (A', B') \in \CH_k, |\gamma(A', B')| < \Gamma_k\}, \\
\CG_k &:= \left[\bigcup_{(A, B) \in \CG_{k-1}} \CV(A, B)\right] \cup (\CH_k \backslash \CF_k)
\end{align*}
\end{itemize}

Then we claim that the matrix pairs in $\CF_n \cup \CG_n$, if multiplied pairwise, correspond to a decomposition of $M^{\otimes n}$.
More formally, we have the following proposition.
\begin{proposition}
Let $\CH = \CF_n \cup \CG_n$, then $\sum_{(A,B) \in \CH} (A \times B) = M^{\otimes n}$.
\end{proposition}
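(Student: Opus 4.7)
The plan is a straightforward induction on $k$ establishing the stronger invariant
\[
\sum_{(A,B) \in \CF_k \cup \CG_k} A \times B \;=\; M^{\otimes k}
\]
for every $0 \le k \le n$; the proposition is the case $k = n$. The base case $k = 0$ is immediate from $\CF_0 = \{(I_1, I_1)\}$, $\CG_0 = \varnothing$, and the convention $M^{\otimes 0} = I_1$, so the left-hand side is $I_1 \times I_1 = I_1$.

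The inductive step rests on a single observation: both $\CU$ and $\CV$, when their (multi)sets of pairs are multiplied and summed, produce exactly $(A \times B) \otimes M$. For $\CU(A, B)$ with $\gamma(A, B) \ge 0$, the mixed product property gives
\[
\sum_{j=1}^J (A \otimes U_j) \times (B \otimes V_j) \;=\; (A \times B) \otimes \Bigl(\sum_{j=1}^J U_j \times V_j\Bigr) \;=\; (A \times B) \otimes M.
\]
When $\gamma(A, B) < 0$, the analogous computation yields $(A \times B) \otimes M^T$, which equals $(A \times B) \otimes M$ by the hypothesis that $M$ is symmetric. For $\CV(A, B)$, the single pair contributes either $(A \otimes I) \times (B \otimes M)$ or $(A \otimes M) \times (B \otimes I)$, both of which equal $(A \times B) \otimes M$ directly from the mixed product property applied to the trivial decompositions $M = I \times M = M \times I$.

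To close the induction I would rewrite the multiset union as $\CF_k \cup \CG_k = \CH_k \cup \bigcup_{(A,B) \in \CG_{k-1}} \CV(A, B)$, using that $\CF_k$ and $\CH_k \setminus \CF_k$ partition $\CH_k$ with the second piece absorbed into $\CG_k$. Summing and applying the observation above to each group,
\[
\sum_{(A', B') \in \CF_k \cup \CG_k} A' \times B' \;=\; \sum_{(A,B) \in \CF_{k-1}} (A \times B) \otimes M \;+\; \sum_{(A,B) \in \CG_{k-1}} (A \times B) \otimes M,
\]
and by bilinearity of $\otimes$ together with the inductive hypothesis this equals $M^{\otimes (k-1)} \otimes M = M^{\otimes k}$, as desired.

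I do not anticipate any real obstacle: the proof is bookkeeping with three ingredients—the mixed product property, the symmetry of $M$ (needed only for the $\gamma < 0$ branches), and the fact that the threshold $\Gamma_k$ merely reassigns each pair of $\CH_k$ to either $\CF_k$ or $\CG_k$ without discarding any pair or introducing duplicates, so the multiset accounting is automatic. The only mild subtlety is remembering to treat all unions as multiset unions, so that repeated indices $j$ and repeated pairs produced along different branches of the recursion are all tallied with the correct multiplicity.
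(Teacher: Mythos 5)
Your proposal is correct and follows essentially the same route as the paper's proof: induction on $k$ with the key observation that both $\CU(A,B)$ and $\CV(A,B)$, once summed, yield $(A\times B)\otimes M$ (using symmetry of $M$ for the $\gamma<0$ branch of $\CU$), and the multiset bookkeeping $\CF_k \cup \CG_k = \CH_k \cup \bigcup_{(A,B)\in\CG_{k-1}}\CV(A,B)$ to close the inductive step.
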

\begin{proof}
As $M$ is symmetric, we simultaneously have 
$M = \sum_{j\in [J]} U_j \times V_j$,
and $M = \sum_{j \in [J]} V_j^T \times U_j^T$.  
Hence
\begin{alignat*}{3}
\sum_{(A', B') \in \CU(A, B)} (A' \times B')
&= \left\{\begin{array}{ll} 
(A \times B) \otimes \sum_{j\in[J]}(U_j \times V_j), & \text{ if } \gamma(A, B) \ge 0, \\
(A \times B) \otimes \sum_{j\in[J]}(V_j^T \times U_j^T), & \text{ if } \gamma(A, B) < 0
\end{array}\right.
&= (A \times B) \otimes M, \\
\sum_{(A', B') \in \CV(A, B)} (A' \times B')
&= \left\{\begin{array}{ll} 
(A \times B) \otimes (I \times M), & \text{ if } \gamma(A, B) \ge 0, \\
(A \times B) \otimes (M \times I), & \text{ if } \gamma(A, B) < 0
\end{array}\right.
&= (A \times B) \otimes M.
\end{alignat*}

Then for all $k \in [n]$,
\begin{align*}
\sum_{(A', B') \in \CH_k} (A' \times B')
&= \sum_{(A, B) \in \CF_{k-1}} \sum_{(A', B') \in \CU(A, B)} (A' \times B')
= \left[\sum_{(A, B) \in \CF_{k-1}}(A \times B)\right] \otimes M. \\
\sum_{(A', B') \in \CG_k} (A' \times B') &= 
\sum_{(A, B) \in \CG_{k-1}} \sum_{(A', B') \in \CV(A, B)} (A' \times B')
+ \sum_{(A', B') \in \CH_k \backslash \CF_k} (A' \times B') \\
&= \left[\sum_{(A, B) \in \CG_{k-1}}(A \times B)\right] \otimes M
+ \sum_{(A', B') \in \CH_k} (A' \times B') - \sum_{(A', B') \in \CF_k} (A' \times B').
\end{align*}
\[\Rightarrow \sum_{(A', B') \in \CG_k \cup \CF_k} (A' \times B')
= \left[\sum_{(A, B) \in \CG_{k-1} \cup \CF_{k-1}}(A \times B)\right] \otimes M.\]

By induction, we immediately have 
\[ \sum_{(A, B) \in \CG_n \cup \CF_n} (A \times B)
= \left[\sum_{(A, B) \in \CG_0 \cup \CF_0}(A \times B)\right] \otimes M^{\otimes n}
= M^{\otimes n}.\]
\end{proof}
Based on the claim above, one can construct the following depth-2 linear circuit for $M^{\otimes n}$.
\[M^{\otimes n} = \left(\begin{array}{c|c|c|c} A_1 & A_2 & \cdots & A_{|\CH|} \end{array}\right) \times
\left(\begin{array}{c} B_1 \\ \hline B_2 \\ \hline \vdots \\ \hline B_{|\CH|}\end{array}\right),\]
where $(A_i, B_i), i = 1, 2, \cdots, |\CH|$ are the properly indexed elements in $\CH$.
This circuit will be referred to in future as $\CC(M^{\otimes n}) = \CC(M^{\otimes n})_{A} \times \CC(M^{\otimes n})_{B}$ for convenience.

Let $\size(\CC(M^{\otimes n})) = \nnz(\CC(M^{\otimes n})_{A}) + \nnz(\CC(M^{\otimes n})_{B})$ denote the size of the circuit above. In the following subsections we shall give bounds on this quantity dependent on the initial decomposition, and thus complete Theorem \ref{thm:main}.

\subsection{An associated random walk}
To facilitate further discussion, we first show a connection between $\size(\CC(M^{\otimes n}))$ and a related two-stage randomized algorithm.
We define a sequence $S_k \in \R^2, k=0, 1, \cdots, n$ of two-dimensional random vectors. Initially $S_0 = (0, 0)$ with probability 1. 
Define function $t : \R^2 \rightarrow \R$ by $t(s) = s[1] - s[2]$,
and define the random variables $T_k = t(S_k), k = 0, 1, \cdots, n$.

Let $X, X', Y, Y'$ be probability distributions on $\R^2$ with
\[\Pr[X = (\ln J + \ln \nnz(U_j), \ln J + \ln \nnz(V_j))] = 1/J, j \in [J],\]
\[\Pr[X' = (\ln J + \ln \nnz(V_j^T), \ln J + \ln \nnz(U_j^T))] = 1/J, j \in [J],\]
\[\Pr[Y = (\ln \nnz(I), \ln \nnz(M))] = 1,\]
\[\Pr[Y' = (\ln \nnz(M), \ln \nnz(I))] = 1\]
(for simplicity, cases with the same value are not merged).
Then we have randomized algorithm (I) as follows.

\begin{center}
\fbox{\parbox{0.8\textwidth}{
\textbf{Stage 0}: \\
Draw i.i.d. examples $X_k \sim X, X_k' \sim X', Y_k \sim Y, Y_k' \sim Y', k = 1, \cdots, n$.

\noindent \textbf{Stage 1}:

For $k = 1, 2, \cdots, n$:

~~1. If $T_{k-1} \ge 0$, $S_k = S_{k-1} + X_k$. Otherwise, $S_k = S_{k-1} + X_k'$.

~~2. If $|T_k| \ge \Gamma_k$ (definition of $\Gamma_k$ as before), end Stage 1.

\noindent \textbf{Stage 2}:
Let $k = K$ be the last iteration in Stage 1. 

For $k = K+1, K+2, \cdots, n$:

~~If $T_{k-1} \ge 0$, $S_k = S_{k-1} + Y_k$. Otherwise, $S_k = S_{k-1} + Y_k'$.
}}
(I)
\end{center}
The connection between $\size(\CC(M^{\otimes n}))$ and algorithm (I) is captured by the following lemma. 
\begin{lemma}
\label{lem:SimpleRP}
Let $S_k, k=0, 1, \cdots, n$ be the random vectors in randomized algorithm (I). Then
\[\Ex[\exp(S_n)] 
= (\nnz(\CC(M^{\otimes n})_{A}), \nnz(\CC(M^{\otimes n})_{B})).\]
\end{lemma}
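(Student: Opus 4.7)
The plan is to establish a weight-preserving bijection between sample trajectories of algorithm (I) and the pairs of matrices in $\CF_n \cup \CG_n$ that together compose the depth-2 circuit $\CC(M^{\otimes n})$. A sample trajectory of (I) is parametrized, at each Stage~1 step, by which of the $J$ values of $j$ is returned by the draw $X_k$ or $X_k'$ (Stage~2 uses $Y, Y'$, which are point masses). Map such a trajectory to the matrix pair obtained by walking down the tree of updates used to build $\CF_k, \CG_k$: at Stage~1 step $k$, apply the $j$-th branch of $\CU$ or $\CU'$ according to the sign of $T_{k-1}$; at Stage~2, apply $\CV$ or $\CV'$. The $K$ at which the trajectory transitions from Stage~1 to Stage~2 will correspond exactly to the $k$ at which the corresponding pair is moved from $\CH_k$ into $\CG_k$ (i.e., lies in $\CH_k \setminus \CF_k$).

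The key technical step is a single invariant that ties the algorithm's state to the circuit-construction tree. I will prove by induction on $k$ that, conditioned on a fixed partial trajectory whose corresponding matrix pair is $(A_k, B_k)$ and which has so far executed $k_1$ Stage~1 steps, we have
\[\exp(S_k) = J^{k_1} \cdot (\nnz(A_k), \nnz(B_k)).\]
The inductive step is immediate from the definitions of $X, X', Y, Y'$: a Stage~1 draw selecting branch $j$ shifts $S$ by $(\ln J + \ln \nnz(U_j), \ln J + \ln \nnz(V_j))$, and exponentiating gives the multiplicative factor $(J \nnz(U_j), J \nnz(V_j))$, which matches $J$ times the multiplicative effect of $\otimes U_j,\ \otimes V_j$ on $(\nnz(A), \nnz(B))$; the Stage~2 increment has no $\ln J$ term since $\CV$ has a single branch. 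As a corollary, $T_k = S_k[1] - S_k[2] = \ln(\nnz(A_k)/\nnz(B_k)) = \gamma(A_k, B_k)$, so the sign-of-$T_{k-1}$ test in (I) selects the same branch of $\CU$ or $\CV$ as the sign-of-$\gamma$ test in the construction, and the Stage~1 stopping condition $|T_k| \ge \Gamma_k$ coincides with the rule that sends a pair into $\CG_k$. Hence the bijection is well-defined.

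To conclude, each trajectory ending with exactly $K$ Stage~1 steps has probability $J^{-K}$ (each Stage~1 draw contributes a factor of $1/J$; the $Y, Y'$ draws and any $X, X'$ draws not used by the control flow integrate out to $1$), and by the invariant contributes $\exp(S_n) = J^K \cdot (\nnz(A_n), \nnz(B_n))$ on the event that this trajectory is realized. The $J^K$ factors cancel and summation over trajectories, which by the bijection is summation over $\CF_n \cup \CG_n$, gives
\[\Ex[\exp(S_n)] = \sum_{(A, B) \in \CF_n \cup \CG_n} (\nnz(A), \nnz(B)) = (\nnz(\CC(M^{\otimes n})_A), \nnz(\CC(M^{\otimes n})_B)),\]
by definition of $\CC(M^{\otimes n})$. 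The only delicate step is the bookkeeping that verifies the bijection: one must carefully check that the termination test in algorithm (I), expressed in terms of $T_k$, agrees step-by-step with the $\CH_k \setminus \CF_k$ rule in the construction, and that draws never consulted by the control flow contribute a factor of $1$ to the expectation (which is immediate since the sample-space factorizes and those coordinates of $S_n$ are never read). Once these points are checked, everything else reduces to multiplicativity of $\nnz$ under Kronecker products and linearity of expectation.
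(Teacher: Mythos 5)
Your proof is correct and rests on the same core observations as the paper's: the state $S_k$ tracks $(\ln J)\cdot(\text{number of Stage 1 steps}) + (\ln\nnz(A_k), \ln\nnz(B_k))$, hence $T_k = \gamma(A_k,B_k)$, so the control flow of algorithm (I) (sign test and stopping rule) runs in lockstep with the rule that determines $\CF_k$ versus $\CG_k$; and the probability $J^{-K}$ of a trajectory cancels the $J^K$ factor carried in $\exp(S_n)$. The paper packages this as a backward induction on the conditional expectation $\Ex[\exp(S_n - S_k)\mid(S_1,\dots,S_k)=(p_1,\dots,p_k)] = \sigma(A^{(k)}, B^{(k)})$ (Proposition 4.3), canceling the $\exp(r_k)$ factor one step at a time, whereas you run a forward induction on the invariant $\exp(S_k) = J^{k_1}(\nnz(A_k),\nnz(B_k))$ and cancel all $J$ factors at once when summing over leaves. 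These are equivalent; your presentation is more direct, while the paper's conditional-expectation framing is what it actually re-uses verbatim to prove the analogous Proposition 4.6 for algorithm (II), which motivates their choice. The one point you gesture at but do not carry out---verifying that a trajectory enters Stage 2 at step $k$ exactly when the corresponding tree node lands in $\CH_k\setminus\CF_k$---is the content of the paper's inner Claim (the $h_{k-1}=\infty$ equivalence), so your proof is complete in outline but leans on that bookkeeping being checked, as you acknowledge.
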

Recall that the function $\exp: \R^2 \rightarrow \R^2$ is the entry-wise exponentiation.
(See Subsection \ref{subsec:notation})

Before proving this lemma, we first take a look from an alternative perspective at the definition of $\CF_k$ and $\CG_k$. As every element in $\CF_{k} \cup \CG_{k}$ is generated by some element in $\CF_{k-1} \cup \CG_{k-1}$ via either $\CU$ or $\CV$ update, it is possible to formulate this generation as a tree $\CT$ of depth $n$ (uniformly for all branches). 
A node $(A, B)$ at depth-$k$ of $\CT$ is a matrix pair in either $\CF_k$ or $\CG_k$.
Between two nodes $(A, B), (A', B')$, there exists an edge marked as $\CU$ (resp. $\CV$)
if and only if $(A', B') \in \CU(A, B)$ (resp. $(A', B') \in \CV(A, B)$).

Further, we note that all the leaves of a subtree rooted at $(A, B)$ can be written in the form $L = (A \otimes C_L, B \otimes D_L)$ for some matrices $C_L, D_L$.
Hence there is a natural way to measure the contribution of a subtree to the overall size of $\CC(M^{\otimes n})$. Define $\sigma(A, B) = (\sum_{L \in \CT(A, B)} \nnz(C_L), $ $\sum_{L \in \CT(A, B)} \nnz(D_L))$, in which $\CT(A, B)$ is the collection of all leaves on the subtree with root $(A, B)$. When applied to the entire tree, this measure precisely captures the size of $\CC(M^{\otimes n})$. That is, $\sigma(I_1, I_1) = (\nnz(\CC(M^{\otimes n})_{A}), \nnz(\CC(M^{\otimes n})_{B}))$.

With this measure, we try to establish the connection between $\CC(M^{\otimes n})$ and $S_n$ using the following proposition. 

\begin{proposition}
\label{prop:SimpleRP}
Let $(A^{(0)}, B^{(0)}), (A^{(1)}, B^{(1)}), \cdots, (A^{(n)}, B^{(n)})$
be a path from root to leaf in \CT. 
$(A^{(i)},$ $B^{(i)})= (A^{(i-1)} \otimes C^{(i)}, B^{(i-1)} \otimes D^{(i)}), i \in [n]$. For $i \in [n]$, define
\[r_i = \left\{\begin{array}{ll}
\ln J, & \text{ if } 
(A^{(i)}, B^{(i)}) \in \CU(A^{(i-1)}, B^{(i-1)}), \\
0, & \text{ if } (A^{(i)}, B^{(i)}) \in \CV(A^{(i-1)}, B^{(i-1)}),
\end{array}\right.\quad R_i = \sum_{l=1}^{i} r_l.\]
Pick $p_i = (R_i + \ln (\nnz(A^{(i)}), R_i + \ln (\nnz(B^{(i)})), i \in [n]$. Then
\[\Ex[\exp(S_n - S_{k}) | (S_1, \cdots, S_{k}) = (p_1, \cdots, p_{k})] = 
\sigma(A^{(k)},  B^{(k)})\]
holds for any integer $0 \le k \le n$.
\end{proposition}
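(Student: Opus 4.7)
The plan is to prove this by backward induction on $k$, descending from $k=n$ down to $k=0$, using the tower property of conditional expectation together with the recursive structure of the tree $\CT$. The base case $k = n$ is immediate: $\exp(S_n - S_n) = (1, 1)$, while $(A^{(n)}, B^{(n)})$ is a leaf of $\CT$, so $\CT(A^{(n)}, B^{(n)})$ contains only this single pair, the corresponding $C_L, D_L$ equal $I_1$, and $\sigma(A^{(n)}, B^{(n)}) = (1, 1)$.

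For the inductive step, I would apply the entrywise identity $\exp(u + v) = \exp(u) \odot \exp(v)$ to split $\exp(S_n - S_k) = \exp(S_{k+1} - S_k) \odot \exp(S_n - S_{k+1})$, then condition on $(S_1, \ldots, S_{k+1})$ and invoke the tower property. For each realization of $S_{k+1}$ (which extends the path to $(p_1, \ldots, p_{k+1})$, corresponding to some child of $(A^{(k)}, B^{(k)})$), the inductive hypothesis replaces the inner expectation by $\sigma(A^{(k+1)}, B^{(k+1)})$. The task therefore reduces to showing that averaging $\exp(S_{k+1} - S_k) \odot \sigma(A^{(k+1)}, B^{(k+1)})$ over the conditional distribution of $S_{k+1}$ equals $\sigma(A^{(k)}, B^{(k)})$.

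The averaging splits into four cases based on the stage at step $k+1$ and the sign of $T_k$; each case mirrors exactly one branch in the definition of $\CU$ or $\CV$. For instance, in the Stage-$1$, $T_k \geq 0$ case, $S_{k+1} - S_k \sim X$, so each $j \in [J]$ contributes probability $1/J$, $\exp(S_{k+1} - S_k) = (J \cdot \nnz(U_j), J \cdot \nnz(V_j))$, and the corresponding child $(A^{(k)} \otimes U_j, B^{(k)} \otimes V_j) \in \CU(A^{(k)}, B^{(k)})$. The factors of $J$ cancel, leaving
\[ \sum_{j=1}^J (\nnz(U_j), \nnz(V_j)) \odot \sigma(A^{(k)} \otimes U_j, B^{(k)} \otimes V_j), \]
which equals $\sigma(A^{(k)}, B^{(k)})$ by the following tree recursion: every leaf $L$ of $\CT(A^{(k)}, B^{(k)})$ lies under a unique child, and its $C_L$ factors as $U_j \otimes C'_L$ so that $\nnz(C_L) = \nnz(U_j) \cdot \nnz(C'_L)$ (analogously for $D_L$), using $\nnz(X \otimes Y) = \nnz(X) \cdot \nnz(Y)$. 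The Stage-$1$, $T_k < 0$ case is identical after swapping $(U_j, V_j)$ with $(V_j^T, U_j^T)$, and the two Stage-$2$ cases are simpler still since $Y, Y'$ are deterministic and $\CV(A, B)$ has a single element.

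The main obstacle is a bookkeeping one: making sure the random process and the tree description remain synchronized. Concretely, one must verify that (a) the path $(p_1, \ldots, p_k)$ uniquely determines both the node $(A^{(k)}, B^{(k)})$ of $\CT$ and the stage of the process at step $k+1$---this uses that the event ``Stage $1$ has ended by step $j$'' equals $\{\exists\, j' \leq j : |T_{j'}| \geq \Gamma_{j'}\}$, which by construction coincides with $(A^{(j)}, B^{(j)}) \in \CG_j$; and (b) under the conditional distribution of $S_{k+1} - S_k$, the realized increments are in bijection with the children of $(A^{(k)}, B^{(k)})$ in $\CT$, with matching probabilities (here one checks, using the definitions of $r_i$ and $R_i$, that $p_{k+1} - p_k$ matches the value of $X$, $X'$, $Y$, or $Y'$ that corresponds to the chosen child). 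Once these correspondences are pinned down, the four case computations above are essentially mechanical.
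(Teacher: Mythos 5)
Your proposal is correct and follows essentially the same route as the paper's proof: backward induction on $k$, factoring $\exp(S_n - S_k)$ via the tower property, a case analysis on stage and sign that mirrors the $\CU/\CV$ branches, the $J$-cancellation between $\Pr = 1/J$ and $\exp(r_{k+1}) = J$, and the synchronization claim that Stage 1 has ended by step $j$ iff $(A^{(j)}, B^{(j)}) \in \CG_j$ (the paper's internal Claim). The only cosmetic difference is that the paper packages the four-case analysis as a single identity $\Pr[S_k = p_k \mid \cdots] = \exp(-r_k)$ before the induction, whereas you interleave the cases with the inductive step, but the computations are identical.
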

\begin{proof}
First we define $h((s_1, \cdots, s_{k-1})) = \min\{i \in [k-1] ~|~ |t(s_i)| \ge \Gamma_i\}$ to indicate whether step $k$ is in Stage 2. 
\footnote{We follow the convention that $\min \varnothing = \infty$.}
More specifically, consider the first $k-1$ steps with known outcome 
$(S_1, \cdots, S_{k-1}) = (s_1, \cdots, s_{k-1})$. 
If $h((s_1, \cdots, s_{k-1})) = \infty$, i.e. 
$\{i \in [k-1] ~|~ |t(s_i)| \ge \Gamma_i\} = \varnothing$, 
it means the process passes all the tests $|T_k| < \Gamma_k$ and stays in Stage 1 for step $k$.
Otherwise, for some $i \in [k-1]$ this condition is violated, causing the process to enter Stage 2 and to update following the corresponding rules for all steps $i+1, \cdots, n$, including step $k$.

Then for arbitrary $s_1, \cdots, s_{k-1}$ with $\Pr[(S_1, \cdots, S_{k-1}) = (s_1, \cdots, s_{k-1})] > 0$,
by exhausting cases with respect to $h_{k-1} := h((s_1, \cdots, s_{k-1}))$
and $t_{k-1} := t(s_{k-1})$, 
we write the probability of updates as follows.
\begin{align}
\label{eq:update_prob}
& ~~~~ \Pr[S_{k} = s_{k} | (S_1, \cdots, S_{k-1}) = (s_1, \cdots, s_{k-1})] \notag \\
&= \left\{\begin{array}{ll}
\Pr[X = s_{k} - s_{k-1}], & \text{ if } h_{k-1} = \infty, t_{k-1} \ge 0, \\
\Pr[X' = s_{k} - s_{k-1}], & \text{ if } h_{k-1} = \infty, t_{k-1} < 0, \\
\Pr[Y = s_{k} - s_{k-1}], & \text{ if }  h_{k-1} < \infty, t_{k-1} \ge 0, \\
\Pr[Y' = s_{k} - s_{k-1}], & \text{ if } h_{k-1} < \infty, t_{k-1} < 0.
\end{array}\right. 
\end{align}

Now consider the situation where we fix $s_i = p_i, i \in [k-1]$.
$t_i := t(p_i) = \gamma(A^{(i)}, B^{(i)})$ is easily obtained, while for 
$h_{k-1} := h((p_1, \cdots, p_{k-1}))$ we have the following claim.
\begin{claim}
\[h_{k-1} = \infty \Longleftrightarrow
(A^{(k)}, B^{(k)}) \in \CU(A^{(k-1)}, B^{(k-1)}),\]
\[h_{k-1} < \infty \Longleftrightarrow
(A^{(k)}, B^{(k)}) \in \CV(A^{(k-1)}, B^{(k-1)}).\]
\end{claim}
\begin{proof}
For the case $h_{k-1} = \infty$, we know $\{i \in [k-1] ~|~ |\gamma(A^{(i)}, B^{(i)})| \ge \Gamma_i\} = \varnothing$. That is to say if $(A^{(i)}, B^{(i)}) \in \CF_i$, then 
$(A^{(i+1)}, B^{(i+1)})$ falls into $\CF_{i+1}$ after one $\CU$ update. Hence $(A^{(k-1)}, B^{(k-1)}) \in \CF_{(k-1)}$ as $(A^{(0)}, B^{(0)}) \in \CF_0$. 
Further, $(A^{(k)}, B^{(k)}) \in \CU(A^{(k-1)}, B^{(k-1)})$.

For the other case, $|\gamma(A^{(i)}, B^{(i)})| < \Gamma_i$ holds for $i = 0, \cdots, h_{k-1}-1$.
(this is well-defined since $|\gamma(A^{(0)}, B^{(0)})| = 0 < \Gamma_0$, 
and thus $h_{k-1} \ge 1$).
From the discussion above we know 
$(A^{(h_{k-1}-1)}, $ $B^{(h_{k-1}-1)})\in \CF_{h_{k-1}-1}$. 
However, after one $\CU$ update,
$|\gamma(A^{(h_{k-1})}, B^{(h_{k-1})})| \ge \Gamma_{h_{k-1}}$
constitutes a violation of the condition,
implying $(A^{(h_{k-1})}, B^{(h_{k-1})}) \in \CH_{h_{k-1}} \backslash \CF_{h_{k-1}} \subseteq \CG_{h_{k-1}}$.
Furthermore, as all descendants of elements in $\CG_{h_{k-1}}$ lie in $\CG_i$ for some $i$,
we have $(A^{(k-1)}, B^{(k-1)}) \in \CG_{k-1}$, and thus 
$(A^{(k)}, B^{(k)}) \in \CV(A^{(k-1)}, B^{(k-1)})$.
\end{proof}

Therefore, after substituting the corresponding criteria in Formula (\ref{eq:update_prob}), we have
\begin{align}
& ~~~~ \Pr[S_{k} = p_{k} | (S_1, \cdots, S_{k-1}) = (p_1, \cdots, p_{k-1})]  \notag \\
&= \left\{\begin{array}{ll}
\Pr[X = p_{k} - p_{k-1}], & \text{ if } (A^{(k)}, B^{(k)}) \in \CU(A^{(k-1)}, B^{(k-1)}), 
\gamma(A^{(k-1)}, B^{(k-1)}) \ge 0, \\
\Pr[X' = p_{k} - p_{k-1}], & \text{ if } (A^{(k)}, B^{(k)}) \in \CU(A^{(k-1)}, B^{(k-1)}), 
\gamma(A^{(k-1)}, B^{(k-1)}) < 0,  \\
\Pr[Y = p_{k} - p_{k-1}], & \text{ if } (A^{(k)}, B^{(k)}) \in \CV(A^{(k-1)}, B^{(k-1)}), 
\gamma(A^{(k-1)}, B^{(k-1)}) \ge 0, \\
\Pr[Y' = p_{k} - p_{k-1}], & \text{ if } (A^{(k)}, B^{(k)}) \in \CV(A^{(k-1)}, B^{(k-1)}), 
\gamma(A^{(k-1)}, B^{(k-1)}) <0. 
\end{array}\right. \label{eq:Form2}
\end{align}

In light of the fact that for
$(A^{(k)}, B^{(k)}) = (A^{(k-1)} \otimes C^{(k)}, B^{(k-1)} \otimes D^{(k)})$,
$p_k - p_{k-1} = (r_k + \ln(\nnz(C^{(k)})), r_k + \ln(\nnz(D^{(k)}))$, one can perform a case-by-case analysis and obtain the following equivalent form.
\begin{align}
& ~~~~ \Pr[S_{k} = p_{k} | (S_1, \cdots, S_{k-1}) = (p_1, \cdots, p_{k-1})]  \notag \\
&= \left\{\begin{array}{ll}
1/J, & \text{ if } (A^{(k)}, B^{(k)}) = (A^{(k-1)} \otimes U_j, B^{(k-1)} \otimes V_j),
j \in [J],\\
1/J, & \text{ if } (A^{(k)}, B^{(k)} = (A^{(k-1)} \otimes V_j^T, B^{(k-1)} \otimes U_j^T),
j \in [J],\\
1, & \text{ if } (A^{(k)}, B^{(k)}) = (A^{(k-1)} \otimes I, B^{(k-1)} \otimes M), \\
1, & \text{ if } (A^{(k)}, B^{(k)}) = (A^{(k-1)} \otimes M, B^{(k-1)} \otimes I),
\end{array}\right.  \label{eq:Form3} \\
&= \exp(-r_k) \notag
\end{align}

With this result, we compute the conditional expectation in concern by induction on $k = n, n-1, \cdots, 0$.

Base case:
$\Ex[\exp(S_n - S_n) | (S_1, \cdots, S_n) = (S_1, \cdots, S_n)] = 1$.
Meanwhile,
$\sigma(A^{(n)}, B^{(n)}) = (\nnz(I_1),$ $\nnz(I_1)) = (1, 1)$.

Inductive step:  Assume $\Ex[\exp(S_n - S_{k}) | (S_1, \cdots, S_{k}) = (p_1, \cdots, p_{k})] = \sigma(A^{(k)}, B^{(k)})$. Let $\CD$ denote the set of all children of $(A^{(k-1)}, B^{(k-1)})$ on tree $\CT$. Additionally, we remark that all multiplication operations between $\R^2$ vectors are performed entry-wise. Then,
\begin{align*}
& ~~~~ \Ex[\exp(S_n - S_{k-1}) | (S_1, \cdots, S_{k-1}) = (p_1, \cdots, p_{k-1})] \\
&=\Ex[\exp(S_{k} - S_{k-1}) \cdot \exp(S_n - S_{k})| (S_1, \cdots, S_{k-1}) = (p_1, \cdots, p_{k-1})]\\
&=\sum_{(A^{(k)}, B^{(k)}) \in \CD}\big(
\Pr[S_k = p_k | (S_1, \cdots, S_{k-1}) = (p_1, \cdots, p_{k-1})] \\  
&\hspace{7eM} 
\cdot \exp(p_k - p_{k-1})
\odot \Ex[\exp(S_n - S_{k}) | (S_1, \cdots, S_{k}) = (p_1, \cdots, p_{k})]\big)  \\
&=\sum_{(A^{(k)}, B^{(k))} \in \CD}
\exp(-r_k) \cdot
(\exp(r_k) \cdot \nnz(C^{(k)}), \exp(r_k) \cdot \nnz(D^{(k)})) 
\odot \sigma(A^{(k)}, B^{(k)})
\\
&= \sigma (A^{(k-1)}, B^{(k-1)}).
\end{align*}

In conclusion, we have 
\[\Ex[\exp(S_n - S_{k}) | (S_1, \cdots, S_{k}) = (p_1, \cdots, p_{k})] = 
\sigma(A^{(k)}, B^{(k)})\]
for $k = 0, 1, \cdots, n$, as is stated in the proposition.
\end{proof}

At this point, Lemma \ref{lem:SimpleRP} can in fact be proved without additional effort.
Consider Proposition \ref{prop:SimpleRP} with $k = 0$, then we have 
\[\Ex[\exp(S_n - S_0)] = \Ex[\exp(S_n)] = \sigma(A^{(0)}, B^{(0)}) = 
(\nnz(\CC(M^{\otimes n})_{A}), \nnz(\CC(M^{\otimes n})_{B})).\]

\subsection{Another equivalent random walk}
Randomized algorithm (I), despite being closely connected to $\size(\CC(M^{\otimes n}))$, turns out to pose difficulties for further analysis. This issue motivates us to devise a modified randomized algorithm (II) which we will present momentarily. For the sake of conciseness, we will use the same notations as above, though they tend to have different interpretations.
\begin{center}
\fbox{\parbox{0.8\textwidth}{
\textbf{Stage 0}: \\
Draw i.i.d. examples $X_k \sim X, X_k' \sim X', Y_k \sim Y, Y_k' \sim Y', k = 1, \cdots, n$.

\noindent \textbf{Stage 1}:

For $k = 1, 2, \cdots, n$:

~~1. If $T_{k-1} \ge 0$, $S_k = S_{k-1} + X_k$. Otherwise, $S_k = S_{k-1} + X_k'$.

~~2. If $|T_k| \ge \Gamma_k$ (definition of $\Gamma_k$ as before), end Stage 1.

\noindent \textbf{Stage 2}:
Let $k = K$ be the last iteration in Stage 1. 

For $k = K+1, K+2, \cdots, n$:

~~If $T_{k-1} \ge 0$, $S_k = S_{k-1} + Y_k$. Otherwise, $S_k = S_{k-1} + Y_k'$.
}}
(II)
\end{center}

Randomized algorithm (II) shares exactly the same structure with algorithm (I), and differs solely in
the distribution of $X$ and $X'$.
Let \[\pi_j = \frac{\sqrt{\nnz(U_j) \cdot \nnz(V_j)}}{\sum_{j=1}^J \sqrt{\nnz(U_j) \cdot \nnz(V_j)}}.\]
$X, X'$ respectively follow probability distributions
\[\Pr[X = (\ln(1/\pi_j) + \ln \nnz(U_j), 
\ln(1/\pi_j) + \ln \nnz(V_j))] = \pi_j, j \in [J],\]
\[\Pr[X' = (\ln(1/\pi_j) + \ln \nnz(V_j^T), 
\ln(1/\pi_j) + \ln \nnz(U_j^T))] = \pi_j, j \in [J].\]

\begin{proposition}
Let $(A^{(0)}, B^{(0)}), (A^{(1)}, B^{(1)}), \cdots, (A^{(n)}, B^{(n)})$
be a path from root to leaf in \CT. 
$(A^{(i)},$ $B^{(i)})$ $= (A^{(i-1)} \otimes C^{(i)}, B^{(i-1)} \otimes D^{(i)}), i \in [n]$. For $i \in [n]$, define
\[r_i = \left\{\begin{array}{ll}
\ln (1/\pi_j), & \text{ if } (A^{(k)}, B^{(k)}) = (A^{(k-1)} \otimes U_j, B^{(k-1)} \otimes V_j),
j \in [J], \\
\ln (1/\pi_j), & \text{ if } (A^{(k)}, B^{(k)}) = (A^{(k-1)} \otimes V_j^T, B^{(k-1)} \otimes U_j^T),
j \in [J], \\
0, & \text{ if } (A^{(k)}, B^{(k)}) = (A^{(k-1)} \otimes I, B^{(k-1)} \otimes M), \\
0, & \text{ if } (A^{(k)}, B^{(k)}) = (A^{(k-1)} \otimes M, B^{(k-1)} \otimes I),
\end{array}\right. \quad R_i = \sum_{l=1}^{i} r_l.\]
Pick $p_i = (R_i + \ln \nnz(A^{(i)}), R_i + \ln \nnz(B^{(i)})), i \in [n]$. 
Then
\[\Ex[\exp(S_n - S_{k}) | (S_1, \cdots, S_{k}) = (p_1, \cdots, p_{k})] = 
\sigma(A^{(k)}, B^{(k)})\]
holds for any integer $0 \le k \le n$.
\end{proposition}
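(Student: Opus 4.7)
The plan is to mirror the proof of the preceding proposition for algorithm (I), exploiting the fact that algorithms (I) and (II) share an identical control structure and differ only in the probability masses assigned to each outcome of $X$ and $X'$. In particular, the deterministic tree $\CT$, the definitions of $\CF_k, \CG_k, \CH_k$, and the thresholds $\Gamma_k$ are all unchanged, so the same correspondence between root-to-leaf paths in $\CT$ and realizations of $(S_1, \ldots, S_n)$ applies.

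First I would re-establish that the Stage 1/Stage 2 dichotomy, captured by $h_{k-1} := \min\{i \in [k-1] \mid |t(s_i)| \geq \Gamma_i\}$, coincides with the $\CU$ vs.\ $\CV$ distinction on the edge from $(A^{(k-1)}, B^{(k-1)})$ to $(A^{(k)}, B^{(k)})$. The argument for the prior claim depends only on the deterministic classification $\CF_i$ vs.\ $\CH_i \setminus \CF_i$, so it transfers verbatim.

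Next I would compute $\Pr[S_k = p_k \mid (S_1, \ldots, S_{k-1}) = (p_1, \ldots, p_{k-1})]$ via the same four-case analysis ($\CU$ or $\CV$, each with two sign subcases). In the $\CU$ cases the conditional probability is now $\pi_j$ rather than $1/J$; in the $\CV$ cases it remains $1$. The crucial algebraic identity is that in every case, this probability equals $\exp(-r_k)$ by the definition of $r_k$ used in algorithm (II): for a $\CU$ update selecting the $j$-th summand we have $r_k = \ln(1/\pi_j)$, and for a $\CV$ update $r_k = 0$. Since also $p_k - p_{k-1} = (r_k + \ln \nnz(C^{(k)}), r_k + \ln \nnz(D^{(k)}))$, this yields
\[\Pr[S_k = p_k \mid \cdots] \cdot \exp(p_k - p_{k-1}) = (\nnz(C^{(k)}), \nnz(D^{(k)})).\]

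With this identity in hand the induction on $k = n, n-1, \ldots, 0$ proceeds exactly as before: the base case $k = n$ gives $\sigma(A^{(n)}, B^{(n)}) = (\nnz(I_1), \nnz(I_1)) = (1,1)$, and in the inductive step I factor $\exp(S_n - S_{k-1}) = \exp(S_k - S_{k-1}) \odot \exp(S_n - S_k)$, condition on $S_k$, and sum over all children $(A^{(k)}, B^{(k)}) \in \CD$ of $(A^{(k-1)}, B^{(k-1)})$ in $\CT$ to obtain $\sigma(A^{(k-1)}, B^{(k-1)})$. The only substantive departure from the earlier proof is the reweighting: choosing sampling probability $\pi_j$ together with a compensating shift $\ln(1/\pi_j)$ preserves $\pi_j \cdot \exp(\ln(1/\pi_j)) = 1$, so the tower of expectations telescopes to the same answer despite the change of measure. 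I expect no real obstacle beyond the bookkeeping needed to track this reweighting consistently through the four-case analysis.
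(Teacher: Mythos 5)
Your proposal is correct and takes essentially the same approach as the paper's proof: both reduce to the argument for algorithm~(I) by noting that the Stage 1/Stage 2 classification and Formula~(\ref{eq:Form2}) are unaffected by the change of measure, verify that the reweighted conditional probability still equals $\exp(-r_k)$ under the new definition of $r_k$ (so that $\Pr[S_k = p_k \mid \cdots]\cdot\exp(p_k - p_{k-1}) = (\nnz(C^{(k)}), \nnz(D^{(k)}))$ as before), and then close with the identical backward induction on $k$.
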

\begin{proof}
The proof we present here will significantly resemble the one for Proposition \ref{prop:SimpleRP}.
First off, the derivation for $\Pr[S_{k} = s_{k} | (S_1, \cdots, S_{k-1}) = (s_1, \cdots, s_{k-1})]$
does not depend on the specific distribution of $X$ and $X'$. Therefore Formula (\ref{eq:update_prob}) still holds for $X, X', Y, Y'$ in algorithm (II). Similarly, as the derivation for Formula (\ref{eq:Form2}) merely depends on $t_i = \gamma(A^{(i)}, B^{(i)})$, which stays the same here, we know it holds as well.

Further, for
$(A^{(k)}, B^{(k)}) = (A^{(k-1)} \otimes C^{(k)}, B^{(k-1)} \otimes D^{(k)})$,
$p_k - p_{k-1} = (r_k + \ln(\nnz(C^{(k)})), r_k + \ln(\nnz(D^{(k)}))$ is still the case.
Then the same case-by-case analysis shows
\begin{align*}
&~~~~ \Pr[S_{k} = p_{k} | (S_1, \cdots, S_{k-1}) = (p_1, \cdots, p_{k-1})]  \\
&= \left\{\begin{array}{ll}
\pi_j, & \text{ if } (A^{(k)}, B^{(k)}) = (A^{(k-1)} \otimes U_j, B^{(k-1)} \otimes V_j),
j \in [J], \\
\pi_j, & \text{ if } (A^{(k)}, B^{(k)}) = (A^{(k-1)} \otimes V_j^T, B^{(k-1)} \otimes U_j^T),
j \in [J], \\
1, & \text{ if } (A^{(k)}, B^{(k)}) = (A^{(k-1)} \otimes I, B^{(k-1)} \otimes M), \\
1, & \text{ if } (A^{(k)}, B^{(k)}) = (A^{(k-1)} \otimes M, B^{(k-1)} \otimes I),
\end{array}\right. \\
&= \exp(-r_k)
\end{align*}
Namely, $\Pr[S_k = p_k | (S_1, \cdots, S_{k-1}) = (p_1, \cdots, p_{k-1})]$ and
$\exp(p_k - p_{k-1})$ have the same form as before. Hence by induction we are able to draw precisely the same conclusion.
\end{proof}
\begin{corollary}
Let $S_k, k=0, 1, \cdots, n$ be the random vectors in randomized algorithm (II). Then
\[\Ex[\exp(S_n)] = (\nnz(\CC(M^{\otimes n})_{A}), \nnz(\CC(M^{\otimes n})_{B})).\]
\end{corollary}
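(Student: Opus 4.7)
The plan is to obtain this corollary as the $k = 0$ specialization of the proposition immediately preceding it, mirroring how Lemma \ref{lem:SimpleRP} was deduced from Proposition \ref{prop:SimpleRP}. Concretely, I would invoke that proposition with $k = 0$. The conditioning event $(S_1, \ldots, S_0) = (p_1, \ldots, p_0)$ is vacuous (the list is empty), and by the initialization of the random process we have $S_0 = (0, 0)$ deterministically, so $\exp(S_n - S_0) = \exp(S_n)$. The left-hand side therefore collapses to the unconditional expectation $\Ex[\exp(S_n)]$.

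For the right-hand side I would unpack $\sigma(A^{(0)}, B^{(0)})$. The root of the tree $\CT$ is $(I_1, I_1) \in \CF_0$ by construction, so $A^{(0)} = B^{(0)} = I_1$, and this is the common starting point of every root-to-leaf path, meaning the proposition's conclusion at $k = 0$ is independent of which path was used. By the definition of $\sigma$ as the coordinatewise sum of $(\nnz(C_L), \nnz(D_L))$ over the leaves $L$ of the subtree at the root (where each leaf is expressed as $(I_1 \otimes C_L, I_1 \otimes D_L) = (C_L, D_L)$), the quantity $\sigma(I_1, I_1)$ is precisely the pair of total sparsities of the left and right factors of $\CC(M^{\otimes n})$, that is $(\nnz(\CC(M^{\otimes n})_A), \nnz(\CC(M^{\otimes n})_B))$.

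Chaining these two observations gives the corollary. I do not expect any genuine obstacle: the statement is essentially a substitution step in the identity just established, and the only fact that needs explicit note is that $\sigma(I_1, I_1)$ sums over exactly the leaf set whose matrix pairs are concatenated to form $\CC(M^{\otimes n})$, so the leaf-level counts agree with the global circuit size. Once that is pointed out, no further computation is required.
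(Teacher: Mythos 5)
Your proof is correct and follows exactly the same route as the paper: instantiate the preceding proposition at $k = 0$, note that the conditioning is vacuous and $S_0 = (0,0)$ so the left side is $\Ex[\exp(S_n)]$, and identify $\sigma(I_1, I_1)$ with the pair of sparsities of the two factors of $\CC(M^{\otimes n})$. The additional observation that the root is shared by all root-to-leaf paths, making the $k=0$ statement path-independent, is a correct and welcome clarification that the paper leaves implicit.
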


\subsection{Analysis for the expectation}
\label{subsec:anaexp}
To this point, we have shown the equivalence between the size of $\CC(M^{\otimes n})$ and a statistical characteristic of $S_n$ computed in randomized algorithm (II). This new quantity, unlike its counterpart associated with algorithm (I), allows systematic analysis with the aid of concentration bounds. In the following discussion, all variables, unless otherwise specified, are associated with randomized algorithm (II), as opposed to algorithm (I).

Recall that our primary task here is to give bounds on $\Ex[\exp(S_n)]$. To achieve this goal, we examine two quantities: $|T_n| = |S_n[1] - S_n[2]|$ and $S_n[1] + S_n[2]$. Define $H = h((S_1, \cdots, S_n)) = \min\{i \in [n] ~|~ |T_i| \ge \Gamma_i\}$.

\begin{claim}
$|T_n| = |S_n[1] - S_n[2]| < 4G$.
\end{claim}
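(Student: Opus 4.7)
The plan is to split on whether Stage 2 is ever entered, i.e., on whether $H = \min\{i \in [n] : |T_i| \ge \Gamma_i\}$ is finite. If $H = \infty$, then by definition $|T_k| < \Gamma_k$ holds for every $k \in [n]$, so in particular $|T_n| < \Gamma_n = 2G < 4G$ and we are done.

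Otherwise let $K$ be the last Stage 1 iteration. The first step is a Stage 1 bound on $|T_K|$. Since $|T_{K-1}| < \Gamma_{K-1}$ (using $|T_0| = 0$ when $K = 1$), and every Stage 1 update changes $T$ by at most $G$ in absolute value—because $|X_k[1] - X_k[2]| = |\ln(\nnz(U_j)/\nnz(V_j))| \le G$ and similarly $|X_k'[1] - X_k'[2]| \le G$ directly from the definition of $G$—I get $|T_K| \le |T_{K-1}| + G < \Gamma_{K-1} + G = (n-K+1)\gamma(M,I) + 3G$.

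The second step is to analyze Stage 2. Writing $\alpha := \gamma(M,I)$ (with $\alpha \le G$ by the definition of $G$) and $a_i := |T_{K+i}|$ for $0 \le i \le m := n-K$, the Stage 2 rule $T_k = T_{k-1} - \alpha$ when $T_{k-1} \ge 0$ and $T_k = T_{k-1} + \alpha$ when $T_{k-1} < 0$ gives the one-dimensional recurrence $a_i = |a_{i-1} - \alpha|$. A short induction on $i$ yields $a_i \le \max(a_0 - i\alpha,\, \alpha)$ for all $i \ge 1$: as long as $a_{i-1} \ge \alpha$ the sequence decreases linearly by $\alpha$, and once it first dips into $[0, \alpha)$ the recurrence maps that interval into itself. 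Plugging in $i = m$, $a_0 = |T_K|$ and combining with the Stage 1 bound gives $|T_n| = a_m < \max(\alpha + 3G,\, \alpha) = \alpha + 3G \le 4G$; the edge case $m = 0$ (i.e., $K = n$) is immediate from $|T_n| = |T_K| < \alpha + 3G \le 4G$.

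The only mildly subtle point is the Stage 2 recurrence: a naive reading would say $|T|$ just keeps shrinking, but once $|T|$ drops below $\alpha$ it bounces rather than decaying further, and it is the resulting $\max(\cdot,\,\alpha)$ term that forces the extra additive $\alpha$ appearing in the final $\alpha + 3G$ bound. Once one is careful about this oscillation, everything else is a direct arithmetic assembly of the two-step bound.
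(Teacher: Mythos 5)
Your proof is correct. The overall skeleton (split on whether Stage 2 is entered, bound $|T_K|$ coming out of Stage 1 using $|T_{K-1}| < \Gamma_{K-1}$ and a one-step change of at most $G$, then track $|T|$ through Stage 2) matches the paper's, but your Stage 2 analysis takes a genuinely different route. The paper also invokes the \emph{lower} bound $|T_H| \ge \Gamma_H = (n-H)\gamma(M,I) + 2G$ that triggered the Stage 1 exit: this slack guarantees that $T$ never changes sign during Stage 2 (since $\gamma(M,I) \le G$ and the $+2G$ cushion never runs out), so each Stage 2 step decreases $|T|$ by exactly $\gamma(M,I)$ and one lands directly on $|T_n| = |T_H| - (n-H)\gamma(M,I) < 4G$. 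You discard the lower bound entirely and instead analyze the absolute-value recurrence $a_i = |a_{i-1} - \alpha|$ head-on, establishing $a_i \le \max(a_0 - i\alpha,\ \alpha)$ by induction to absorb the bounce once $|T|$ dips below $\alpha$. Your route is somewhat more robust --- it would survive a looser choice of threshold $\Gamma_k$ under which $T$ could flip sign in Stage 2 --- while the paper's argument makes visible that $\Gamma_k$ is tuned precisely so that Stage 2 monotonically unwinds the accumulated imbalance. Both land cleanly on $|T_n| < 4G$. One tiny nit: under $a \mapsto |a-\alpha|$ the interval $[0,\alpha)$ maps into $(0,\alpha]$, not into itself, though this does not affect the validity of your induction or the final bound.
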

\begin{proof}
If $H = h((S_1, \cdots, S_n)) = \infty$, i.e. all iterations of the algorithm stay in Stage 1,
then $|T_n| < \Gamma_n = (n-n) \cdot \gamma(M, I) + 2G < 4G$.

Otherwise, $H$ is the last iteration in Stage 1, which implies $|T_{H-1}| < \Gamma_{H-1},
|T_{H}| \ge \Gamma_{H}$.
Hence we simultaneously have $|T_H| \ge (n-H) \cdot \gamma(M, I) + 2G$ and
\[|T_H| = |T_{H-1}| + (|T_{H}| - |T_{H-1}|) 
< (n-H+1) \cdot \gamma(M, I) + 2G + G 
\le (n-H) \cdot \gamma(M, I) + 4G.\]
Further, for $k = 1, \cdots, n-H$, $T_{H+k-1} \ge (n-k+1) \cdot \gamma(M, I) \ge 0$
implies $T_{H+k} = T_{H+k-1} + Y_{H+k}[1] - Y_{H+k}[2] = (n-k) \cdot \gamma(M, I) \ge 0$.
Therefore by induction, for the case $T_{H} \ge (n-H) \cdot \gamma(M, I) + 2G$,
$T_{n} = T_{H} - (n-H) \cdot \gamma(M, I) < 4G$. Similarly,
if $T_{H} \le -((n-H) \cdot \gamma(M, I) + 2G)$, then $T_{n} > -4G$.
\end{proof}

\begin{claim}
\label{claim:sizesum}
Let 
\[\alpha_1 = \ln (\sum_{j=1}^J \sqrt{\nnz(U_j) \cdot \nnz(V_j)}), \quad
\alpha_2 = \ln (\sqrt{\nnz(M) \cdot \nnz(I)}).\]
Then
\[S_n[1] + S_n[2] = \left\{\begin{array}{ll}
H \cdot 2 \alpha_1 + (n - H) \cdot 2 \alpha_2, & \text{ if } H < \infty, \\
n \cdot 2 \alpha_1, & \text{ if } H = \infty.
\end{array} \right. \]
\end{claim}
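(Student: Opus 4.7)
The plan is to trace through each step of randomized algorithm (II) and show that every Stage 1 step contributes exactly $2\alpha_1$ to $S_k[1] + S_k[2]$, while every Stage 2 step contributes exactly $2\alpha_2$. The claim then follows by counting the number of steps in each stage.

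First I would compute the sum of coordinates for a single draw from each of the four distributions. For $X$, if the outcome corresponds to index $j \in [J]$, then the sum of coordinates is
\[
\bigl(\ln(1/\pi_j) + \ln \nnz(U_j)\bigr) + \bigl(\ln(1/\pi_j) + \ln \nnz(V_j)\bigr) = 2\ln(1/\pi_j) + \ln\bigl(\nnz(U_j) \cdot \nnz(V_j)\bigr).
\]
Substituting the definition $\pi_j = \sqrt{\nnz(U_j) \cdot \nnz(V_j)} / \sum_{j'} \sqrt{\nnz(U_{j'}) \cdot \nnz(V_{j'})}$ gives $2\ln(1/\pi_j) = 2\alpha_1 - \ln(\nnz(U_j) \cdot \nnz(V_j))$, so the sum telescopes to $2\alpha_1$ regardless of $j$. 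The same calculation works for $X'$ since $\nnz(V_j^T) = \nnz(V_j)$ and $\nnz(U_j^T) = \nnz(U_j)$. For $Y$ and $Y'$, the sum is simply $\ln \nnz(I) + \ln \nnz(M) = 2\alpha_2$.

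Next I would combine these per-step increments over the full run. By construction, exactly one of $X_k, X_k', Y_k, Y_k'$ is added to $S_{k-1}$ at each step $k$, depending on the sign of $T_{k-1}$ and on whether the step is in Stage 1 or Stage 2. Crucially, because the contribution to $S_k[1] + S_k[2]$ depends only on which distribution is sampled (not on the outcome), and both Stage 1 distributions contribute $2\alpha_1$ while both Stage 2 distributions contribute $2\alpha_2$, we obtain a deterministic telescoping:
\[
S_n[1] + S_n[2] = \sum_{k=1}^{n} \bigl(S_k[1] + S_k[2] - S_{k-1}[1] - S_{k-1}[2]\bigr) = 2\alpha_1 \cdot |\text{Stage 1 steps}| + 2\alpha_2 \cdot |\text{Stage 2 steps}|.
\]

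Finally I would split on whether $H = \infty$. If $H = \infty$, then by definition no iteration triggers the exit condition $|T_k| \ge \Gamma_k$, so all $n$ iterations remain in Stage 1 and the sum is $2n\alpha_1$. If $H < \infty$, then $H$ is the last iteration of Stage 1, so Stage 1 contains exactly the iterations $k = 1, \ldots, H$ and Stage 2 contains $k = H+1, \ldots, n$, giving $H \cdot 2\alpha_1 + (n - H) \cdot 2\alpha_2$. There is no real obstacle here; the content of the claim is really the observation that the normalization $\pi_j \propto \sqrt{\nnz(U_j) \cdot \nnz(V_j)}$ chosen in algorithm (II) is precisely the one that makes $X[1] + X[2]$ a constant independent of $j$, which is the design reason algorithm (II) was introduced in the first place.
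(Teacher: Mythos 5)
Your proof is correct and matches the paper's proof essentially step for step: both first verify that each Stage~1 distribution ($X$, $X'$) has coordinate sum $2\alpha_1$ thanks to the normalization $\pi_j \propto \sqrt{\nnz(U_j)\nnz(V_j)}$, that each Stage~2 distribution ($Y$, $Y'$) has coordinate sum $2\alpha_2$, and then count iterations per stage according to whether $H < \infty$ or $H = \infty$.
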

\begin{proof}
First we make the observation that
\begin{align*}
&~~~~ X[1] + X[2] \\
&= \ln(1/ \pi_j) + \ln(\nnz(U_j)) + \ln(1 / \pi_j) + \ln(\nnz(V_j)) \\
&= 2 \cdot \left(\ln (\sum_{j=1}^J \sqrt{\nnz(U_j) \cdot \nnz(V_j)}) 
-\frac 12 \ln (\nnz(U_j) \cdot \nnz(V_j)) \right)
+ \ln(\nnz(U_j)) + \ln(\nnz(V_j)) \\
&= 2 \alpha_1.
\end{align*}
Similarly one can show $X'[1] + X'[2] = 2 \alpha_1 = X[1] + X[2]$, as well as $Y[1] + Y[2] = Y'[1] + Y'[2] = \ln(\nnz(M)) + \ln(\nnz(I)) = 2 \alpha_2$.

The results above indicate that $(S_k[1] + S_k[2]) - (S_{k-1}[1] + S_{k-1}[2])$
depends only on whether the $k$-th iteration is in Stage 1 or Stage 2.
In addition, when $H < \infty$, Stage 1 contains $H$ iterations, and Stage 2 contains $n-H$ many;
otherwise $H = \infty$, meaning all the iterations belong to Stage 1.
Thereby the claim to prove immediately follows from the calculations above.
\end{proof}

Combining the two claims, we have
$S_n[1] + S_n[2] \ge S_n[1] + S_n[1] - 4G$, i.e. $S_n[1] \le (S_n[1] + S_n[2])/2 + 2G$.
And similarly, $S_n[2] \le (S_n[1] + S_n[2])/2 + 2G$. That is to say, $S_n[1]$ and $S_n[2]$
are controlled by the quantity in Claim \ref{claim:sizesum}, and thus only dependent on $H$.
Hence it only remains to compute the probability $\Pr[H = k], k = 1, \cdots, n$ and $\Pr[H = \infty]$.

To this end, we first define $Z = X[1] - X[2]$,
whose probability distribution can be obtained as 
$\Pr[Z = \gamma(U_j, V_j)] = \pi_j,  j\in [J]$.
Recall that $G = \max\{\gamma(M, I), \max_{j \in [J]} |\gamma(U_j, V_j)|\}$,
then evidently $|Z| \le G$.
Similarly define $Z' = X'[1] - X'[2]$. One can verify that $-Z'$ has the same distribution as $Z$. 

With this observation, we define random variables 
$Z_k = X_k[1] - X_k[2], Z_k' = X'_k[1] - X'_k[2], k \in [n]$; $E_k = |T_k| - |T_{k-1}|, k \in [n]$.
In Stage 1, as either $T_{k} = (S_{k-1} + X_k)[1] - (S_{k-1} + X_k)[2] = T_{k-1} + Z_k$,
or $T_{k} = (S_{k-1} + X'_k)[1] - (S_{k-1} + X'_k)[2] = T_{k-1} + Z_k'$,
we know $|E_k| \le G$ by triangle inequalities.
In the meantime, from the rules of updating, if $T_{k-1} \ge G$, 
$T_{k} = T_{k-1} + Z_k \ge G - G = 0$ and thus $E_k = T_k - T_{k-1} = Z_k$.
Similarly, if $T_{k-1} \le -G$, $E_{k} = (-T_{k}) - (-T_{k-1}) = -Z_k'$.

Further define $F = \max\{i \in \{0, \cdots, H-1\} ~|~ |T_i| < G\}$.
Clearly $\{i \in \{0, \cdots, H-1\} ~|~ |T_i| < G\} \ne \varnothing$ as $|T_0| = 0$.
Then,
\[\Pr[H = k] \le \Pr[|T_k| = \sum_{i=1}^n E_i \ge \Gamma_k]
= \sum_{f=0}^{k-1} \Pr[F = f] \cdot \Pr[\sum_{i=1}^n E_i \ge \Gamma_k | F = f].\]
For fixed $F = f$, we have $|T_f| = \sum_{i=0}^f E_i < G$ and $E_{f+1} < G$. 
In addition, as $|T_{i}| \ge G$ for $i = f+1, \cdots, k-1$, we know $E_{i}, i = f+2, \cdots, k$ are 
essentially independent copies of $Z$. 

Therefore, let $\TE_i = \frac{E_i + G}{2G} \in [0, 1]$,
$\mu = \Ex[\TE_i] = \frac{\Ex[Z] + G}{2G}$, $m = k - f - 1$, and we have
\[\Pr[\sum_{i = 1}^k E_i \ge \Gamma_k | F = f]
\le \Pr[\sum_{i = f+2}^k E_i \ge \Gamma_k - 2G]
= \Pr[\sum_{i = f+2}^k \TE_i \ge \frac{\Gamma_k - 2G + Gm}{2G}]. \tag{$*$}\]
By Chernoff bound (Theorem \ref{thm:Chernoff}), let
\[\delta = \frac{(\Gamma_k - 2G + Gm)/(2G)}{m \cdot (\Ex[Z] + G)/(2G)} - 1
= \frac{\Gamma_k - 2G}{\Ex[Z] + G} \cdot \frac 1m + \frac{-\Ex[Z]}{\Ex[Z] + G}
=: \frac Am + B, \]
then
\[(*) \le \max\{\exp(-\frac \mu 3 \cdot \delta m), \exp(-\frac \mu 3 \cdot \delta^2 m)\}
= \exp(-\frac \mu 3 \cdot \min\{\delta m, \delta^2 m\}),\]
\[\delta m = A + Bm \ge A, \quad 
\delta^2 m = \frac{A^2}m + 2AB + B^2m \ge 4AB.\]
Therefore,
\begin{align*}
\Pr[H = k] &\le \max_f \Pr[\sum_{i = 1}^k E_i \ge \Gamma_k | F = f]  \\
&\le \exp(-\frac {\gamma(M \times I)}{6G} \cdot \min \{1, \frac{-4\Ex[Z]}{\Ex[Z] + G}\} \cdot (n-k))
=: \exp(-\beta(n-k)).
\end{align*}
At this point, we have obtained all ingredients required for computing $\Ex[\exp(S_n)]$.
Indeed,
\begin{align*}
&~~~~ \max\{\Ex[\exp(S_n[1])], \Ex[\exp(S_n[2])]\} \\
&\le \exp(2G) \cdot \Ex[\exp(\frac {S_n[1] + S_n[2]}2)] \\
&\le \exp(2G) \cdot \left[
\Pr[H = \infty] \cdot \exp(n \alpha_1)
+ \sum_{k=1}^{n} \exp(-\beta(n-k)) \cdot \exp(k \alpha_1 + (n-k) \alpha_2)
\right] \\
&=\exp(2G) \cdot \exp(n\alpha_1) \cdot \left[
\Pr[H = \infty]
+ \sum_{k=1}^{n} \exp((n-k) (\alpha_2 - \alpha_1 - \beta))
\right] 
\end{align*} 
Under the condition $\beta > \alpha_2 - \alpha_1$, we have
$\sum_{k=1}^{n} \exp((n-k) (\alpha_2 - \alpha_1 - \beta)) \le n$.
Therefore, 
\[\size(\CC(M^{\otimes n}))
\le 2 \cdot \max\{\Ex[\exp(S_n[1])], \Ex[\exp(S_n[2])]\}
\le \exp(\alpha_1)^{n + o(n)}.\]

\subsection{One-sided decomposition and rigidity-based decomposition}
Theorem \ref{thm:main} shows that under a certain condition, $M^{\otimes n}$ has a depth-2 linear circuit of size $\exp(\alpha_1)^{n+o(n)}$. In this Subsection, however, we will be concerned with a special type of decomposition, for which this condition can be removed.

\begin{definition}
Suppose $q, J$ are positive integers, and $M \in \F^{q \times q}$ is a symmetric matrix. 
$M = \sum_{j=1}^J U_j \times V_j$ is called a one-sided decomposition if $\nnz(U_j) \le \nnz(V_j)$
for all $j \in [J]$.
\end{definition}
For this type of decomposition, we have the following lemma.
\begin{lemma}
If $M = \sum_{j=1}^J U_j \times V_j$ is a one-sided decomposition, then $\Pr[H = \infty] = 1$ ($H$ is defined as in Subsection \ref{subsec:anaexp}). 
\end{lemma}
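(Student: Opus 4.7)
The plan is to prove the lemma deterministically rather than probabilistically: under one-sidedness, $|T_k| \le G$ holds throughout Stage~1, from which the exit condition $|T_k| \ge \Gamma_k$ (recall $\Gamma_k = (n-k)\gamma(M, I) + 2G \ge 2G$) is never triggered and so $H = \infty$ for every outcome. The driving observation is that one-sidedness pins down the sign of each Stage~1 increment. Since $\nnz(U_j) \le \nnz(V_j)$ for every $j \in [J]$, the quantity $Z_k := X_k[1] - X_k[2]$ is always nonpositive and $Z_k' := X_k'[1] - X_k'[2]$ is always nonnegative, and both satisfy $|Z_k|, |Z_k'| \le G$ by definition of $G$. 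Crucially, this aligns with the Stage~1 update rule: when $T_{k-1} \ge 0$ the algorithm adds $X_k$ (contributing $Z_k \le 0$), and when $T_{k-1} < 0$ it adds $X_k'$ (contributing $Z_k' \ge 0$), so in either case the update pulls $T$ back toward zero rather than away from it.

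I would then formalize this invariant by induction on $k$. The base case $T_0 = 0$ is immediate. For the inductive step, assume $|T_{k-1}| \le G$ and split on the sign of $T_{k-1}$. If $T_{k-1} \ge 0$, then $T_k = T_{k-1} + Z_k$ satisfies $T_k \le T_{k-1} \le G$ (since $Z_k \le 0$ and by the inductive hypothesis) and $T_k \ge T_{k-1} - G \ge -G$ (since $T_{k-1} \ge 0$ and $Z_k \ge -G$); hence $|T_k| \le G$. The case $T_{k-1} < 0$ is entirely symmetric using $X_k'$ and $Z_k'$ in place of $X_k$ and $Z_k$. Combining this invariant with $\Gamma_k \ge 2G$ yields $|T_k| \le G < \Gamma_k$ whenever $G > 0$, so the Stage~1 exit condition never fires and $H = \infty$ with probability~$1$ (in fact, for every realization of the $X_k, X_k'$).

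There is no substantive obstacle here; once the invariant $|T_k| \le G$ is isolated, everything else is bookkeeping, and the only nontrivial ingredient is the one-sided hypothesis, which is used exactly once to fix the signs of $Z_k$ and $Z_k'$. The only mild subtlety is the degenerate boundary case $G = 0$, in which all Stage~1 increments vanish identically and the inequality $|T_k| < \Gamma_k$ is no longer strict; this case forces the decomposition to be essentially trivial and can be dispatched by a direct argument (or by assuming $G > 0$ without loss of generality, which is the case for every decomposition of interest in the applications).
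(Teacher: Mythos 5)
Your proof is correct and essentially identical to the paper's: both establish the invariant $|T_k| \le G$ by induction on $k$, using the one-sided hypothesis to pin $Z_k \le 0$ and $Z_k' \ge 0$ so that each Stage~1 increment pulls $T$ back toward zero, and then conclude $|T_k| \le G \le \Gamma_k$ so the exit test never fires. The extra remark about the degenerate boundary $G = 0$ is a minor point of care that the paper silently omits.
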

\begin{proof}
Consider the random variable $Z$. From the discussion above we know $|Z| \le G$.
Now as $\gamma(U_j, V_j) = \ln(\nnz(U_j) / \nnz(V_j)) \le 0$, we further have
$-G \le Z \le 0$.
This implies that if $0 \le T_{k-1} \le G$, $-G \le T_k = T_{k-1} + Z_k \le G$, 
and similarly if $-G \le T_{k-1} < 0$, $-G \le T_k = T_{k-1} + Z_k' \le G$.
Therefore, we know $-G \le T_k \le G$ holds for all $k \in [n]$,
since initially $T_0 = 0$.

In other words, $|T_k| \le G \le \Gamma_k = (n-k) \cdot \gamma(M, I) + 2G$
holds for all $k \in [n]$, meaning the algorithm passes all the tests
and stays in Stage 1 eventually. By definition, $H = \infty$ is always the case.
\end{proof}
Based on this lemma, one can carry out the same calculation and obtain
\[\exp(2G) \cdot \Ex\left[\exp\left(\frac {S_n[1] + S_n[2]}2\right)\right]
= \exp(2G) \cdot \Pr[H = \infty] \cdot \exp(n \alpha_1)
= O(\exp(\alpha_1)^n)\]
\begin{theorem}
\label{thm:onesided}
Suppose $q, J$ are positive integers, and $M \in \F^{q \times q}$ is a symmetric matrix, which has one-sided decomposition
$M = \sum_{j = 1}^J U_j \times V_j$. Let
\[\alpha_1 = \ln \left(\sum_{j=1}^J \sqrt{\nnz(U_j) \cdot \nnz(V_j)}\right),\]
then for any positive integer $n$, $M^{\otimes n}$ has a depth-2 linear circuit of size
$O(\exp(\alpha_1)^n)$.
\end{theorem}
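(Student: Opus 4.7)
The plan is to invoke essentially the same machinery built up for Theorem~\ref{thm:main}, but to observe that the one-sided hypothesis eliminates the only obstacle that forced the condition $\beta > \alpha_2 - \alpha_1$ in the general case, namely the possibility of the random walk ever entering the hard-balancing Stage~2. Concretely, I would run the same circuit construction via $\CU$ and $\CV$ updates, reduce its size to $\Ex[\exp(S_n)]$ through randomized algorithm~(II), and then use the preceding lemma (which states $\Pr[H=\infty]=1$ under one-sidedness) to drop the Stage~2 contribution entirely from the expectation.

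First I would note that the setup of Theorem~\ref{thm:main} applies verbatim: the imbalanced condition was only used at the end of Subsection~\ref{subsec:anaexp} to bound the geometric sum $\sum_{k=1}^n \exp((n-k)(\alpha_2-\alpha_1-\beta))$, not in the construction of $\CC(M^{\otimes n})$ itself nor in the correspondence $\Ex[\exp(S_n)]=(\nnz(\CC(M^{\otimes n})_A),\nnz(\CC(M^{\otimes n})_B))$. Next, the preceding lemma (which I take as given) tells us that under a one-sided decomposition, the one-dimensional gap $T_k$ stays trapped in $[-G,G]$ for every $k$, so $|T_k| < \Gamma_k = (n-k)\gamma(M,I)+2G$ never fails, and hence $H=\infty$ with probability~$1$. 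Combining this with the second branch of Claim~\ref{claim:sizesum} gives $S_n[1]+S_n[2]=2n\alpha_1$ almost surely.

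With $H=\infty$ deterministically, the tail sum over $k$ in the expectation calculation collapses: the expression
\[
\exp(2G)\cdot\Ex\!\left[\exp\!\left(\tfrac{S_n[1]+S_n[2]}{2}\right)\right]
\]
upper bounds $\max\{\Ex[\exp(S_n[1])],\Ex[\exp(S_n[2])]\}$ via the bound $|S_n[1]-S_n[2]|<4G$ established at the start of Subsection~\ref{subsec:anaexp}, and it evaluates to $\exp(2G)\cdot\exp(n\alpha_1)=O(\exp(\alpha_1)^n)$ since $G$ is a constant independent of $n$. This immediately yields the desired size bound $\size(\CC(M^{\otimes n}))\le 2\max\{\Ex[\exp(S_n[1])],\Ex[\exp(S_n[2])]\}=O(\exp(\alpha_1)^n)$, completing the theorem.

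I do not expect a real obstacle in this proof: the heavy lifting (the random-walk correspondence, the concentration argument, the sum/difference decomposition of $S_n$) is already done in the proof of Theorem~\ref{thm:main}. The only subtlety worth flagging is ensuring that the $o(n)$ slack in Theorem~\ref{thm:main} can be sharpened to $O(1)$ here; this is automatic because the geometric sum producing that slack has vanished, leaving only the constant $\exp(2G)$ as overhead.
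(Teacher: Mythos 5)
Your proposal matches the paper's own argument: the paper proves the preceding lemma that $\Pr[H=\infty]=1$ under a one-sided decomposition, then observes (exactly as you do) that $S_n[1]+S_n[2]=2n\alpha_1$ almost surely, bounds $\max\{\Ex[\exp(S_n[1])],\Ex[\exp(S_n[2])]\}\le\exp(2G)\cdot\Ex[\exp((S_n[1]+S_n[2])/2)]=\exp(2G)\cdot\exp(n\alpha_1)$, and concludes. Your remark that the $o(n)$ slack in Theorem~\ref{thm:main} collapses to $O(1)$ because the geometric tail disappears is precisely the point the paper makes implicitly.
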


Our motivation for studying this special case primarily stems from the fact
that rigidity upper bounds of a matrix naturally gives rise to a one-sided decomposition for the same matrix.
By definition, $M \in \F^{q \times q}$ has rank-$r$ rigidity $\rig_M(r)$ 
means there exist matrix $L, S \in \F^{q \times q}$ with $\rank(L) \le r, \nnz(S) \le \rig_M(r)$,
such that $M = L + S$. By rewriting, one further has $M = U \times V + I \times S$
for some $U \in \F^{q \times r}, V \in \F^{r \times q}$.
For the sake of simplicity, we w.l.o.g. assume $\nnz(U) = \nnz(V)$ are precisely $q \cdot r$,
as we will only use the pattern of non-zero entries in $U, V$.
Then we have the following corollary.

\begin{corollary}
\label{cor:rigLC}
Suppose $q, r$ are positive integers, and $M \in \F^{q \times q}$ is a symmetric matrix, with rank-$r$
rigidity $\rig_{M}(r) \ge q$. Then for any positive integer $n$, $M^{\otimes n}$ has a depth-2 linear circuit of size at most
$O\left(\left(q \cdot r + \sqrt{q \cdot \rig_M(r)}\right)^n\right)$.
\end{corollary}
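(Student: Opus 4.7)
The plan is to derive the desired depth-2 circuit directly from Theorem \ref{thm:onesided} by exhibiting a suitable one-sided decomposition of $M$ with exactly two summands, one coming from the low-rank part and one from the sparse correction in a rigidity decomposition.

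First I would unpack what a rank-$r$ rigidity bound gives us. By definition of $\rig_M(r)$, we can write $M = L + S$ with $\rank(L) \le r$ and $\nnz(S) \le \rig_M(r)$. Factor $L = U \times V$ with $U \in \F^{q \times r}$ and $V \in \F^{r \times q}$, so that $M = U \times V + I \times S$, which is a 2-term decomposition in the form of (\ref{eq:gendecomp}) with $(U_1, V_1) = (U, V)$ and $(U_2, V_2) = (I, S)$. As noted just before the corollary, we may assume without loss of generality that $\nnz(U) = \nnz(V) = qr$, since only the nonzero pattern of $U, V$ is used by the circuit construction.

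Next I would verify the one-sidedness hypothesis of Theorem \ref{thm:onesided}. For the first term, $\nnz(U_1) = qr = \nnz(V_1)$, so $\nnz(U_1) \le \nnz(V_1)$ trivially. For the second term, $\nnz(U_2) = \nnz(I) = q$ while $\nnz(V_2) = \nnz(S) \le \rig_M(r)$; the assumption $\rig_M(r) \ge q$ in the corollary is exactly what is needed to ensure $\nnz(U_2) \le \nnz(V_2)$. Thus the decomposition is one-sided in the sense of the definition preceding Theorem \ref{thm:onesided}, and it is automatically a valid input to that theorem (which, being a one-sided special case, requires no imbalance check).

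Finally, I would compute the relevant parameter:
\[
\exp(\alpha_1) = \sqrt{\nnz(U_1)\cdot\nnz(V_1)} + \sqrt{\nnz(U_2)\cdot\nnz(V_2)} \le \sqrt{qr \cdot qr} + \sqrt{q \cdot \rig_M(r)} = qr + \sqrt{q \cdot \rig_M(r)}.
\]
Invoking Theorem \ref{thm:onesided} then yields a depth-2 synchronous linear circuit for $M^{\otimes n}$ of size $O(\exp(\alpha_1)^n) = O((qr + \sqrt{q \cdot \rig_M(r)})^n)$, which is exactly the bound claimed. There is no serious obstacle in this argument: all the work has already been done in Theorem \ref{thm:onesided}, and the only thing that could go wrong is the one-sidedness check for the $(I, S)$ term, which is precisely why the hypothesis $\rig_M(r) \ge q$ is included in the statement.
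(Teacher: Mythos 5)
Your proposal is correct and matches the paper's own proof: both use the rigidity decomposition $M = U \times V + I \times S$, verify one-sidedness via $\nnz(I) = q \le \rig_M(r) = \nnz(S)$ and $\nnz(U) = \nnz(V) = qr$, and then apply Theorem~\ref{thm:onesided} to obtain the stated bound. Your write-up is slightly more detailed in computing $\exp(\alpha_1)$ explicitly, but the argument is the same.
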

\begin{proof}
Consider the decomposition $M = U \times V + I \times S$ shown above.
One can easily verify the one-sided property as
$\nnz(I) = q \le \nnz(S) = \rig_M(r)$, $\nnz(U) = \nnz(V) = q \cdot r$.
Therefore, directly applying Theorem \ref{thm:onesided} yields the desired result.
\end{proof}
\begin{remark}
\label{rem:asymLC} \label{rem:asym}
It is worth noting that Corollary \ref{cor:rigLC} works for asymmetric matrix $M$ as well. Indeed, one can instead prove Theorem \ref{thm:main} with decompositions $M = U \times V + I \times S$ and $M = U \times V + S \times I$, because $S$ and $I$ naturally commute. In this way the usage of property $M = M^T$ is circumvented,
giving rise to such a linear circuit for arbitrary matrix $M$.
\end{remark}
\begin{remark}
In Theorem \ref{thm:main}, Theorem \ref{thm:onesided} and Corollary \ref{cor:rigLC},
if the base matrix is itself a Kronecker power, i.e., $M = A^{\otimes k}$ for some matrix $A$ and 
fixed positive integer $k$,
we immediately have a small depth-2 linear circuit of matrix $A^{\otimes n}$, when $n$ is a positive multiple of $k$.
Whereas in the case where $k$ does not divide $n$, we instead apply this construction for the next multiple $n' > n$
of $k$, and then pick the appropriate submatrix of $M^{\otimes n'}$. One can verify that the size bound obtained blows up by at most a constant factor.
Therefore, in the following discussion, when the base matrix is indeed picked in this fashion,
the divisibility constraint is always implicitly removed.
\end{remark}

\subsection{Connection between our result and \cite{Alm21}}
\label{subsec:comparison}
One of the main results of~\cite{Alm21} gave a way to construct circuits for Kronecker powers based on rigidity upper bounds. We note in this section that this prior approach could not give a better depth-2 circuit size for $H_n$ than $O(N^{1.476})$, and that our Theorem \ref{thm:onesided} strictly improves the prior approach.
\begin{theorem}[\cite{Alm21}]\label{thm:old}
Suppose $q, n, r$ are positive integers, $N = 2^n$, and $M \in \F^{q \times q}$ is a symmetric matrix, with rank-$r$
rigidity $\rig_{M}(r) \ge q$. Then $M^{\otimes n}$ has a depth-2 linear circuit of size at most \\
$O((\sqrt{(qr + q) \cdot (qr + \rig_M(r))})^n) = O(N^{1 + c/2})$, where $c = \log_N((r+1)(r + \rig_{H_n}(r)/N))$.
\end{theorem}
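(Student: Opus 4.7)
The plan is to derive the circuit by directly combining the rigidity decomposition of $M$ with the mixed product property, and then to perform the balancing step using the symmetry of $M$. First, by the definition of rank-$r$ rigidity, we can write
\begin{equation*}
M = U \times V + S,
\end{equation*}
where $U \in \F^{q \times r}$, $V \in \F^{r \times q}$, and $\nnz(S) \le \rig_M(r)$; without loss of generality, assume $\nnz(U) = \nnz(V) = qr$ by only using the nonzero pattern. Rewriting this as a single matrix product in block form gives
\begin{equation*}
M = \bigl[\, U \mid I \,\bigr] \times \begin{pmatrix} V \\ S \end{pmatrix} \eqqcolon L \times R,
\end{equation*}
where $\nnz(L) = qr + q$ and $\nnz(R) = qr + \rig_M(r)$.

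Next, since $M$ is symmetric, I would also write
\begin{equation*}
M = M^T = \bigl[\, V^T \mid S^T \,\bigr] \times \begin{pmatrix} U^T \\ I \end{pmatrix} \eqqcolon L' \times R',
\end{equation*}
where now $\nnz(L') = qr + \rig_M(r)$ and $\nnz(R') = qr + q$, so the two factorizations have swapped balance between their two layers. The mixed product property then gives, for any subset $T \subseteq [n]$,
\begin{equation*}
M^{\otimes n} = \left(\bigotimes_{i=1}^n P_i\right) \times \left(\bigotimes_{i=1}^n Q_i\right),
\end{equation*}
where $(P_i, Q_i) = (L, R)$ for $i \in T$ and $(P_i, Q_i) = (L', R')$ for $i \notin T$. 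This produces a depth-2 synchronous circuit whose size is
\begin{equation*}
(qr+q)^{|T|}(qr+\rig_M(r))^{n-|T|} + (qr+q)^{n-|T|}(qr+\rig_M(r))^{|T|},
\end{equation*}
using multiplicativity of $\nnz$ under Kronecker products.

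Finally, I would pick $|T| = \lfloor n/2 \rfloor$ to balance the two terms. By AM-GM, each term is at most $\bigl(\sqrt{(qr+q)(qr+\rig_M(r))}\bigr)^n$ times a constant factor (absorbing the parity discrepancy when $n$ is odd into the $O(\cdot)$), yielding the claimed bound. The final equality with $N^{1+c/2}$ is a routine rewriting: factoring $q$ out of each parenthesized factor gives $q^n \sqrt{(r+1)(r + \rig_M(r)/q)}^n$, and setting $N = q^n$ and taking logs base $N$ of the remaining factor produces exactly $c$. The main conceptual step, and the reason symmetry of $M$ is invoked, is the ability to flip each coordinate independently between the two forms; the main obstacle is simply bookkeeping the balancing calculation cleanly enough to get a genuine geometric mean rather than a weaker bound.
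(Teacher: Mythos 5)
Your proof is correct, but it takes a more elementary route than the one the paper uses. The paper proves Theorem~\ref{thm:old} in one line by feeding the one-term decomposition $M = \bigl[\,U \mid I\,\bigr] \times \bigl(\begin{smallmatrix} V \\ S \end{smallmatrix}\bigr)$ into Theorem~\ref{thm:onesided}, so its correctness rests on the full random-walk machinery developed in Section~\ref{sec:main} (Lemma~\ref{lem:SimpleRP} and the Chernoff-based analysis of Subsection~\ref{subsec:anaexp}). You instead give a direct, self-contained argument: form $L \times R$ and its transpose $L' \times R'$, apply the mixed product property coordinate-by-coordinate choosing $(L,R)$ on roughly half the tensor factors and $(L',R')$ on the other half, and observe that this balances the two layers up to a constant factor. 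This is essentially the original argument one would expect from~\cite{Alm21} and does not require any of this paper's new machinery. What the paper's framing buys is uniformity: it shows that the old bound is literally the $J=1$ special case of the general one-sided bound, clarifying exactly where the improvement in Corollary~\ref{cor:rigLC} (using a genuine two-term decomposition $U\times V + I\times S$ rather than packing everything into a single rank-one factor) comes from. What your approach buys is transparency and independence from the heavy lemmas. Two small points worth noting: the balancing step is really just the observation $a^{\lfloor n/2 \rfloor} b^{\lceil n/2 \rceil} \le \sqrt{b/a}\,(\sqrt{ab})^{\,n}$ rather than AM--GM per se, and you are right to read $N = q^n$ in the final rewriting even though the theorem statement writes $N = 2^n$ and mixes in $\rig_{H_n}(r)$; that part of the statement is a somewhat loose specialization to the Walsh--Hadamard case and your interpretation is the sensible one.
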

\begin{proof}
Using the same notation as in the previous subsection, the result directly follows from Theorem \ref{thm:onesided} accompanied with one-sided decomposition
\[M =\left(\begin{array}{c|c} U & I \end{array}\right) \times
\left(\begin{array}{c} V \\ \hline S \end{array}\right).\]
\end{proof}

This bound, by AM-GM inequality is provably worse than the result of Corollary \ref{cor:rigLC}.
Furthermore, as we will prove shortly, $O(N^{1.476})$ given in \cite{Alm21} is the optimal size achievable using Alman's approach, in conjunction with rank-1 rigidity bounds. By comparison, the same non-rigidity results leads to a size bound $O(N^{1.443})$ shown in Theorem \ref{thm:WHLCintro}.

In fact, to construct a smaller circuit using Alman's approach, one must use a rank-1 rigidity upper bound other than $\rig_{H_4}(1) = 96$. However, we verify using brute-force search that $\rig_{H_5}(1) = 432$, which further implies rigidity lower bound
$\rig_{H_n}(1) \ge 432 \cdot 4^{n-5}$ for any integer $n \ge 5$. (This uses the known fact that, for any positive integers $r \leq n$, we have $\rig_{H_n}(r) \geq 4 \cdot \rig_{H_{n-1}}(r)$.)
Then after plugging this result into $c = \log_{2^n}(2 \cdot (1 + \rig_M(r)/2^n))$ from Theorem~\ref{thm:old} and minimizing the result over integers $n$, we find that $c < 0.952$ is indeed the minimum, corresponding to $H_4$ as used by the prior work.

\section{Kronecker power of 2-by-2 matrices and Walsh-Hadamard matrices}
In this section we will first show a rank-1 rigidity upper bound for Kronecker power of 2-by-2 matrices, then for Walsh-Hadamard matrices. 
With these rigidity bounds, we therefore make use of Corollary \ref{cor:rigLC} to construct small linear circuits for both matrices.

\begin{theorem}
\label{thm:rigKron}
Suppose $n$ is a positive integer, and $M \in \F^{2 \times 2}$ is a $2 \times 2$ matrix. Let $M_n = M^{\otimes n}$, then we have
\[\rig_{M_n}(1) \le \left\{\begin{array}{ll}
2^{2n} - 2^n \cdot \binom {n+1} {n/2}
& \text{ if } n \text{ is even}, \\
2^{2n} - 2^{n-1} \cdot \binom {n+2} {(n+1)/2} 
& \text{ if } n \text{ is odd}. \\
\end{array}\right.\]
\end{theorem}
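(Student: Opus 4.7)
The plan is to apply the polynomial method of Lemma~\ref{lem：polymethod} at rank $r = 1$, reducing the task to exhibiting functions $g, h : \{0,1\}^n \to \F$ such that $g(x) h(y) = M_n[x, y]$ on many pairs. First, by a direct argument when $M$ has a zero row or column (in which case $M_n$ itself is sparse enough that the claimed bound is immediate), I may assume $M$ is outer-nonzero; then Lemmas~\ref{lem:outer1} and~\ref{lem:diag} reduce us to the canonical form $M = \begin{pmatrix}1 & 1\\ 1 & \omega\end{pmatrix}$, so that $M_n[x, y] = \omega^{\langle x, y\rangle}$ with $\langle x, y\rangle = \sum_i x_i y_i$. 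The goal becomes: choose $g, h$ so that $g(x) h(y) = \omega^{\langle x, y\rangle}$ on at least $c_n$ pairs, where $c_n = 2^n \binom{n+1}{n/2}$ for even $n$ and $c_n = 2^{n-1} \binom{n+2}{(n+1)/2}$ for odd $n$; Lemma~\ref{lem：polymethod} then converts this match count into the stated rank-$1$ rigidity bound.

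The construction exploits the identity $\langle x, y\rangle = (|x| + |y| - |x \oplus y|)/2$. For even $n$, I restrict attention to the two consecutive Hamming-distance shells $|x \oplus y| \in \{n/2 - 1, n/2\}$, which have complementary parities of $|x|+|y|$; on each shell, $\omega^{\langle x, y\rangle}$ can be made to factor as $g(x) h(y)$ with $g(x)$ essentially proportional to $\omega^{|x|/2}$ and $h(y)$ essentially proportional to $\omega^{|y|/2}$, up to a distance-dependent scalar. Using Theorem~\ref{thm:interpolation} separately on the even- and odd-weight subclasses of $|x|$ and $|y|$, I realize $g$ and $h$ as low-degree polynomials in $\F[x_1, \ldots, x_n]$ and $\F[y_1, \ldots, y_n]$ taking the prescribed values on the targeted weight classes; this sidesteps the need for a square root of $\omega$ in $\F$. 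Counting matches: each $x$ has $\binom{n}{n/2 - 1} + \binom{n}{n/2}$ partners at the two chosen distances, giving a total of $2^n \binom{n+1}{n/2}$ by Pascal's rule. The odd-$n$ case follows analogously, with the two chosen shells being of the same parity; this forces an additional parity restriction on one of $|x|$ or $|y|$, which explains the $2^{n-1}$ factor, and the Pascal identity $\binom{n+2}{(n+1)/2} = \binom{n+1}{(n+1)/2} + \binom{n+1}{(n-1)/2}$ assembles the total $2^{n-1}\binom{n+2}{(n+1)/2}$.

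The main obstacle is the construction of $g$ and $h$: a single pair must simultaneously realize the correct rank-$1$ match on both chosen Hamming-distance shells, which demands that the interpolation polynomial values be consistent across the relevant weight classes for both shells at once. The ``alternative scheme of matrix indexing'' alluded to in the paper is essentially to sort weight classes by parity of $|x|$ and $|y|$ and apply Theorem~\ref{thm:interpolation} within each parity group independently, which keeps the interpolation degree small and ensures $g, h \in \F$ even when $\omega$ has no square root in $\F$. The remainder of the proof is a routine counting step via Pascal's rule, with Lemma~\ref{lem:multisection} available as an alternative bookkeeping tool should a multisection-based evaluation of the match count be preferred.
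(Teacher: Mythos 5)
Your approach is in the same spirit as the paper's (a rank-one approximant whose entry at $(x,y)$ depends only on $|x|$ and $|y|$, with the match count controlled by a few middle Hamming-distance shells), but the specific shell choice you propose does not work. You claim to match on \emph{all} pairs with $|x \oplus y| \in \{n/2-1, n/2\}$, but that is impossible. Any rank-one $g(x)h(y)$ that agrees with $\omega^{\langle x,y\rangle}$ on a connected shell is forced to depend only on $|x|$ and $|y|$; writing $g_i, h_j$ for the values on weight classes $i, j$, the four achievable constraints $g_{n/4}h_{n/4} = \omega^0$, $g_{n/4}h_{n/4+1} = \omega^1$, $g_{n/4+1}h_{n/4} = \omega^1$, $g_{n/4+1}h_{n/4+1} = \omega^1$ (at $d = n/2, n/2-1, n/2-1, n/2$ respectively, for $n \equiv 0 \bmod 8$ say) force $\omega = \omega^2$, hence $\omega = 1$. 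The paper resolves this by using \emph{three} shells $\{n/2-1, n/2, n/2+1\}$ and assigning one shell per parity class of $(|x| \bmod 2, |y| \bmod 2)$: the block with $|x|$ odd, $|y|$ even matches at distance $n/2+1$, not $n/2-1$. The final count coincides with yours only because $\binom{n}{n/2-1} = \binom{n}{n/2+1}$; the underlying matched set is different, and yours is inconsistent as stated.

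Two smaller issues with the machinery you invoke. First, Lemma~\ref{lem：polymethod} applied at $r=1$ would require $f(x,y)$ to expand into a \emph{single} monomial, which $g(x)h(y)$ with low-degree $g,h$ certainly is not; the lemma bounds rank by the monomial count, not by whether $f$ factors. For a rank-$1$ rigidity bound you should argue directly, as the paper does, that $L[x,y] = U[x]\cdot V[y]$ has rank one for any functions $U,V$ whatsoever — no polynomial structure is needed. Second, Theorem~\ref{thm:interpolation} is likewise unnecessary: the paper sidesteps the square-root issue simply by choosing integer exponents $b_1[x] = \lfloor |x|/2 \rfloor$ and $b_2[y] = \lceil |y|/2 \rceil - n/4$ (with the case-by-case $\pm n/4$ constants adjusted for $n \bmod 4$), and the discrepancy between $b_1[x] + b_2[y]$ and $\langle x,y\rangle$ lands exactly on the parity-dependent correction matrix $F$. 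No interpolation polynomial is involved, and introducing one would not remove the inconsistency above, which is a constraint-satisfaction obstruction independent of how $g,h$ are represented.
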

To prove this theorem, it suffices to construct matrices $U \in \F^{2^n \times 1},
V \in \F^{1 \times 2^n}$, such that $L := U \times V$ agrees with $M_n$ on at least 
$2^n \cdot \binom {n+1}{n/2}$ elements for even $n$,
and at least $2^{n-1} \cdot \binom {n+2} {(n+1)/2}$ elements for odd $n$.
Based on Lemma \ref{lem:diag} and Lemma \ref{lem:outer1}, we w.l.o.g. assume $M$ is an outer-1 matrix, i.e. $M$ takes the form $M = \begin{pmatrix} 1 & 1 \\ 1 & \omega \end{pmatrix}$ with
some $\omega \in \F$.
Then the elements of $M_n$ can be explicitly written as $M_n[x, y] = \omega^{\ip x y}$ for $x, y \in\{0, 1\}^n$.

Before entering the formal proof, we present here some intuitions behind the construction.
Denote $N = 2^n$, and $\bo \in \F^{N}$ to be the all-1 vector. 
We first make the observation that
$\ip{x - \frac 12 \bo}{y - \frac 12 \bo} =
\ip {x}{y} - \frac 12\ip{x}{\bo} - \frac 12\ip{\bo}{y} + \frac n4$.
This means we can form diagonal matrices $D_1, D_2$ with 
$D_1[x, x] = \omega^{- \frac 12\ip{x}{\bo}}$ and 
$D_2[y, y] = \omega^{- \frac 12\ip{\bo}{y} + \frac n4}$ such that 
$M_n' = D_1 \times M_n \times D_2$ is precisely the matrix with 
elements $M_n'[x, y] = \omega^{\ip{x - \frac 12 \bo}{y - \frac 12 \bo}}$.
Hence again by Lemma \ref{lem:diag}, we can refocus on matrix $M_n'$, which has the same rigidity as $M_n$.

At this point, if we denote $c$ to be the most frequent value of
$\ip{x - \frac 12 \bo}{y - \frac 12 \bo}$ 
over all pairs $(x, y) \in \{0, 1\}^n \times \{0, 1\}^n$, 
a constant matrix filled with elements $\omega^{c}$ is already 
a promising rank-1 approximation to $M_n'$.
However, following this line, we can construct a better rank-1 matrix
by noticing that the frequency of $\ip{x - \frac 12 \bo}{y - \frac 12 \bo}$ varies among blocks corresponding to different parity of $|x| = \ip{x}{\bo}$ and $|y| = \ip{\bo}{y}$.
For example, when $n$ is a multiple of 4, the most frequent value of $\ip{x - \frac 12 \bo}{y - \frac 12 \bo}$ is $0$ in the block with both $|x|$ and $|y|$ odd, and $\pm 1/2$ in the block with $|x|$ odd and $|y|$ even.
Therefore, if we partition $M_n'$ with respect to the parity conditions mentioned above,
and find the most frequent values in each block, then
matrix $L'$ consisting of four constant blocks filled with corresponding powers of $\omega$
is expected to be a better rank-1 approximation.

It is also worth remarking that the intuitions stated above are rather informal. Most technical issues,
including $- \frac 12\ip{x}{\bo}$ is not necessarily an integer, the matrix composed of constant blocks may have rank higher than 1, etc., are left to be solved in the formal proof.
Additionally, in order to integrate cases with different $\omega$, we will make use of $M_n'$ only implicitly in the formal proof.

\subsection{Proof of theorem \ref{thm:rigKron}}
\paragraph{Case 1: $n \equiv 0 \pmod 4$.}
Let $b_1, b_2 \in \Z^{N}$ be vectors with 
\[b_1[x] = \floor{\frac 12\ip{x}{\bo}}, 
x \in \{0, 1\}^n, \quad
b_2[y] = \ceil{\frac 12\ip{\bo}{y}} - \frac n4,
y \in \{0, 1\}^n.\]
Then construct matrices $U \in \F^{N \times 1}, V \in \F^{1 \times N}$ such that for $x, y \in \{0, 1\}^n, U[x, 1] = \omega^{b_1[x]}, V[1, y] = \omega^{b_2[y]}$.
One can thus see that if $b_1[x] + b_2[y] = \ip{x}{y}$, then $L[x, y] = U[x, 1] \cdot V[1, y] = M_n[x, y]$. Therefore, $\nnz(M_n - L) \le N^2 - \#[b_1[x] + b_2[y] = \ip{x}{y}]$.

By definition, $\ip{x}{y} - (b_1[x] + b_2[y]) = 
\ip {x}{y} - \floor{\frac 12\ip{x}{\bo}} - \ceil{\frac 12\ip{\bo}{y}} + \frac n4$.
We can rewrite $\floor{\frac 12\ip{x}{\bo}}$ and $\ceil{\frac 12\ip{\bo}{y}}$ as
\[\floor{\frac 12\ip{x}{\bo}} = \frac 12\ip{x}{\bo} + 
\left\{\begin{array}{ll}
-1/2, & \text{ if } \ip {x}{\bo} \text{ is odd}, \\
0, & \text{ if } \ip {x}{\bo} \text{ is even}.
\end{array}\right. ~~
\ceil{\frac 12\ip{\bo}{y}} = \frac 12\ip{\bo}{y} + 
\left\{\begin{array}{ll}
1/2, & \text{ if } \ip{\bo}{y} \text{ is odd}, \\
0, & \text{ if } \ip{\bo}{y} \text{ is even}.
\end{array}\right.\]
Therefore, if we define $F \in \Q^{N \times N}$ to be the matrix such that for $x, y\in \{0, 1\}^n$,
\[F[x, y] = \left\{\begin{array}{ll}
-1/2, & \text{ if } \ip {x}{\bo} \text{ is odd}, \ip{\bo}{y} \text{ is even}, \\
0, & \text{ if } \ip {x}{\bo}, \ip{\bo}{y} \text{ are both odd or both even},\\
1/2, & \text{ if } \ip {x}{\bo} \text{ is even}, \ip{\bo}{y} \text{ is odd},
\end{array}\right.\]
then $b_1[x] + b_2[y] = \ip{x}{y}$ is equivalent to 
\[\ip{x}{y} - (b_1[x] + b_2[y]) = \ip {x}{y} - \frac 12\ip{x}{\bo}
- \frac 12\ip{\bo}{y} + \frac n4 - F[x, y] 
= \ip{x - \frac 12}{y - \frac 12} - F[x, y] = 0.\]
Here $F$ is in fact the matrix with the most frequent values of 
$\ip{x - \frac 12}{y - \frac 12}$ in respective blocks.

Now it only remains to count the number of $(x, y)$ pairs satisfying the condition above.
For simplicity, we use the term \emph{good pair} to denote such a pair in the following discussion.
For a good pair $(x, y)$, we partition $[n]$ into 2 sets as follows. 
For $a \in \{0, 1\}$, define set $I_{a} = \{i \in [n] ~|~ x[i] \oplus y[i] = a\} \subseteq [n]$.
Then we simultaneously have
$|I_0| + |I_1| = n$, and
$\ip{x - \frac 12 \bo}{y - \frac 12 \bo} - F[x, y] = \frac 14 (|I_0| - |I_1|) - F[x, y] = 0$,
which implies
\[|I_0| = \frac n2 + 2F[x, y], \quad 
|I_1| = \frac n2 - 2F[x, y].\]

Therefore we have the following procedure which precisely generates all good pairs. 
\begin{enumerate}
\item Pick the parity of $\ip{x}{\bo}$ and $\ip{\bo}{y}$.
\item Pick a subset of $[n]$ of size $(n/2 + 2F[x, y])$ to be $I_0$ 
($F[x, y]$ is determined by the parity combination); $I_1 = [n] \backslash I_0$.
\item For each $i \in I_0 \cap [n-1]$, pick $(x[i], y[i])$ to be either $(0, 0)$ or $(1, 1)$.
\item For each $i \in I_1 \cap [n-1]$, pick $(x[i], y[i])$ to be either $(0, 1)$ or $(1, 0)$.
\item Use $(x[n], y[n])$ to adjust the parity of $\ip{x}{\bo}$ and $\ip{\bo}{y}$,
such that they have the same parity as what is picked in step 1.
\end{enumerate}

Define matrix $\TF \in \Q^{\{0, 1\} \times \{0, 1\}}$ to be the representatives corresponding to the four blocks of $F$: For $i, j \in \{0, 1\}$, $\TF[i, j] = F[x, y]$ if $\ip {x}{\bo} \equiv i \pmod 2, \ip{\bo}{y} \equiv j \pmod 2$ . (This is well-defined by the definition of $F$.)
Then following the procedure for generating good pairs, we can compute the number of good pairs 
$g:= \#[\ip{x - \frac 12 \bo}{y - \frac 12 \bo} - F[x, y] = 0]$ as
\begin{align*}
g &= \sum_{i, j \in \{0, 1\}} \binom{n}{n/2 - 2\TF[i, j]} 
\cdot 2^{n/2 - 2\TF[i, j]} \cdot 2^{n/2 + 2\TF[i, j]} \cdot 2^{-1} \\
&= \left[\binom{n}{n/2-1} + 2\binom{n}{n/2} + \binom{n}{n/2+1}\right] \cdot 2^{n-1} 
= 2^{n} \cdot \binom{n+1}{n/2}.
\end{align*}
Hence we have $\rig_{M_n}(1) \le 
\nnz(M - L) \le N^2 - \#[b_1[x] + b_2[y] = \ip{x}{y}] = 2^{2n} - 2^{n} \cdot \binom{n+1}{n/2}$ as desired.

For other cases, we only compute $g$, from which the claim immediately follows.
\paragraph{Case 2: $n \equiv 2 \pmod 4$.} Similarly, consider vectors $b_1, b_2 \in \Z^{N}$ with
\[b_1[x] = \ceil{\frac 12\ip{x}{\bo}},
x \in \{0, 1\}^n, \quad
b_2[y] = \ceil{\frac 12\ip{\bo}{y}} - \frac{n+2}4,
y \in \{0, 1\}^n.\]
The same calculation shows that for $x, y \in \{0, 1\}^n$, $b_1[x] + b_2[y] = \ip{x}{y}$ is equivalent to 
$\ip{x - \frac 12}{y - \frac 12} - F[x, y] = 0$.
Here $F[x, y]$ is defined as
\[H[x, y] = \left\{\begin{array}{ll}
-1/2, & \text{ if } \ip {x}{\bo}, \ip{\bo}{y} \text{ are both even}, \\
0, & \text{ if } \ip {x}{\bo}, \ip{\bo}{y} \text{ have different parity},\\
1/2, & \text{ if } \ip {x}{\bo}, \ip{\bo}{y} \text{ are both odd}.
\end{array}\right.\]
Therefore we have
\[g= \left[\binom{n}{n/2-1} + 2\binom{n}{n/2} + \binom{n}{n/2+1}\right] \cdot 2^{n-1} 
= 2^{n} \cdot \binom{n+1}{n/2}.\]

\paragraph{Case 3: $n \equiv 1 \pmod 4$.} Consider vectors $b_1, b_2 \in \Z^N$ with
\[b_1[x] = \floor{\frac 12\ip{x}{\bo}},
x \in \{0, 1\}^n, \quad
b_2[y] = \ceil{\frac 12\ip{\bo}{y}} - \frac{n-1}4,
y \in \{0, 1\}^n.\]
For $x, y \in \{0, 1\}^n$, 
$b_1[x] + b_2[y] = \ip{x}{y} \Leftrightarrow \ip{x - \frac 12}{y - \frac 12} - F[x, y] = 0$,
where
\[F[x, y] = \left\{\begin{array}{ll}
-1/4, & \text{ if } \ip {x}{\bo} \text{ is odd}, \ip{\bo}{y} \text{ is even}, \\
1/4, & \text{ if } \ip {x}{\bo}, \ip{\bo}{y} \text{ are both odd or both even},\\
3/4, & \text{ if } \ip {x}{\bo} \text{ is even}, \ip{\bo}{y} \text{ is odd},
\end{array}\right.\]
\[g = \left[\binom{n}{(n-1)/2} + 2\binom{n}{(n+1)/2} + \binom{n}{(n+3)/2}\right] \cdot 2^{n-1}
= 2^{n-1} \cdot \binom{n+2}{(n+1)/2}.\]

\paragraph{Case 4: $n \equiv 3 \pmod 4$.} Consider vectors $b_1, b_2 \in \Z^N$ with
\[b_1[x] = \ceil{\frac 12\ip{x}{\bo}},
x \in \{0, 1\}^n, \quad
b_2[y] = \ceil{\frac 12\ip{\bo}{y}} - \frac{n+1}4,
y \in \{0, 1\}^n.\]
For $x, y \in \{0, 1\}^n$, 
$b_1[x] + b_2[y] = \ip{x}{y} \Leftrightarrow \ip{x - \frac 12}{y - \frac 12} - F[x, y] = 0$,
where
\[F[x, y] = \left\{\begin{array}{ll}
-3/4, & \text{ if } \ip {x}{\bo}, \ip{\bo}{y} \text{ are both even}, \\
-1/4, & \text{ if } \ip {x}{\bo}, \ip{\bo}{y} \text{ have different parity},\\
1/4, & \text{ if } \ip {x}{\bo}, \ip{\bo}{y} \text{ are both odd}.
\end{array}\right.\]
\[g = \left[\binom{n}{(n-3)/2} + 2\binom{n}{(n-1)/2} + \binom{n}{(n+1)/2}\right] \cdot 2^{n-1}
= 2^{n-1} \cdot \binom{n+2}{(n+1)/2}.\]
\begin{remark}
In Case 3 and 4, not all elements in $\TF$ can be picked as the most frequent value in the corresponding block because of the rank constraint. As one can see
$F[x, y] = 3/4$ and $F[x, y] = -3/4$ are in fact the second frequent values in the pertinent block.
\end{remark}

\subsection{Rigidity of Walsh-Hadamard matrices}
Walsh-Hadamard matrices can be seen as a special case of Kronecker powers with $\omega = -1$.
Therefore, the rigidity upper bound shown in Theorem \ref{thm:rigKron} automatically holds for Walsh-Hadamard matrices as well. However, we can further take advantage of the small multiplicative order of $\omega = -1$, and improve on the bound as follows.

\begin{theorem}
\label{thm:rigWH}
Suppose $n$ is a positive integer, and $H_n$ is the Walsh-Hadamard matrix of size $2^n \times 2^n$. Then we have
\[\rig_{H_n}(1) \le 
\left\{\begin{array}{ll}
2^{2n-1} - 2^{1.5n-1}
& \text{ if } n \text{ is even}, \\
2^{2n-1} - 2^{1.5(n-1)}
& \text{ if } n \text{ is odd}. \\
\end{array}\right.
\]
\end{theorem}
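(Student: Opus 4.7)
My plan is to reuse the rank-one construction of Theorem~\ref{thm:rigKron} almost verbatim, exploiting that for the Walsh--Hadamard case $\omega = -1$ the value $(-1)^c$ depends only on $c \bmod 2$. Concretely, let $b_1, b_2 \in \Z^{2^n}$ be the integer sequences of Theorem~\ref{thm:rigKron} (using the subcase matching $n \bmod 4$) and set $U[x] = (-1)^{b_1[x]}$, $V[y] = (-1)^{b_2[y]}$, so $L = U \times V$ has rank one. Then $L[x, y] = H_n[x, y]$ whenever
\[b_1[x] + b_2[y] \equiv \langle x, y\rangle \pmod 2,\]
which is strictly weaker than the equality enforced in the Kronecker-power argument, and every extra agreement directly sharpens the rigidity bound.

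To count agreements, I would substitute $\langle x, y\rangle = (|x| + |y| - |I_1|)/2$ (with $I_1 = \{i : x[i] \neq y[i]\}$) into the above congruence. After simplification, $b_1[x] + b_2[y] - \langle x, y\rangle$ becomes an affine function of $|I_1|$, corrected by a term in $\{0, \pm 1/2\}$ that depends only on the parities $\varepsilon_x = |x| \bmod 2$ and $\varepsilon_y = |y| \bmod 2$. The mod-$2$ condition therefore collapses to $|I_1| \equiv r_{\varepsilon_x, \varepsilon_y} \pmod 4$, where the residues $r_{\cdot,\cdot}$ are tabulated from the four cases of Theorem~\ref{thm:rigKron}. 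A direct check shows that, in every subcase, the four parity combinations yield only three distinct residues (one pair always coincides), so the construction hits three of the four residue classes of $|I_1| \bmod 4$ --- a large gain over the single class counted in Theorem~\ref{thm:rigKron}.

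The count then proceeds by a routine generating-function argument: for any fixed $m$-subset $I_1 \subseteq [n]$, the identity $(1+z)^{n-m}(1+z)^m = (1+z)^n$ implies that exactly $2^{n-1}$ of the $2^n$ pairs $(x, y)$ compatible with $I_1$ realize each parity combination consistent with $m \bmod 2$. Summing yields
\[g \;=\; 2^{n-1}\bigl[\,2 \cdot \mathrm{MS}(n, 4, r_0) + \mathrm{MS}(n, 4, r_1) + \mathrm{MS}(n, 4, r_2)\,\bigr],\]
where $r_0$ denotes the doubly-counted residue. Applying Lemma~\ref{lem:multisection} evaluates each $\mathrm{MS}(n, 4, \cdot)$ in closed form: the relevant cosines lie in $\{1, 0, -1\}$ for even $n$ and in $\{\pm\sqrt{2}/2\}$ for odd $n$, and in both parities the algebra collapses cleanly to $g = 2^{2n-1} + 2^{\lfloor 3n/2\rfloor - 1}$, giving $\rig_{H_n}(1) \le 2^{2n} - g$, which is exactly the stated bound.

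The main obstacle is bookkeeping rather than technique: each of the four subcases $n \bmod 4 \in \{0,1,2,3\}$ requires its own $(b_1, b_2)$ and its own table of residues, and which pair of parity combinations shares a residue (the equal-parity pair in Cases~1 and 3, the cross-parity pair in Cases~2 and 4) varies with the subcase. Nevertheless, the multisection identity is powerful enough that every subcase collapses to the same final formula for $g$, so the proof is uniform once the residues have been tabulated.
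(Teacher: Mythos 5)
Your proposal is correct and follows essentially the same route as the paper: reuse the $(b_1,b_2)$ from Theorem~\ref{thm:rigKron}, observe that for $\omega=-1$ the agreement condition $b_1[x]+b_2[y]=\ip{x}{y}$ relaxes to a congruence mod~$2$, recast this as a mod-$4$ condition on the size of the symmetric-difference set (equivalently $|I_0|$ or $|I_1|$), and evaluate the resulting sum with the multisection identity of Lemma~\ref{lem:multisection}. Your observation that exactly one pair of the four parity classes always shares a residue (equal-parity in Cases 1 and 3, cross-parity in Cases 2 and 4) matches the structure of the matrix $\TF$ used in the paper, and your final formula $g = 2^{2n-1} + 2^{\lfloor 3n/2 \rfloor - 1}$ agrees with the paper's computation.
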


Let $b_1, b_2, U, V, L, F$ be defined as above (respectively for four cases).
From the previous proof, we know that if $b_1[x] + b_2[y] = \ip{x}{y}$, then
$L[x, y] = M_n[x, y]$.
When $\omega = -1$, the condition can be relaxed.
In fact, if $b_1[x] + b_2[y] \equiv \ip{x}{y} \pmod 2$,
then $b_1[x] + b_2[y] - \ip{x}{y}$ is even, and thus
\[L[x, y] = U[x, 1] \cdot V[1, y] = (-1)^{\ip{x}{y}}
\cdot (-1)^{b_1[x] + b_2[y] - \ip{x}{y}} = H_n[x, y].\]
Therefore $\nnz(H_n - L) \le 2^{2n} - \#[b_1[x] + b_2[y] - \ip{x}{y} \equiv 0 \pmod 2]
=2^{2n} - \#[\ip{x - \frac 12 \bo}{y - \frac 12 \bo} - F[x, y] \equiv 0 \pmod 2]$.

Then we can use the a slightly modified procedure to generate all good pairs.
Now after step 1, we first pick integer $k$ such that 
$0 \le n/2 - 2(F[x, y] + 2k) \le n$.
Or equivalently, we pick integer $0 \le t \le n$ such that $t \equiv n/2 - 2F[x, y] \pmod 4$. 
Then we perform step 2 by picking a subset of $[n]$ of size $t$.
Hence the number of good pairs $g:= \#[\ip{x - \frac 12 \bo}{y - \frac 12 \bo} - F[x, y] = 0 \pmod 2]$ can be computed as
\[ g = \sum_{i, j \in \{0, 1\}}
\sum_{\begin{substack}{t \equiv n/2 - 2\TF[i, j] \text{ (mod 4)}\\
0 \le t \le n}\end{substack}}
\binom{n}{t}
\cdot 2^{t} \cdot 2^{n - t} \cdot 2^{-1}.\]
By series multisection identities (Lemma \ref{lem:multisection}),
when $n$ is even,
\[g = 2^{n-1} \cdot 
\left[\MS(n, 4, \frac n2 - 1) + 2 \cdot \MS(n, 4, \frac n2) + \MS(n, 4, \frac n2 +1)\right]
= 2^{2n-1} + 2^{1.5n-1}.\]
When $n$ is odd,
\begin{align*}
g &= 2^{n-1} \cdot 
\left[\MS(n, 4, \frac n2 + \frac 12) + 2 \cdot \MS(n, 4, \frac n2 - \frac 12) 
+ \MS(n, 4, \frac n2 - \frac 32)\right] \\
&= 2^{n-1} \cdot 
\left[\MS(n, 4, \frac n2 - \frac 12) + 2 \cdot \MS(n, 4, \frac n2 + \frac 12) 
+ \MS(n, 4, \frac n2 + \frac 32)\right] \\
&= 2^{2n-1} + 2^{1.5(n-1)}.
\end{align*}
Thereby $\rig_{H_n}(1) \le \nnz(H_n-L)\le 2^{2n} - g$, which is precisely the statement of
Theorem \ref{thm:rigWH}.

\subsection{Linear circuits for Kronecker power and Walsh-Hadamard matrices}
With the rigidity upper bounds shown above, we can apply Corollary \ref{cor:rigLC}
with base matrix respectively $R_k$ and $H_k$ for some positive integer $k$,
then obtain depth-2 linear circuits as below.

\begin{theorem}
Let $M \in \F^{2 \times 2}$ be a $2 \times 2$ matrix on field $\F$.
For any positive integer $k, n$, let $N = 2^n$. Then
$M_n = M^{\otimes n}$ has a depth-2 linear circuit of size
$O(N^{c(k)})$, where
\[c(k) = \frac{\log\left(2^k + 2^k \sqrt{2^{k} - \binom{k+1}{k/2}}\right)}{\log (2^k)}\]
if $k$ is even, while
\[c(k) = \frac{\log\left(2^k + 2^k \sqrt{2^{k} - \frac 12 \binom{k+2}{(k-1)/2}}\right)}{\log (2^k)}\]
if $k$ is odd. In particular, when $k=6$,
$M_n = M^{\otimes n}$ has a depth-2 linear circuit of size $O(N^{c(6)}) = O(N^{1.446})$.
\end{theorem}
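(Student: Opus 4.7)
The plan is to apply Corollary \ref{cor:rigLC} with the base matrix taken as the $k$-th Kronecker power $M^{\otimes k}$ (of size $q = 2^k$) at rank $r = 1$, substituting in the rank-$1$ rigidity upper bound from Theorem \ref{thm:rigKron}. The remark immediately following Corollary \ref{cor:rigLC} licenses taking the base matrix to be a Kronecker power and absorbs the case $k \nmid n$ into a constant factor via padding, so this produces a depth-$2$ linear circuit for $M^{\otimes n} = (M^{\otimes k})^{\otimes (n/k)}$ of size $O\bigl((2^k + \sqrt{2^k \cdot \rig_{M^{\otimes k}}(1)})^{n/k}\bigr)$.

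Substituting the rigidity bound: for even $k$, we get $\sqrt{2^k \cdot \rig_{M^{\otimes k}}(1)} \le 2^k \sqrt{2^k - \binom{k+1}{k/2}}$, so the circuit size is $O\bigl((2^k + 2^k\sqrt{2^k - \binom{k+1}{k/2}})^{n/k}\bigr)$; the odd-$k$ case is entirely analogous using the other branch of Theorem \ref{thm:rigKron}. Taking $\log_2$ and dividing by $n$, with $N = 2^n$, produces exactly $c(k) = \log_{2^k}\bigl(2^k + 2^k\sqrt{2^k - \binom{k+1}{k/2}}\bigr)$ as stated. For the concrete bound at $k = 6$, we compute $\binom{7}{3} = 35$ and thus $c(6) = \log_{64}(64 + 64\sqrt{29})$; since $\sqrt{29} < 5.386$, we have $64 + 64\sqrt{29} < 408.7$, so $c(6) < \log_2(408.7)/6 < 1.446$.

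Two minor points will require attention. First, Corollary \ref{cor:rigLC} is stated only for symmetric $M$, whereas the theorem allows an arbitrary $M \in \F^{2 \times 2}$; Remark \ref{rem:asym} removes this restriction by instead using the pair of decompositions $M = U \times V + I \times S$ and $M = U \times V + S \times I$, exploiting that $I$ and $S$ commute in place of $M = M^T$. Second, the hypothesis $\rig_{M^{\otimes k}}(1) \ge 2^k$ of Corollary \ref{cor:rigLC} could in principle fail for highly degenerate choices of $M$ (for instance when $M^{\otimes k}$ is itself essentially rank $1$), but in such cases $M^{\otimes n}$ admits an even smaller direct construction and the claimed bound holds trivially. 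Since both Theorem \ref{thm:rigKron} and Corollary \ref{cor:rigLC} have already been proven, this theorem is essentially a mechanical combination of the two: there is no substantial obstacle beyond careful bookkeeping around the choice of base matrix and the final numerical verification for $k = 6$.
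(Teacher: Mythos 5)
Your proposal is correct and follows exactly the paper's intended route: apply Corollary~\ref{cor:rigLC} (with the asymmetry caveat handled by Remark~\ref{rem:asym} and divisibility handled by the subsequent remark) to base matrix $M^{\otimes k}$ at rank $r=1$, plug in Theorem~\ref{thm:rigKron}, simplify, and evaluate at $k=6$. One small note: for odd $k$ your analogous derivation would yield $\binom{k+2}{(k+1)/2}$ inside the square root rather than the $\binom{k+2}{(k-1)/2}$ printed in the theorem statement; that appears to be a typo in the paper rather than an error in your argument, and it does not affect the $k=6$ computation.
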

\begin{theorem}
For any field $\F$ and any positive integer $k, n$, let $N = 2^n$.
Then the Walsh-Hadamard matrix $H_n \in \F^{N \times N}$ 
has a depth-2 linear circuit of size
$O(N^{c(k)})$, where
\[c(k) = \frac{\log\left(2^k + 2^k \sqrt{2^{k-1} - 2^{(k-2)/2}}\right)}{\log (2^k)}\]
if $k$ is even, while
\[c(k) = \frac{\log\left(2^k + 2^k\sqrt{2^{k-1} - 2^{(k-3)/2}}\right)}{\log (2^k)}\]
if $k$ is odd. In particular, when $k=6$,
$M_n = M^{\otimes n}$ has a depth-2 linear circuit of size $O(N^{c(6)}) = O(N^{1.443})$.
\end{theorem}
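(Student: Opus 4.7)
The plan is to combine the rank-1 rigidity bound for $H_k$ established in Theorem \ref{thm:rigWH} with the rigidity-based circuit construction of Corollary \ref{cor:rigLC}, using $H_k$ (rather than $H_1$) as the base matrix. The key observation is that $H_k$ is symmetric and $H_n = H_k^{\otimes (n/k)}$ whenever $k \mid n$, so Corollary \ref{cor:rigLC} can be applied directly with $q = 2^k$, $r = 1$, and $n/k$ in place of $n$. The remark immediately following Corollary \ref{cor:rigLC} handles the case when $k$ does not divide $n$ by padding up to the next multiple of $k$ and selecting an appropriate submatrix, which costs only a constant factor in size. Since $k$ is a fixed parameter one chooses to optimize the exponent, this divisibility issue is harmless.

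Substituting the bounds from Theorem \ref{thm:rigWH}, namely $\rig_{H_k}(1) \le 2^{2k-1} - 2^{1.5k-1}$ for even $k$ and $\rig_{H_k}(1) \le 2^{2k-1} - 2^{1.5(k-1)}$ for odd $k$, into the size expression $O((q r + \sqrt{q \cdot \rig_{H_k}(r)})^{n/k})$ with $q = 2^k$, $r = 1$, a direct algebraic simplification gives
\[\sqrt{2^k \cdot (2^{2k-1} - 2^{1.5k-1})} \;=\; \sqrt{2^{2k}(2^{k-1} - 2^{(k-2)/2})} \;=\; 2^k\sqrt{2^{k-1} - 2^{(k-2)/2}}\]
in the even case, and analogously $2^k\sqrt{2^{k-1} - 2^{(k-3)/2}}$ in the odd case. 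Rewriting $(\cdots)^{n/k}$ as $N^{c(k)}$ via $N = 2^n$ and $c(k) = \log_2(\cdots)/\log_2(2^k)$ recovers precisely the claimed formula.

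The hypothesis $\rig_{H_k}(1) \ge 2^k$ of Corollary \ref{cor:rigLC} needs a brief sanity check: it reduces to $2^{k-1} - 2^{(k-2)/2} \ge 1$ (respectively $2^{k-1} - 2^{(k-3)/2} \ge 1$), which is trivial for $k \ge 2$. The concrete numerical bound at $k=6$ follows by evaluating $\log_2(64 + 64\sqrt{28})/6$ and verifying it is less than $1.443$. I do not foresee any real technical obstacle in this deduction; all of the substantive work — proving the rank-1 rigidity upper bound for $H_k$ and converting a one-sided rigidity-based decomposition into a small depth-2 circuit for Kronecker powers — has already been carried out in Theorem \ref{thm:rigWH} and Corollary \ref{cor:rigLC} respectively. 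This theorem is, in essence, the result of optimizing the choice of base power $k$ in that pipeline, and $k = 6$ turns out to be the minimizer, matching the analogous choice made in the preceding theorem for generic $2 \times 2$ matrices.
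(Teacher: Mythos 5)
Your proposal matches the paper's (implicit) route exactly: apply Corollary~\ref{cor:rigLC} with base matrix $H_k$, $q = 2^k$, $r = 1$, and the rank-$1$ rigidity upper bound from Theorem~\ref{thm:rigWH}, then simplify $\sqrt{q\cdot \rig_{H_k}(1)}$ and rewrite the exponent with $\log_2(2^k) = k$. The paper gives no written proof for this theorem, but your deduction is the intended one and the algebra checks out.

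One small logical slip worth flagging: your ``sanity check'' that $\rig_{H_k}(1) \ge 2^k$ cannot actually be derived from the \emph{upper} bound of Theorem~\ref{thm:rigWH} (you substituted that upper bound in as if it were the value, reducing to $2^{k-1} - 2^{(k-2)/2} \ge 1$, but an upper bound being large says nothing about the rigidity itself being large). The hypothesis is genuinely needed so that the term $I \times S$ in the rigidity-based decomposition is one-sided, i.e.\ $\nnz(I) = q \le \nnz(S)$, and $\nnz(S) \ge \rig_{H_k}(1)$ by definition of rigidity. To justify it, use the known rigidity \emph{lower} bound stated in the preliminaries, $\rig_{H_k}(1) \ge 2^{2k}/4 = 2^{2k-2} \ge 2^k$ for $k \ge 2$; alternatively, observe that if the rigidity were below $q$ the matrix would be so close to rank~$1$ that the resulting circuit is only smaller. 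With that patch the argument is complete.
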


\section{General Kronecker power matrices}

\begin{theorem}
\label{thm:rigGKron}
Suppose $q$ is a positive integer, and $M \in \F^{q \times q}$ is a $q \times q$ matrix on field $\F$.
For any real value $0 < s < 1$ and positive integer $n$,
let $M_n =  M^{\otimes n}$, $N = q^n$. Then we have
\[\rig_{M_n}(N^s) \le O(N^{2-h(s, q)}),\]
where $h(s, q) = \Omega(\frac {s^2}{\log^2(2/s)} \cdot \frac 1{q^2 \log q}).$
\end{theorem}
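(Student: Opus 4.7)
The plan is to apply the polynomial method (Lemma~\ref{lem：polymethod}) in the spirit of \cite{AW17}'s Theorem~A.1, generalizing the Hadamard argument to an arbitrary $q \times q$ base matrix. First, I would identify each $(x, y) \in [q]^n \times [q]^n$ with its one-hot encoding $(\hat{x}, \hat{y}) \in \{0,1\}^{nq} \times \{0,1\}^{nq}$ where $\hat{x}_{i,a} = \mathbf{1}[x_i = a]$ (and analogously for $\hat{y}$). On valid encodings,
\[M_n[x, y] \;=\; \prod_{(a,b) \in [q]^2} M[a,b]^{\nu_{a,b}(x, y)}, \qquad \nu_{a,b}(x,y) := \sum_{i=1}^n \hat{x}_{i,a}\,\hat{y}_{i,b},\]
and each count $\nu_{a,b}$ is a sum of $n$ i.i.d.\ Bernoulli$(1/q^2)$ indicators on uniform $(x, y)$, so it concentrates around its mean $n/q^2$.

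Next, I would construct an approximating polynomial. For each pair $(a,b) \in [q]^2$, use Theorem~\ref{thm:interpolation} to obtain a univariate polynomial $p_{a,b}$ of degree $2T$ with $p_{a,b}(k) = M[a,b]^k$ for every integer $k$ in the window $[n/q^2 - T,\, n/q^2 + T]$, where $T$ is a parameter to be chosen. Set
\[P(\hat{x},\hat{y}) \;:=\; \prod_{(a,b) \in [q]^2} p_{a,b}\!\bigl(\nu_{a,b}(\hat{x},\hat{y})\bigr).\]
By construction $P(\hat{x},\hat{y}) = M_n[x,y]$ whenever every $\nu_{a,b}(x,y)$ lies in its interpolation window, so by the Chernoff bound (Theorem~\ref{thm:Chernoff}) and a union bound over $q^2$ pairs, $P$ disagrees with $M_n$ on at most $O\!\bigl(q^2 \cdot N^2 \cdot \exp(-\Omega(T^2 q^2/n))\bigr)$ inputs of $[q]^n \times [q]^n$.

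For the rank bound, I would expand $P$ as a multilinear polynomial in the one-hot variables. Each $\nu_{a,b}^d$ multilinearizes to a sum over subsets $S \subseteq [n]$ with $|S| \le d$ of $\prod_{i \in S} \hat{x}_{i,a}\hat{y}_{i,b}$, so a monomial in the full product is specified by a tuple of subsets $(S_{a,b})_{(a,b) \in [q]^2}$ with $\sum_{a,b}|S_{a,b}| \le 2Tq^2$. Those monomials that do not vanish identically on the (sparse) set of valid one-hot encodings correspond precisely to tuples with pairwise-disjoint $S_{a,b}$'s --- equivalently, to partial functions $\phi$ from $[n]$ to $[q]^2$ whose domain has size at most $2Tq^2$. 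Each such monomial factors cleanly into an $\hat{x}$-part and a $\hat{y}$-part, and Lemma~\ref{lem：polymethod} (extended to the one-hot setting) bounds the rank of the resulting approximation by
\[\sum_{k \le 2Tq^2}\binom{n}{k}(q^2)^k \;\le\; \exp\!\bigl(O(Tq^2 \log(n/T))\bigr).\]

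Finally, I would optimize. Forcing the rank below $N^s = q^{ns}$ amounts to $Tq^2 \log(n/T) \asymp ns\log q$, so self-consistently $T \asymp ns\log q/\bigl(q^2 \log(n/T)\bigr)$ with $\log(n/T) = \Theta(\log q + \log(2/s))$; plugging back into the Chernoff exponent yields
\[\frac{T^2 q^2}{n} \;\gtrsim\; \frac{n\,s^2\log q}{q^2\log^2(n/T)} \;=\; \Omega\!\left(\frac{n\,s^2}{q^2\log q\cdot \log^2(2/s)}\right),\]
which gives the claimed $h(s,q)$. The hard part will be the combinatorial step: one must carefully isolate the monomials that survive on the sparse set of valid one-hot encodings so that the rank is controlled by a partial-function count rather than the grossly inflated count of all Boolean monomials, and then balance $T$ simultaneously against the binomial factor and the self-referential logarithm $\log(n/T)$ to cleanly extract the $\log^2(2/s)$ factor in the final bound.
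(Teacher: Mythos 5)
Your proposal is correct and follows essentially the same approach as the paper's proof: encode indices in $[q]^n$ via indicator/one-hot variables, interpolate each $M[a,b]^{\nu_{a,b}}$ on a window of width $\Theta(\delta n)$ around $n/q^2$, bound the bad pairs by a binomial tail estimate, bound the monomial count, and invoke Lemma~\ref{lem：polymethod}. The paper uses entropy-based tail bounds instead of Chernoff and a cruder $\bigl(\sum_{i\le h-l}\binom{n}{i}\bigr)^{q^2}$ monomial count instead of your disjointness refinement, but both routes give the same asymptotics for $h(s,q)$ --- your disjointness observation only saves a constant factor in the exponent.
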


Let $M \in \F^{q \times q}$ be a matrix indexed by $x, y \in [q]$.
Define function $e_s: [q] \rightarrow \{0, 1\}$ as 
$e_s(a) = \bo[a = s]$ for $a \in [q]$.
Then for $a, b \in [q]$ we can rewrite the entries of $M$ as 
\[M[a, b] = \prod_{s, t\in [q]} M[s, t]^{e_s(a) \cdot e_t(b)}\]
Further by the definition of Kronecker product, for $x, y \in [q]^n$,
\[M_n[x, y] = \prod_{i=1}^n M[x[i], y[i]] 
= \prod_{s, t \in [q]} M[s, t]^{\sum_{i=1}^n e_s(x[i]) \cdot e_t(y[i])}
=: \prod_{s, t\in [q]} M[s, t]^{e_{s, t}(x, y)}.\]

Now we use an interpolation polynomial as an approximation to 
function $M[s, t]^z$.
More specifically, by Theorem \ref{thm:interpolation},
for any $s, t \in [q]$, if we define 
$f_{s, t}(z) = M[s, t]^{\sum_{i=1}^n z[i]}$
for $z = (z[1], \cdots, z[n]) \in \{0, 1\}^n$, then there exists a polynomial $\hat f_{s, t}(z)$ of degree at most $(h-l)$ with coefficients in $\F$, such that $\hat f_{s, t}(z) = f_{s, t}(z)$ for any $z \in \{0, 1\}^n$ with $l \le |z| \le h$.
Here we pick integers $l = (\frac 1{q^2} - \delta)\cdot n, h = (\frac 1{q^2} + \delta)\cdot n$ uniformly for all polynomials $\hat f_{s, t}$, where $\delta = \frac 1{cq^2}$ with parameter $c>4$ independent of $q$.

At this point, one can see that if we let $\hat g_{s, t}(x, y) := \hat f_{s, t} (e_s(x[1])\cdot e_t(y[1]), \cdots, e_s(x[n])\cdot e_t(y[n]))$ and 
$\hat g(x, y) := \prod_{s, t\in [q]} \hat g_{s, t}(x, y)$, then $\hat g$
correctly computes a fair number of entries in $M$.
This observation motivates us to make use of the polynomial method shown in Lemma \ref{lem：polymethod}, which further requires computing the following two quantities.

First we count the number of \emph{bad pairs}, i.e. the pairs 
failing to satisfy
$M_n[x, y] = \hat g(x, y)$.
From the definition of $\hat f_{s, t}$, the condition holds if
$l \le e_{s, t}(x, y)\le h$ for all $s, t \in [q]$. 
Hence by union bound,
\begin{align*}
& ~~~~ \#\left[
\exists s, t \in [q] \text{ s.t. } e_{s, t}(x, y) < l \text{ or }
e_{s, t}(x, y) > h
\right] \\
&\le \sum_{s, t \in [q]} 
\left(
\#\left[\sum_{i=1}^n \bo[(x[i], y[i]) = (s, t)] < l \right] + 
\#\left[\sum_{i=1}^n \bo[(x[i], y[i]) = (s, t)] > h \right]
\right) \\
&=q^2 \left(\sum_{k=0}^{l-1} \binom{n}{k} (q^2 - 1)^{n-k}
+ \sum_{k=h+1}^{n} \binom{n}{k} (q^2 - 1)^{n-k} \right) \\
&\le q^2 \cdot O(n) \cdot 
(2^{n \cdot H(l/n) + (n-l) \cdot \log(q^2-1)} +  
2^{n \cdot H(h/n) + (n-h) \cdot \log(q^2-1)}),
\end{align*}
where the exponents can be bounded using Lemma \ref{lem:entropy} as
\begin{align*}
& ~~~~ n \cdot H(\frac ln) + (n-l) \cdot \log(q^2-1) 
=n \cdot [H(\frac 1{q^2} - \delta)
+ (1 - \frac1{q^2} + \delta) \log(q^2-1)] \\
&\le n \cdot [
\log(q^2) - \frac{q^2-1}{q^2} \log(q^2-1)
- \delta \log(q^2 - 1) + H''(\frac1{q^2})\frac{\delta^2}2
+ (1 - \frac1{q^2} + \delta) \log(q^2-1)] \\
&=n \cdot [\log(q^2) + H''(\frac1{q^2})\frac{\delta^2}2], \\
& ~~~~ n \cdot H(\frac hn) + (n-h) \cdot \log(q^2-1)
=n \cdot [H(\frac 1{q^2} + \delta)
+ (1 - \frac1{q^2} - \delta)  \log(q^2-1)] \\
&\le n \cdot [
\log(q^2) - \frac{q^2-1}{q^2} \log(q^2-1)
+ \delta  \log(q^2 - 1) + H''(\frac1{q^2} + \delta)\frac{\delta^2}2
+ (1 - \frac1{q^2} - \delta)  \log(q^2-1)] \\
&=n \cdot [\log(q^2) + H''(\frac{1}{q^2} + \delta)\frac{\delta^2}2].
\end{align*}
Since $\delta = \frac 1{cq^2} < \frac{1}{q^2}$, the two quantities above can be further bounded by 
$n \cdot [\log(q^2) + H''(\frac 2{q^2})\frac{\delta^2}2]$, in which
\[H''(\frac 2{q^2}) \frac{\delta^2}2 = -\frac{q^4\log e}{4(q^2-2)}
\cdot \frac{1}{c^2q^4}
= -\frac{\log e}{4c^2 (q^2-2)}
< -\frac{1}{4c^2 q^2}.\]
Therefore, if we let 
$t(c, q) = \frac{1}{4c^2q^2\log q}$,
then the number of bad pairs is at most
\[q^2 \cdot O(n) \cdot 2 \cdot 2^{n \cdot [\log(q^2) + H''(2/q^2) \cdot \delta^2/2]}
< O(2^{n \cdot [\log(q^2) - t(c, q) \log q]})
= O(q^{(2 - t(c, q))n}).\]

On the other hand, we examine the number of monomials in $\hat g(x, y)$.
First we know $\hat g_{s, t}(x, y)$
is essentially an $n$-variate polynomial in $e_s(x[i]) \cdot e_t(y[i]), i\in [n]$ 
of degree at most $(h-l)$.
Since $e_s(x[i]) \cdot e_t(y[i]) \in \{0, 1\}$, we can further assume 
$\hat g_{s, t}(x, y)$ is multilinear
and thus contains at most
\[\sum_{i=0}^{h-l} \binom{n}{i} \le O(n) \cdot 2^{n \cdot H((h-l)/n)}
= O(n) \cdot q^{n\cdot H(2\delta) / \log q}\]
many monomials.
Consequently $\hat g(x, y) = \prod_{s, t\in [q]} \hat g_{s, t}(x, y) $ contains at most 
$O(n^{q^2}) \cdot q^{n \cdot H(2\delta) \cdot q^2 / \log q}$
many monomials.
Here
\begin{align*}
H(2\delta) \cdot \frac{q^2}{\log q} & <
2 \cdot 2\delta\log_2(\frac 1{2\delta}) \cdot \frac{q^2}{\log q}
= \frac{4}{cq^2} \log(\frac{cq^2}{2}) \cdot \frac{q^2}{\log q} \\
&= \frac 4c \cdot \frac{\log (\frac c2) + 2\log q}{\log q} 
\le  \frac 4c \cdot (2 + \log c)
\le \frac {8\log c}c.
\end{align*}
Hence the number of monomials in the expansion of $\hat g$
is bounded by $O(n^{q^2}) \cdot q^{n \cdot H(2\delta) \cdot q^2 / \log q}
= O(q^{n \cdot 8 \log c / c})$.

With the two quantities bounded,
we then can apply Lemma \ref{lem：polymethod} and derive a rigidity upper bound of $M_n$. 
For any $c > 4$,
\[\rig_{M_n} (O(q^{n \cdot 8 \log c/c})) = 
O(q^{n \cdot (2 - 1/4c^2q^2\log q)}).\]
We note that Theorem \ref{thm:rigGKron} immediately follows from this result. Indeed,
if changing variable $c$ as $c = \frac {128}{s} \log \frac 2{s}$
for any $0 < s < 1$, we have
\[\frac {8\log c}c
= \frac{s/16}{\log (2/s)} (\log(128/s) + \log\log(2/s)) 
\le \frac{s}8 \cdot \frac{\log(128/s)}{\log (2/s)}
= s \cdot \frac{\log(2^{7/8}/s^{1/8})}{\log (2/s)} < s.\]
Then rewriting the rigidity bound in terms of $s$ yields
\[\rig_{M_n}(q^{ns}) = O(q^{n \cdot (2-h(s, q))}),\]
where $h(s, q) = \frac{1}{4q^2\log q} \cdot \frac{(s/128)^2}{\log^2(2/s)}$.

\subsection{Linear circuits for General Kronecker power matrices}
\begin{theorem}
Suppose $q\ge 2$ is a positive integer, and $M \in \F^{q \times q}$ is a $q \times q$ matrix. 
For any positive integer $n$, letting $N  = q^n$, then $M_n = M^{\otimes n}$ has a depth-2 linear circuit of size 
$O(N^{1.5 - a_q})$, where $a_q = \Omega(\frac 1{q^2\log q})$ is a positive constant depending only on $q$.
\end{theorem}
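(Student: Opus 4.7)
The plan is to combine the rigidity upper bound of Theorem \ref{thm:rigGKron} with the rigidity-based circuit construction of Corollary \ref{cor:rigLC}, applied not to $M$ itself but to a fixed Kronecker power $M^{\otimes n_0}$ treated as the base matrix. The point of using $M^{\otimes n_0}$ rather than $M$ is that Theorem \ref{thm:rigGKron} only gives a nontrivial rigidity bound in the Kronecker-powered regime, so one has to absorb a block of $n_0$ factors into the base.

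First I would fix a constant $s \in (0, 1/2)$ (taking $s = 0.4$ is convenient) and a large constant $n_0$, and set $N_0 := q^{n_0}$. Applying Theorem \ref{thm:rigGKron} to $M^{\otimes n_0}$ (that is, taking $n \leftarrow n_0$ in its statement) yields
\[\rig_{M^{\otimes n_0}}(N_0^{s}) \le O(N_0^{\,2 - h(s,q)}), \qquad h(s,q) = \Omega\!\left(\tfrac{1}{q^2 \log q}\right).\]
Next, I would feed this into Corollary \ref{cor:rigLC} with $M^{\otimes n_0}$ (of size $N_0 \times N_0$) as the base matrix, which is permitted by the divisibility remark that follows the corollary, and by Remark \ref{rem:asym} in case $M$ is not symmetric. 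Substituting $q \leftarrow N_0$ and $r \leftarrow N_0^{s}$ gives, for every positive integer $n$ divisible by $n_0$, a depth-2 circuit for $M^{\otimes n} = (M^{\otimes n_0})^{\otimes n / n_0}$ of size
\[O\!\left(\left(N_0 \cdot N_0^{s} + \sqrt{N_0 \cdot O(N_0^{\,2 - h(s,q)})}\right)^{n/n_0}\right) = O\!\left(\left(N_0^{\,1+s} + O(N_0^{\,1.5 - h(s,q)/2})\right)^{n/n_0}\right).\]
With $s = 0.4$, the first summand has exponent $1.4$; since $h(s,q) = O(1/q^2) \le 0.2$ for every $q \ge 2$, the second summand dominates, and the overall circuit size is $O(N_0^{(1.5 - h(s,q)/2)\cdot n/n_0}) = O(N^{1.5 - a_q})$ with $a_q := h(0.4, q)/2 = \Omega(1/(q^2 \log q))$. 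For $n$ not divisible by $n_0$, I would embed $M^{\otimes n}$ into $M^{\otimes n'}$ for the next multiple $n'$ of $n_0$ and extract the relevant submatrix, paying only a constant factor as in the divisibility remark.

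The main subtlety is balancing the two summands: $s$ must be strictly less than $1/2$ so that $N_0^{1+s}$ sits below the target $N_0^{1.5 - a_q}$, while $s$ must also be large enough that the lower bound $h(s,q) = \Omega(s^2 / (q^2 \log q \cdot \log^2(2/s)))$ remains $\Omega(1/(q^2 \log q))$; any fixed $s \in (0, 1/2)$ such as $0.4$ works. A minor edge case is $\rig_{M^{\otimes n_0}}(N_0^{s}) < N_0$, which would nominally violate the $\rig_M(r) \ge q$ hypothesis of Corollary \ref{cor:rigLC}; but in that regime $M^{\otimes n_0}$ is already within fewer than $N_0$ entry changes of a rank-$N_0^{s}$ matrix, so a direct rank-plus-sparse-correction factorization already gives an even stronger circuit, and the theorem holds trivially.
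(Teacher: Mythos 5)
Your argument is correct and follows the same route as the paper's proof: both fix $s = 0.4$, use a large Kronecker power $M^{\otimes n_0}$ as the base matrix, feed the rigidity bound of Theorem~\ref{thm:rigGKron} into Corollary~\ref{cor:rigLC} (via Remark~\ref{rem:asym} for asymmetric $M$), and note that for this $s$ the term $N_0^{1.5 - h(s,q)/2}$ dominates $N_0^{1+s}$ for every $q \ge 2$. Your closing edge-case discussion is a genuine tightening: the paper applies Corollary~\ref{cor:rigLC} without verifying its hypothesis $\rig_{M^{\otimes n_0}}(N_0^s) \ge N_0$, and your observation (that if it fails, one can transpose $I \times S$ to $S \times I$, which is still one-sided, or note the resulting size $O(N^{1+s}) = O(N^{1.4})$ is already stronger than needed) correctly closes that gap.
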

\begin{proof}
We apply Corollary \ref{cor:rigLC} / Remark \ref{rem:asymLC} with the rigidity upper bound given in Theorem \ref{thm:rigGKron}. The base matrix is picked as 
$M^{\otimes k}$ for some positive integer $k$.
Then we know $M_n$ has a depth-2 linear circuit of size at most
$O((q^k \cdot q^{sk} + 
\sqrt{q^k \cdot q^{k \cdot (2 - h(s, q))}})^{n/k})
= O((q^{k \cdot (1 + s)} + 
q^{k \cdot (1.5 - h(s, q)/2)})^{n/k})$.
When $k$ is sufficiently large, this size can be further bounded by
$O((q^{k \cdot b_q})^{n/k}) = O(N^{b_q})$,
where $b_q = \max\{1+s, 1.5 - h(s, q)/2\}$.
By picking $s = 0.4$, we have $1 + s < 1.5 - h(s,q)/2$ for any $q \ge 2$.
Therefore, $b_q = 1.5 - h(0.4, q)/2 = 1.5 - \Omega(\frac 1{q^2\log q})$.
\end{proof}

\section{Disjointness matrices}
The smallest linear circuit for disjointness matrix $R_n$ previous to our work was given by Jukna and Sergeev in \cite{JS13}. Their method in a broad sense relies on combining 1-monochromatic blocks together along with the enlargement of $R_n$. To show more clearly the connection between their approach and ours, we first sketch their algorithm for generating the linear circuit. 

\begin{lemma}[\cite{JS13}, Lemma 4.2]
\label{lem:JS}
For positive integer $n$, let $N = 2^n$. Then disjointness matrix $R_n$ has a depth-2 linear circuit of size $O((1 + \sqrt{2})^n) = O(N^{1.272})$.
\end{lemma}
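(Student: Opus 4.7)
The plan is to derive the bound as a direct corollary of Theorem \ref{thm:onesided} by exhibiting an explicit one-sided decomposition of $R_1$ that exploits its sparsity. The natural choice is the $2$-partition of the three $1$-entries of $R_1$ into the top row (a $1 \times 2$ rectangle) and the entry at position $(2,1)$ (a $1 \times 1$ rectangle), namely
\[R_1 = \begin{pmatrix}1 \\ 0\end{pmatrix} \times \begin{pmatrix}1 & 1\end{pmatrix} + \begin{pmatrix}0 \\ 1\end{pmatrix} \times \begin{pmatrix}1 & 0\end{pmatrix} = U_1 \times V_1 + U_2 \times V_2.\]
A direct check confirms this equality and shows $\nnz(U_1) = 1$, $\nnz(V_1) = 2$, $\nnz(U_2) = \nnz(V_2) = 1$, so in particular $\nnz(U_j) \le \nnz(V_j)$ for each $j$; hence the decomposition is one-sided in the sense of Definition \ref{def:imba}.

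Next, I would compute the relevant parameter:
\[\sum_{j=1}^{2} \sqrt{\nnz(U_j)\cdot \nnz(V_j)} = \sqrt{1 \cdot 2} + \sqrt{1 \cdot 1} = 1 + \sqrt{2},\]
so $\exp(\alpha_1) = 1 + \sqrt{2}$. Applying Theorem \ref{thm:onesided} with $M = R_1$ (which is symmetric) then immediately yields a depth-$2$ linear circuit for $R_n = R_1^{\otimes n}$ of size $O\bigl((1 + \sqrt{2})^n\bigr)$. Converting the base, $(1 + \sqrt{2})^n = N^{\log_2(1 + \sqrt{2})} = O(N^{1.272})$, which matches the claimed bound.

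There is essentially no obstacle in this route once Theorem \ref{thm:onesided} is in hand; the only sanity check needed is that the decomposition really is a valid sum-of-$2$-products equal to $R_1$, which is a direct computation, together with the one-sidedness condition. The broader conceptual point (and the reason we can afford to appeal only to Theorem \ref{thm:onesided} without invoking the imbalanced-decomposition machinery) is that rigidity-based decompositions give the much weaker bound $O((2 + \sqrt{2})^n)$ via Corollary \ref{cor:rigLC}; it is the sparsity of $R_1$, captured in the partition-based decomposition above, that produces a base strictly smaller than $2$. This is consistent with the remark in the overview that Jukna and Sergeev's SUM-circuit construction implicitly corresponds to a $2$-partition of $R_1$, and it foreshadows the improvement obtained later in the paper by using a more refined $8$-partition of $R_3$.
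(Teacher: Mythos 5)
Your proof is correct, but it takes a slightly different route from the paper's own proof of the lemma. The paper's proof (marked as a proof sketch) reproduces Jukna and Sergeev's original combinatorial argument: it inductively partitions the $1$-entries of $R_n$ into monochromatic $s\times s$ squares and $2s\times s$ rectangles and solves the resulting two-variable recurrence on the total side-lengths, yielding $O((1+\sqrt{2})^n)$. You instead appeal directly to Theorem~\ref{thm:onesided} with a one-sided two-term decomposition of $R_1$, obtaining the same base $1+\sqrt{2} = \sqrt{1\cdot 2}+\sqrt{1\cdot 1}$ from the sum $\sum_j \sqrt{\nnz(U_j)\cdot\nnz(V_j)}$. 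The paper itself makes exactly this observation immediately after its proof sketch --- its displayed decomposition $R_1 = U_1\times V_1 + U_2\times V_2$ is precisely yours padded out to $2\times 2$ matrices, and each step of Jukna--Sergeev's gluing corresponds to a $\CU$ update in the Theorem~\ref{thm:onesided} framework. So your argument is a clean, self-contained recasting within the paper's machinery of what the paper establishes by a standalone combinatorial recurrence; what it buys is uniformity with the rest of the paper, at the small cost of invoking the heavier Theorem~\ref{thm:onesided} where the original argument is elementary.

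Two small slips, neither affecting correctness. First, the one-sided property is not given in Definition~\ref{def:imba} (that labels the \emph{imbalanced} condition used in Theorem~\ref{thm:main}); one-sidedness is defined separately just before Theorem~\ref{thm:onesided}. Second, your aside that Corollary~\ref{cor:rigLC} would give only $O((2+\sqrt{2})^n)$ for $R_n$ is heuristic: that corollary requires $\rig_M(r)\ge q$, while $\rig_{R_1}(1)=1<2=q$, so it does not literally apply to $R_1$ at $r=1$. The broader conceptual point you are making --- that the sparsity-driven partition beats the rigidity-based route here --- is still right.
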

\begin{proof}[Proof sketch]
We partition the 1-entries in $R_n$ into (not necessarily uniformly sized) squares (i.e. $s \times s$ blocks for some $s$) and rectangles (i.e. $2s \times s$ blocks for some $s$) inductively. Let $s_n$ be the sum of side-lengths of the squares in the partition of $R_n$, and $r_n$ be the sum of shorter side-lengths of the rectangles.
We note that each square (resp. rectangle) $B$, if regarded as a rank-1 matrix,
can be rewritten as $B = U \times V$ with $\nnz(U) = s$ and $\nnz(V) = s$
(resp. $\nnz(U) = 2s$ and $\nnz(V) = s$).
Therefore, after being properly decomposed then combined, a partition of $R_n$ can be converted into a linear circuit of size at most
$2 \cdot (s_n + r_n)$.

Now we show how to obtain a partition of 1-entries in $R_n$ from the partition of $R_{n-1}$. From recursive definition,
\[
R_1 := \begin{pmatrix}
1 & 1 \\ 1 & 0
\end{pmatrix}, \quad
R_n := \begin{pmatrix}
R_{n-1} & R_{n-1} \\ R_{n-1} & 0
\end{pmatrix}.\]
One can see that if $B$ is a block in the partition of $R_{n-1}$, 
then $R_{n}$ contains exactly three identical copies of $B$.
Furthermore, for an $s \times s$ square in $R_{n-1}$, the three copies can be glued to form a $2s \times s$ rectangle and a $s \times s$ square. 
For an $2s \times s$ rectangle in $R_{n-1}$, the copies can form a
$2s \times 2s$ square and a $2s \times s$ rectangle in $R_n$.
Hence we have the recurrence
\[\begin{pmatrix} s_n \\ r_n \end{pmatrix}
= \begin{pmatrix} 1 & 2 \\ 1 & 1 \end{pmatrix} \times
\begin{pmatrix} s_{n-1} \\ r_{n-1} \end{pmatrix}.\]
In conjunction with $s_1 = r_1 = 1$ corresponding to the column-wise partition
of $R_1$, this recurrence has solution $s_n, r_n \le O((1 + \sqrt 2)^n)$,
which completes the proof.
\end{proof}

On the other hand, we observe that the row-wise partition of $R_1$ 
corresponds to a one-sided decomposition.
And further if Theorem \ref{thm:onesided} is applied with this decomposition, it produces a depth-2 linear circuit for
$R_n = R_1^{\otimes n}$ of the same size 
$O((\sqrt{1 \times 2} + \sqrt{1 \times 1})^n)
= O((1 + \sqrt{2})^n)$, as is given in Lemma \ref{lem:JS}.

This is in fact not a coincidence. Let
\[R_1 = \begin{pmatrix} 1 & 0 \\ 0 & 0 \end{pmatrix} \times
\begin{pmatrix} 1 & 1 \\ 0 & 0 \end{pmatrix} 
+ \begin{pmatrix} 0 & 0 \\ 0 & 1 \end{pmatrix} \times
\begin{pmatrix} 0 & 0 \\ 1 & 0 \end{pmatrix}
=: U_1 \times V_1 + U_2 \times V_2 \]
be the decomposition mentioned above.
By rewriting the procedure in Lemma \ref{lem:JS},
one can see an $2s \times s$ rectangle $B = U \times V$ 
in $R_{n-1}$ become precisely $B \otimes R_1 = (U \otimes U_1) \times (V \otimes V_1) + (U \otimes U_2) \times (V \otimes V_2)$ in $R_n$.
The first term is the glued $2s \times 2s$ square, while the second constitutes the 
remaining copy of $2s \times s$ rectangle.
Similarly, an $s \times s$ square $B = U \times V$ becomes
$B \otimes R_1 = (U \otimes V_1^T) \times (V \otimes U_1^T) + (U \otimes V_2^T) \times (V \otimes U_2^T)$ with terms respectively corresponding to the new $2s \times s$ rectangle and $s \times s$ square.

In other words, each step of the procedure is equivalent to a $\CU$ update in our approach, and as a result, the entire procedure performs exactly the same updates as does ours.
Therefore Jukna and Sergeev's method could be seen as a special case of our approach with the decomposition of $R_1$ mentioned above.

\subsection{Upper bound}
Inspired by Lemma \ref{lem:JS}, we would expect other partitions of small disjointness matrix to give smaller linear circuits. Consider the following partition for $R_3$.

{\begin{center}
\includegraphics[width=6cm]{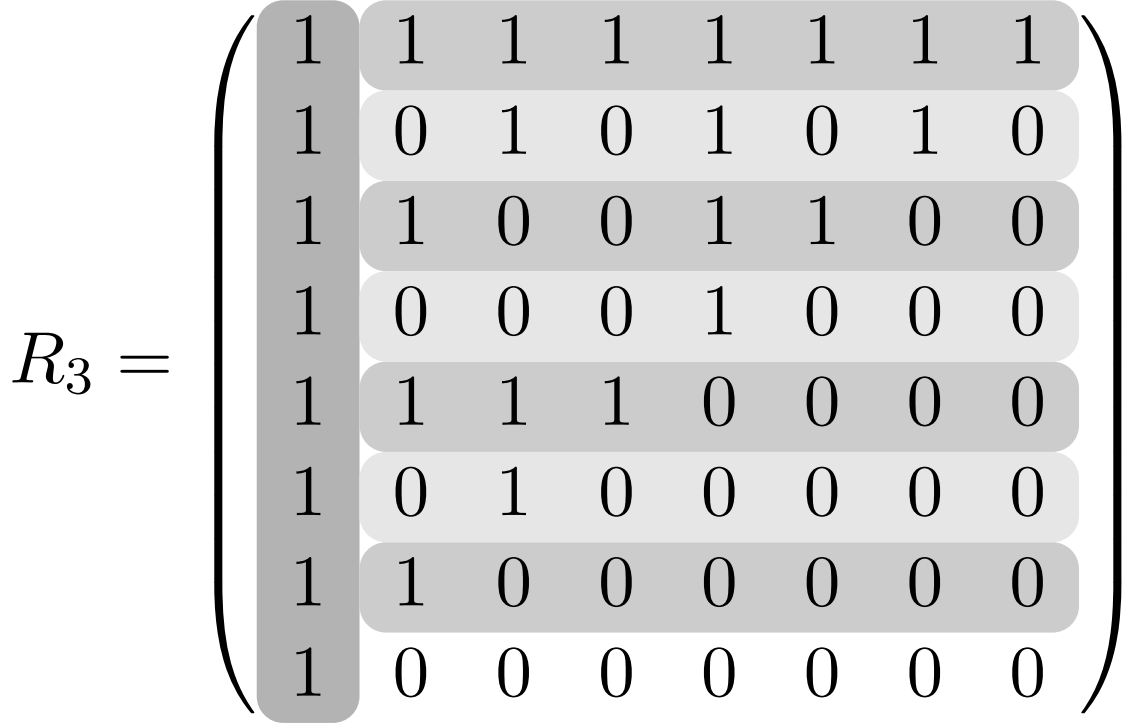}
\end{center}}

This partition corresponds to an 8-way decomposition $R_3 = \sum_{j = 1}^8 U_j \times V_j$. As this decomposition is not one-sided, we need to make use of the full power of Theorem \ref{thm:main}.
By calculation, $\alpha_1 \approx 2.616, \beta \approx 0.0827 > \alpha_2 - \alpha_1 \approx 0.0724$. 
With this condition satisfied, we can apply Theorem \ref{thm:main} on $R_3$ and thus conclude
that $R_3^{\otimes n}$ has a depth-2 linear circuit of size $\exp(\alpha_1)^{n+o(n)} = 
O(2^{1.258\cdot 3n})$. After rewriting in terms of $N$, we have the following theorem.

\begin{theorem}
For any positive integer $n$, let $N = 2^n$. Then
$R_n$ has a depth-2 linear circuit of size $O(N^{1.258})$.
\end{theorem}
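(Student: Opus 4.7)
The strategy is a direct application of Theorem~\ref{thm:main} to the base matrix $R_3 = R_1^{\otimes 3}$, using the $8$-block partition depicted above. First I would read off from the figure the explicit $8$-way decomposition
\[ R_3 = \sum_{j=1}^{8} U_j \times V_j, \]
where each $(U_j, V_j)$ corresponds to a combinatorial rectangle in the partition: $U_j \in \F^{8 \times 1}$ is the $0/1$ indicator of the rows of the $j$th rectangle and $V_j \in \F^{1 \times 8}$ is the $0/1$ indicator of its columns, so that $\nnz(U_j) \cdot \nnz(V_j)$ equals the area of the $j$th block. To verify that this is indeed a decomposition of $R_3$, I would check that the $8$ rectangles partition the $27$ one-entries of $R_3$ exactly and avoid all zero-entries; this is a routine combinatorial check reading off the picture.

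Next, I would extract the multiset of pairs $\{(\nnz(U_j), \nnz(V_j))\}_{j=1}^{8}$ from the partition and plug them into the parameters of the Setup in Section~\ref{sec:main} (with $q=8$, $\nnz(R_3)=27$, $\nnz(I_8)=8$):
\[ \alpha_1 = \ln\!\Bigl(\sum_{j=1}^{8} \sqrt{\nnz(U_j) \cdot \nnz(V_j)}\Bigr), \quad \alpha_2 = \tfrac{1}{2}\ln(27 \cdot 8), \]
\[ E = \frac{\sum_j \ln(\nnz(U_j)/\nnz(V_j)) \sqrt{\nnz(U_j)\nnz(V_j)}}{\sum_j \sqrt{\nnz(U_j)\nnz(V_j)}}, \quad G = \max\Bigl\{\ln\tfrac{27}{8},\, \max_j |\ln(\nnz(U_j)/\nnz(V_j))|\Bigr\}, \]
\[ \beta = \frac{\ln(27/8)}{6G}\cdot \min\!\Bigl\{1,\, \frac{-4E}{E+G}\Bigr\}. \]
After (if necessary) transposing the decomposition so that $E \le 0$, I would compute these numbers and confirm $\alpha_1 \approx 2.616$, $\alpha_2 \approx 2.688$, so $\alpha_2 - \alpha_1 \approx 0.0724$, and $\beta \approx 0.0827$. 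In particular $\beta > \alpha_2 - \alpha_1$, so the decomposition satisfies Definition~\ref{def:imba}.

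Since $R_3$ is symmetric, Theorem~\ref{thm:main} then applies and gives a depth-$2$ synchronous linear circuit for $(R_3)^{\otimes m} = R_{3m}$ of size $\exp(\alpha_1)^{m + o(m)}$. Using the remark that the base matrix may itself be a Kronecker power (pad $n$ up to the next multiple of $3$, losing only a constant factor), this yields a circuit for $R_n$ of size
\[ \exp(\alpha_1)^{n/3 + o(n)} = \bigl(2^{\alpha_1/(3 \ln 2)}\bigr)^{n + o(n)} = N^{\alpha_1/(3\ln 2) + o(1)}, \]
and one verifies $\alpha_1/(3\ln 2) < 1.258$, giving the claimed $O(N^{1.258})$ bound.

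The main obstacles are the bookkeeping ones: (i) correctly transcribing the $8$ blocks $(U_j,V_j)$ from the figure and verifying they actually partition the $1$-entries of $R_3$, and (ii) executing the numerical verification that $\beta > \alpha_2 - \alpha_1$ (which is quite tight -- $0.0827$ versus $0.0724$). No conceptual novelty beyond Theorem~\ref{thm:main} is required; the entire content is choosing an imbalanced rectangle partition of $R_3$ that is substantially better than any one-sided decomposition (and hence beyond the reach of the Jukna--Sergeev approach, but handled by our main theorem).
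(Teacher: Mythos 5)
Your proposal matches the paper's proof: both apply Theorem~\ref{thm:main} to the same $8$-way rectangle partition of $R_3$, verify the same numerical inequality $\beta \approx 0.0827 > \alpha_2 - \alpha_1 \approx 0.0724$, and conclude a size bound of $\exp(\alpha_1)^{n/3 + o(n)} = O(N^{1.258})$. Your bookkeeping (indicator vectors $U_j, V_j$ for the rectangles, padding $n$ to the next multiple of $3$) is exactly what the paper does implicitly.
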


\section*{Acknowledgements}

We would like to thank the anonymous reviewers for their helpful comments. This research was supported in part by a grant from the Simons Foundation (Grant Number 825870 JA).

\addcontentsline{toc}{section}{References}
\bibliographystyle{alpha}
\bibliography{ref}

\end{document}